\documentclass{article}
\usepackage[utf8]{inputenc}
\usepackage[margin=1in,vmarginratio=1:1]{geometry}
\usepackage{setspace}
\usepackage{natbib}
\setcitestyle{round}
\usepackage{enumitem}
\usepackage{float, soul, nicefrac}
\usepackage{multicol}
\usepackage{graphicx}
\usepackage{amsmath, dsfont}
\usepackage{amsthm, xcolor, mdframed}
\usepackage{amssymb}
\usepackage{bigints}
\usepackage{bm}
\usepackage[justification=centering]{caption}
\usepackage{subcaption, comment}
\usepackage{bbold}
\usepackage{hyperref}
\usepackage{mathtools}
\usepackage{ tikz}
\usetikzlibrary{automata,positioning}
\usetikzlibrary{decorations, decorations.text,backgrounds}
\usetikzlibrary{decorations.pathreplacing}
\usetikzlibrary{shapes, arrows, arrows.meta}
\usetikzlibrary{shapes.multipart}
\usetikzlibrary{calc}
\usetikzlibrary{fit}
\newtheorem{definition}{Definition}[section]

\newtheorem{assumption}{Assumption}

\newtheorem{theorem}{Theorem}
\newtheorem{lemma}{Lemma}
\newtheorem{corollary}{Corollary}

\title{Existence of Equilibrium Mechanisms in Generalized Principal-Agent Problems with Interacting Teams\thanks{I am grateful to Pierre Boyer, John Duggan, Christian Ewerhart, Navin Kartik, Wojciech Olszewski, Philip J. Reny, and Ron Siegel for helpful comments and discussions on previous drafts. I also thank participants in seminars at the Rochester Department of Political Science, the Purdue Mechanism Design Workshop, the SAET 2025 Annual Meeting, and the University of South Carolina Game Theory Conference for helpful comments.}}\author{Brian Roberson\thanks{\href{brobers@purdue.edu}{brobers@purdue.edu} $\bullet$  Purdue University, Department of Economics}}
\date{May 2026}

\begin{document}

\begin{titlepage}
    \maketitle
    \begin{abstract}
        \noindent
        We study incentive design when multiple principals simultaneously design mechanisms for their respective teams in environments with strategic spillovers. In this environment, each principal's set of incentive-compatible mechanisms—those that satisfy their own agents' incentive compatibility constraints—depends on the mechanisms offered by the other teams. Following a classic example by \citet{myerson1982optimal}, such games may lack equilibrium due to discontinuities in the correspondence of incentive-compatible mechanisms. We establish general conditions for equilibrium existence by introducing a novel approach that involves tracking both the outcome distributions along the truthful-obedient path and the sets of outcome distributions achievable through unilateral deviations, thereby providing a foundation for analyzing a wide range of multi-principal mechanism design with team production and agency problems.
    \end{abstract}
    
    \vspace{1em}
    \noindent
    \textbf{JEL Classification Codes:} C72, D82, D86, L13
    
    \vspace{0.5em}
    \noindent
    \textbf{Keywords:} Mechanism Design, Principal-Agent Problems, Equilibrium Existence, Generalized Games, Multiple Principals, Stochastic Production, Team Production 
    
    \thispagestyle{empty}
\end{titlepage}
\section{Introduction}
In many economic environments, incentive design occurs simultaneously across multiple organizations, with spillovers from each organization's activities affecting the others. We model such environments as interacting teams, where multiple principals simultaneously design mechanisms for their respective agents while accounting for strategic interactions across teams. Examples range from competitive settings—such as competition between firms in a market\footnote{See for example \citet{aspremont88}, \citet*{kamien92}, \citet{raith03}, and \citet*{anton2023} among others.} or between teams in innovation contests\footnote{See for example \citet{nitzan1991collective}, \citet{esteban2001collective}, \citet{nitzan2011prize}, \citet{nitzan2014intra}, \citet{balart2016strategic}, and \citet*{konishi2024allocation} among others.}—to cooperative settings such as production along a supply chain. In these settings, team production is often more efficient than individual production due to complementarities among team members' skills and efforts.\footnote{For more on this issue, see \citet{candougan2024implications}, who examines innovation contests involving a finite number of individuals and/or teams with complete information and exogenous prize-sharing rules, and identifies conditions under which teams outperform individuals.} However, the presence of adverse selection (private abilities) and moral hazard (unobservable actions) within teams complicates the assessment of individual contributions and the allocation of rewards.\footnote{In the context of a single designer designing incentives for a team facing both adverse selection and moral hazard, see \citet{mcafee1991optimal}. Also related are the literatures on teamwork (e.g., \citealp{admati1991joint}; \citealp{yildirim2006getting}; \citealp{bonatti2011collaborating}; \citealp{georgiadis2015projects}; \citealp*{bowen2019collective}; \citealp*{cetemen2020uncertainty}; \citealp{ozerturk2021credit}; and \citealp{yildirim2023fares}) and contracting with teams (e.g., \citealp{holmstrom1982moral}; \citealp{itoh1991incentives}; \citealp{che2001optimal}; \citealp{winter2004incentives}; and \citealp*{halac2021rank}).}

To illustrate the issues arising from interdependent agency concerns, consider the problem each principal faces in selecting a mechanism that specifies, among other things, how the team's winnings are allocated. These winnings depend on the team's own performance and may be affected by how other teams perform. As a result, the mechanism that each principal selects affects not only the choices of their own team's agents but also the sets of incentive-compatible mechanisms available to other principals. This endogeneity—where the set of incentive-compatible mechanisms available to each principal depends on other principals' choices—poses a fundamental challenge for modeling interacting teams with agency concerns. Indeed, \citet{myerson1982optimal} provides an example\footnote{In particular, see Section 4 of \citet{myerson1982optimal}, entitled ``Equilibria among several principals."} with two mechanism designers in which no equilibrium exists.\footnote{In this example, equilibrium fails to exist because the principals' sets of feasible incentive-compatible mechanisms are not lower hemicontinuous with respect to the other principal's choice of mechanism. For more on the role of lower hemicontinuity in generalized games, see \citet{tobias2022equilibrium}.} This problem of multi-principal interaction can be understood as a generalized game---a game in which each player's feasible strategy set is endogenously determined by the strategies of other players. In this paper, we develop a framework that provides general conditions for equilibrium existence in generalized games of this type.

Generalized games were introduced by \citet{debreu1952social}, who establishes conditions for the existence of a (pure-strategy) Nash equilibrium, also known as a social equilibrium.\footnote{For additional details, see \citet{debreu1982existence}, \citet[Chapter 19]{border1985fixed}, \citet{facchinei2010generalized}, and \citet{dasgupta2015debreu}. Related work includes \citet{tao2024generalized}, who demonstrate the existence of pure-strategy Bayesian Nash equilibrium in generalized games in which the feasible strategy correspondence is uniformly continuous.} In a generalized game, each player has a feasible strategy correspondence---a mapping that assigns to each profile of other players' strategies a set of feasible strategies for that player. When this correspondence is contained within a compact, convex subset of Euclidean space, the existence results in \citet{debreu1952social}, \citet{debreu1982existence}, and \citet{banks2004existence} rely critically on its lower hemicontinuity. This property ensures that small changes in other players' strategies do not suddenly render a previously feasible choice infeasible. As \citet{tobias2022equilibrium} demonstrates, lower hemicontinuity is essential for establishing the existence of Nash equilibria in generalized games and cannot be dispensed with.\footnote{\citet{tobias2022equilibrium} provides a detailed account of the relevant literature. See in particular the discussion in footnote 5.} This continuity requirement plays a central role in our analysis: it is precisely the property that fails to hold in the \citet{myerson1982optimal} example, leading to non-existence of equilibrium.

Our primary contribution is a novel approach to establishing the existence of a Bayesian-Nash Principals' Equilibrium (BNPE) in a generalized game with team production and agency concerns. Our methodological approach builds on two complementary strands in the literature on the existence of Bayesian-Nash equilibrium in Bayesian games.\footnote{Our work is also related to the literature on the existence of equilibria in discontinuous games initiated by \citet{reny1999existence}. The related literature is too large to comprehensively summarize here, but some notable contributions include \citet{carmona2009existence}, \citet{barelli2013note}, \citet*{mclennan2011games}, \citet{he2016existence}, \citet{bich2017existence}, \citet{carbonell2018existence}, \citet{reny2020nash}, \citet{olszewski2023equilibrium}, and \citet{prokopovych2023monotone}.} The first strand, initiated by \citet{milgrom1985distributional},\footnote{Note that the senior editors at \textit{Mathematics of Operations Research} recently included this paper, see \citet{Schein26}, as one of 50 papers that represent the 50-year history of the journal.} draws on the insight that strategies in Bayesian games can be fully characterized by the joint probability measures they induce over types and actions. Rather than working directly with strategy functions that map types to actions, this distributional approach treats each player's strategy as a probability measure on the joint type-action space. To determine when two distributional strategies are ``close" to each other—and thereby characterize the continuity properties of payoff functions and best-response correspondences—\citet{milgrom1985distributional} use the classical narrow topology.\footnote{Recall that a sequence of probability measures $\{\mu_n\}$ converges to $\mu$ in the narrow topology if and only if $\int f d\mu_n \to \int f d\mu$ for every bounded continuous function $f$. When the underlying space is Polish (complete separable metric), this coincides with the weak* topology on the space of probability measures.} This topology captures the idea that two probability measures are close if they assign similar probabilities to similar events, which ensures that a small change in a player's strategy induces only a small change in all players' expected payoffs. This notion of closeness is particularly well-suited for Bayesian games, where payoffs depend on the distribution of types and actions.

Building on that foundation, \citet*{kadan2017} adapt this distributional approach to mechanism design settings by associating each incentive-compatible mechanism with an ``on-path" joint probability measure over rewards, outputs, actions, and types—the probability measure generated when all agents truthfully report and obediently follow the intended mechanism.\footnote{Recent studies have made significant progress in addressing the challenges of moral hazard and adverse selection in principal-agent problems. Notably, \citet*{castro2024disentangling} introduce the concept of decoupling, a simple method for studying both moral hazard and adverse selection simultaneously, and provide tests for determining its validity. Other notable contributions include \citet*{chen2020simple}, \citet{ke2023existence}, and \citet{gottlieb2023market}, which collectively advance our understanding of optimal contracts and mechanisms in various settings.} Like \citet{milgrom1985distributional}, \citet*{kadan2017} work with the classical narrow topology on the space of probability measures. However, focusing on the ``on-path" outcome measure creates challenges when characterizing the continuity properties of the set of incentive-compatible mechanisms, since incentive compatibility fundamentally depends on the possibility of ``off-path" deviations. In the single-principal case, \citet*{kadan2017} address this issue by making creative use of results on Young measure convergence, as summarized in \citet{balder2021new} and rooted in the foundational work of \citet{komlos1967generalization},\footnote{See also \citet{balder1984general,balder1985extension,balder1998lectures}.} to demonstrate that the set of incentive-compatible mechanisms is well-behaved. Unfortunately, this elegant approach does not extend naturally to our setting with interacting teams and agency concerns.

The second strand, exemplified by \citet{Balder1988}, extends the existence results in \citet{milgrom1985distributional} by shifting focus from distributional strategies to behavior strategies. In this framework, a behavior strategy specifies for each type of each player a probability measure over that player's action space—that is, the stochastically chosen action conditional on type. As in the distributional approach, \citet{Balder1988} makes use of the classical narrow topology to examine the continuity properties of payoff functions and best-response correspondences on the space of behavior strategies. While \citet{Balder1988} studies strategic-form Bayesian games without communication or reporting, that approach can be extended to our mechanism design setting.

The key innovation in our approach is the combination of the behavior strategy framework of \citet{Balder1988} with a careful extension of the \citet*{kadan2017} approach to what it means for two mechanisms to be ``close" to each other. Our measure of closeness, or metric, imposes two simultaneous requirements for mechanisms to be close. First, as in \citet*{kadan2017}, we require that the outcome measures along the truthful-obedient path must be similar.\footnote{The standard approach in the prior literature measures closeness between mechanisms using the classical narrow topology. Under this topology, a sequence of probability measures converges if and only if expectations of all bounded continuous functions converge. Applied to mechanisms, this means that two mechanisms are considered close if they produce similar expected outcomes when all agents report their types truthfully and follow recommendations obediently.} Second, and crucially, we require that the sets of outcome measures achievable through any unilateral (behavior strategy) deviation—whether misreporting types and/or disobeying recommendations—must also be similar. This ensures that as a sequence of mechanisms converges, not only do the outcome measures along the truthful-obedient path converge, but so do agents' sets of achievable outcome measures under unilateral deviations.

For the truthful-obedient path, we use the narrow topology to measure convergence of the induced probability measures, just as in the prior literature. For the complete set of outcome measures that can arise from any unilateral (behavior strategy) deviation by an agent we use the Hausdorff metric, which measures the distance between two sets by asking how far each point in one set is from the nearest point in the other set. Two mechanisms are considered close under our robust narrow topology only if: (i) their on-path outcome measures are close in the narrow sense, and (ii) their sets of achievable outcome measures from all possible unilateral (behavior strategy) deviations are close in the Hausdorff sense. Incentive compatibility fundamentally depends on comparing what an agent gets from being truthful and obedient versus what they could achieve by deviating. By tracking both on-path outcome measures and unilateral deviation outcome measures simultaneously, our topology ensures that convergence of mechanisms implies convergence of strategic opportunities.

Our framework accommodates rich and flexible environments with multidimensional types, actions, outputs, and rewards, including various specifications of how team winnings map to feasible individual rewards. The analysis relies on five assumptions. First, we require the underlying spaces of types, actions, winnings, and rewards to be compact Polish spaces—that is, complete, separable, and metrizable. This ensures that these spaces are bounded (so sequences cannot escape to infinity) and have no gaps (so every convergent sequence has a limit point within the space). Second, we assume the joint distribution of types is mutually absolutely continuous with the product of its marginals, and that the associated Radon--Nikodym derivative is continuous on the type space. This ensures that regular conditional densities exist and vary continuously with the conditioning values for almost every type profile. Third, we require the teams' stochastic output technologies to provide a sufficiently smooth and predictable landscape for principals to design incentives. In particular, we assume that small changes in a team's action profile lead to small, predictable changes in the probability distribution over the team's outputs. Fourth, we require the correspondence from team winnings to feasible individual rewards to be well-behaved: it must be continuous and always map to a non-empty compact set of feasible rewards. This assumption is satisfied in standard economic environments where budget constraints or technological constraints smoothly determine how team winnings can be divided among team members.\footnote{For example, this includes perfectly divisible monetary prizes (where the sum of individual rewards equals team winnings), pure public goods (where each team member receives the same reward equal to team winnings), and any combination of private and public rewards.} Fifth, we assume that all players' utility functions are bounded and continuous.

Our theory of incentive design for interacting teams contributes to two strands of the literature on contests. We generalize the literature on contests with endogenous prize sharing, which has typically focused on restricted sets of prize-sharing rules, to the case of the generalized principal-agent problem. Additionally, we extend the literature on contests featuring stochastic production as a function of effort to the case of a general stochastic mapping from the profiles of types and actions to the profile of team winnings. Beginning with the branch of the team contest literature with stochastic production -- in the form of the Tullock contest success function -- and endogenous prize sharing,\footnote{Also related is the strand of literature on team contests with deterministic team production, a public good prize, and private information. This includes: \citet{barbieri2014group}, \citet{barbieri2016private}, \citet{eliaz2018simple},  \citet{barbieri2019group}, and \citet{barbieri2021private}. See also \citet{brookins2016equilibrium}, who -- for the case where each team's winnings are in the form of a public good for the group -- establishes the existence of equilibrium for a range of information and production configurations.} one common setting is the case of complete information and a single prize that has both public good and private good aspects. In this case the endogenous sharing-rule allocates the private good component among the team members.\footnote{See also \citet{Nitzan2018} which examines the related issue of cost-sharing rules.} Examples of this approach include: (i) \citet{nitzan1991collective}, \citet{nitzan2011prize}, and \citet*{balart2016strategic} in which the set of feasible sharing rules is given by the set of convex combinations of a relative effort component and fair division, (ii) \citet{trevisan2020optimal}, \citet{simeonov2020individual},\footnote{Note that the stochastic production in \citet{simeonov2020individual} is more general but allows for the Tullock CSF as a special case.} \citet{kobayashi2021effort} which allow for general allocations of the private good component of the prize, and (iii) \citet*{kobayashi2024prize} which allows for general allocations of a private good prize and the value of the prize is endogenously determined. More closely related to our focus is the extension to the case of multiple indivisible prizes, as in \citet*{crutzen2020} and \citet*{konishi2024allocation}.

A key feature of our approach is the incorporation of a general stochastic mapping from the profile of types and actions to the profile of team winnings. In the contest theory literature, it is common to allow for stochastic production as a function of effort, with the \citet{tullock1980efficient} ratio-form contest success function and the \citet{lazear81} rank-order tournament being notable examples. Our approach builds upon this literature on stochastic production for individual contestants (e.g., \citet{kirkegaard2023contest}; \citet{bastani2022simple}; \citet{drugov2020noise}; \citet{ryvkin2020shape} - which extend early contributions by \citet{fullerton1999auctioning}).\footnote{Note that the focus in these papers differs from our focus here in that the individual contestants are assumed to have perfect information regarding the other contestants types (however, \citet{ryvkin2020shape} features uncertainty regarding the number of contestants) and as is common in the contest theory literature, the contestants are assumed to have additively separable utility. For additional background on stochastic production in contests, see \citet{konrad2009} and \citet{vojnovic2015contest} Also related is the innovation competition literature (e.g., \citet{halac2017contests}; \citet{moscarini2010competitive}; \citet{terwiesch2008innovation}; and \citet{taylor1995digging} among others), which examines several formulations of individual stochastic production in contests.} Most closely related is \citet{kirkegaard2023contest}, who examines the optimal contest design problem in this environment and frames the contest designer's problem as a team moral-hazard problem with a finite number of agents, where the designer observes individual contestants' stochastic outputs but not their effort choices. By extending the stochastic production approach to accommodate a general form of stochastic team production that depends on the profile of types and actions, our results provide a foundation for examining a wide range of issues arising in competitive environments involving stochastic production with agency concerns.

The remainder of this paper is organized as follows. Section 2 discusses the example in \citet{myerson1982optimal} where equilibrium fails to exist. Section 3 presents our general theoretical framework. Section 4 provides our assumptions, introduces the metric structure, and presents an example based on \citet{nitzan1991collective}. Section 5 establishes our main results on the existence of a Bayesian-Nash Principals' Equilibrium (BNPE). Section 6 discusses implications and directions for future research.

\section{Myerson (1982) Example}
Consider a strategic environment with two competing teams, labeled \( j = 1, 2 \). Each team \( j \) consists of a principal \( p_j \) and a team member \( m_j \). The team member has private information (their ``type''), and the principal designs a mechanism to coordinate the team's actions. In this example, we focus on adverse selection where team members only report their types.

The game unfolds across five stages, which we describe below and which are illustrated in Figure~\ref{fig:mechanism-tree}.

\textit{Stage 0 (Mechanism Design):} Each principal \( p_j \) chooses a coordination mechanism for their team. Specifically, principal \( p_j \) selects a behavior strategy \( \alpha_j(\cdot|t_j') \) that specifies, for each possible type report \( t'_j \in \{\theta_A, \theta_B\} \) from member \( m_j \), a probability distribution over recommended actions from the set \( \{A, B, C\} \).

\textit{Stage 1 (Type Realization):} Nature independently draws a type \( t_j \in \{\theta_A, \theta_B\} \) for each team member \( m_j \), with \( P(t_j = \theta_A) = P(t_j = \theta_B) = \frac{1}{2} \). This corresponds to node \( x_1 \) in Figure~\ref{fig:mechanism-tree}.

\textit{Stage 2 (Type Report):} Each member \( m_j \) privately observes their type \( t_j \) (reaching node \( x_2 \) if \( t_j = \theta_A \) or node \( x_3 \) if \( t_j = \theta_B \)) and then sends a cheap-talk report \( t'_j \in \{\theta_A, \theta_B\} \) to their principal \( p_j \). This report is private within team \( j \): only principal \( p_j \) observes it.

\textit{Stage 3 (Action Recommendation):} After observing the report \( t'_j \), principal \( p_j \) uses the mechanism \( \alpha_j(\cdot|t_j') \) to recommend an action \( a_j \in \{A, B, C\} \) (possibly stochastically). In Figure~\ref{fig:mechanism-tree}, this corresponds to the information sets at nodes \( (x_4, x_6) \) for report \( t'_j = \theta_A \) and nodes \( (x_5, x_7) \) for report \( t'_j = \theta_B \).

\textit{Stage 4 (Payoffs):} The game ends with payoffs realized at the terminal nodes. Each team's payoffs depend on three factors: their member's true type \( t_j \), their own recommended action \( a_j \), and the other team's recommended action \( a_{-j} \). In Figure~\ref{fig:mechanism-tree}, principal payoffs are shown in blue and member payoffs in red.

\begin{figure}[htbp]
\begin{center} 
\def\myscale{1}
\scalebox{\myscale}[\myscale]{

\vspace{.25in}
\begin{tikzpicture} 
      \fontsize{10pt}{0pt}\selectfont

      \draw [dashed] (3,-2) -- (3,2);
      \draw [dashed] (-3,-2) -- (-3,2);

        \draw (0,0) -- (0,-2);
        \draw (0,0) -- (0,2);
         \draw (0,4) to [bend left] (0,0);
        \draw [fill=white,draw=black] (0,0) circle (.25) node {\small $N$} node [anchor=west,xshift=.1in] {\small $x_1$}; 
        \draw[dotted] (0,4) -- (3,0);
        \draw[dotted] (0,4) -- (-3,0);
        \draw [fill=white,draw=blue] (0,4) circle (.25) 
    node (main) {\small $p_j$}  % Named node for reference
    node [anchor=east,xshift=-.1in] {\small $x_0$}
    % Add new text node to the right
    (main.east) node [anchor=west,xshift=0.1in] {\small Selection of Mechanism \color{blue}{$\alpha_j(\cdot| t'_j)$}};
    \node at (3,0) [draw=black,fill=white] {\small \color{black}{\color{blue}{$\alpha_j(\cdot|t'_j=\theta_B)$}}};
    \node at (-3,0) [draw=black,fill=white] {\small \color{black}{\color{blue}{$\alpha_j(\cdot|t'_j=\theta_A)$}}};
        \draw(-3,-2) -- (0,-2);
        \draw[red, line width=1mm] (-3,2) -- (0,2);
        \draw[red, line width=1mm]  (3,-2) -- (0,-2);
        \draw (3,2) -- (0,2);
        \draw [fill=white,draw=red] (0,2) circle (.25) node {\small $m_j$} node [anchor=south,yshift=.1in] {\small $x_2$}; 
        \draw [fill=white,draw=red] (0,-2) circle (.25) node {\small $m_j$} node [anchor=north,yshift=-.1in] {\small $x_3$}; 
        \node at (1.5,2) [draw=red,fill=white] {\small \color{red}{$t'_j=\theta_B$}};
        \node at (1.5,-2) [draw=red,fill=white] {\small \color{red}{$t'_j=\theta_B$}};
        \node at (-1.5,2) [draw=red,fill=white] {\small \color{red}{$t'_j=\theta_A$}};
        \node at (-1.5,-2) [draw=red,fill=white] {\small \color{red}{$t'_j=\theta_A$}};

        \node at (0,1) [draw=black,fill=white] {\small \color{black}{$t_j=\theta_A$ wp $\frac{1}{2}$}};
        \node at (0,-1) [draw=black,fill=white] {\small \color{black}{$t_j=\theta_B$ wp $\frac{1}{2}$}};

      \def\xPay{$z_j$}
      \def\yPay{0}
      \begin{scope}[xshift=3cm,yshift=-2cm]
        \draw (0,0) -- (2,1)  node [anchor=west] {(\color{red}\xPay\color{black},\color{blue}\yPay\color{black})};
        \def\xPay{1}
      \def\yPay{6}
        \draw (0,0) -- (2,0)  node [anchor=west] {(\color{red}\xPay\color{black},\color{blue}\yPay\color{black})};
              \def\xPay{0}
      \def\yPay{5}
\draw (0,0) -- (2,-1)  node [anchor=west] {(\color{red}\xPay\color{black},\color{blue}\yPay\color{black})};
     \draw (0,0) -- (2,0);
       \node at (1,-.75) [draw=blue,fill=white] {\small \color{blue}{$C$}};
           \node at (1,0) [draw=blue,fill=white] {\small \color{blue}{$B$}};
        \node at (1,.75) [draw=blue,fill=white] {\small \color{blue}{$A$}};
        \draw [fill=white,draw=blue] (0,0) circle (.25) node {\small $p_j$} node [anchor=north,yshift=-.1in] {\small $x_7$}; 
      \end{scope}

  \def\xPay{1}
      \def\yPay{6}
      \begin{scope}[xshift=3cm,yshift=2cm]
           \draw (0,0) -- (2,1)  node [anchor=west] {(\color{red}\xPay\color{black},\color{blue}\yPay\color{black})};
            \def\xPay{$z_j$}
      \def\yPay{0} 
        \draw (0,0) -- (2,0)  node [anchor=west] {(\color{red}\xPay\color{black},\color{blue}\yPay\color{black})};
              \def\xPay{0}
      \def\yPay{5}
\draw (0,0) -- (2,-1)  node [anchor=west] {(\color{red}\xPay\color{black},\color{blue}\yPay\color{black})};
     \draw (0,0) -- (2,0);
        \node at (1,-.75) [draw=blue,fill=white] {\small \color{blue}{$C$}};
           \node at (1,0) [draw=blue,fill=white] {\small \color{blue}{$B$}};
        \node at (1,.75) [draw=blue,fill=white] {\small \color{blue}{$A$}};
        \draw [fill=white,draw=blue] (0,0) circle (.25) node {\small $p_j$} node [anchor=south,yshift=.1in] {\small $x_5$}; 
      \end{scope}

 \def\xPay{$z_j$}
      \def\yPay{0}
      \begin{scope}[xshift=-3cm,yshift=-2cm]
        \draw (0,0) -- (-2,1)  node [anchor=east] {(\color{red}\xPay\color{black},\color{blue}\yPay\color{black})};
            \def\xPay{1}
      \def\yPay{6}
        \draw (0,0) -- (-2,0)  node [anchor=east] {(\color{red}\xPay\color{black},\color{blue}\yPay\color{black})};
             \def\xPay{0}
      \def\yPay{5}
       \draw (0,0) -- (-2,0);
 \draw (0,0) -- (-2,-1)  node [anchor=east] {(\color{red}\xPay\color{black},\color{blue}\yPay\color{black})};
        \node at (-1,-.75) [draw=blue,fill=white] {\small \color{blue}{$C$}};
         \node at (-1,0) [draw=blue,fill=white] {\small \color{blue}{$B$}};
        \node at (-1,.75) [draw=blue,fill=white] {\small \color{blue}{$A$}};
        \draw [fill=white,draw=blue] (0,0) circle (.25) node {\small $p_j$} node [anchor=north,yshift=-.1in] {\small $x_6$}; 
      \end{scope}

     \def\xPay{1}
      \def\yPay{6}
      \begin{scope}[xshift=-3cm,yshift=2cm]
     \draw (0,0) -- (-2,1)  node [anchor=east] {(\color{red}\xPay\color{black},\color{blue}\yPay\color{black})};
             \def\xPay{0}
      \def\yPay{5}
       \draw (0,0) -- (-2,0);
 \draw (0,0) -- (-2,-1)  node [anchor=east] {(\color{red}\xPay\color{black},\color{blue}\yPay\color{black})};
  \def\xPay{$z_j$}
      \def\yPay{0}
        \draw (0,0) -- (-2,0)  node [anchor=east] {(\color{red}\xPay\color{black},\color{blue}\yPay\color{black})};
        \node at (-1,-.75) [draw=blue,fill=white] {\small \color{blue}{$C$}};
         \node at (-1,0) [draw=blue,fill=white] {\small \color{blue}{$B$}};
        \node at (-1,.75) [draw=blue,fill=white] {\small \color{blue}{$A$}};
        \draw [fill=white,draw=blue] (0,0) circle (.25) node {\small $p_j$} node [anchor=south,yshift=.1in] {\small $x_4$}; 
      \end{scope}

\end{tikzpicture}

}\end{center} 
\begin{multicols}{2} % Begin two-column layout

For \(j = 1\),
\[
{\color{red}z_1} =
\begin{cases}
\color{red}{2} & \text{if \(\color{blue}{p_2}\) chooses \(\color{blue}{A}\) or \(\color{blue}{B}\)}, \\
\color{red}{1} & \text{if \(\color{blue}{p_2}\) chooses \(\color{blue}{C}\)}.
\end{cases}
\]

\columnbreak % Switch to the second column

For \(j = 2\),
\[
{\color{red}z_2} =
\begin{cases}
\color{red}{2} & \text{if \(\color{blue}{p_1}\) chooses \(\color{blue}{C}\)}, \\
\color{red}{1} & \text{if \(\color{blue}{p_1}\) chooses \(\color{blue}{A}\) or \(\color{blue}{B}\)}.
\end{cases}
\]

\end{multicols}
\caption{Illustration of Team $j$'s Mechanism}
\label{fig:mechanism-tree}
\end{figure}

Given a profile of mechanisms \((\alpha_1(\cdot \mid t_1'), \alpha_2(\cdot \mid t_2'))\), the team \( j \) mechanism \(\alpha_j(\cdot \mid t_j')\) is said to be \textit{incentive compatible} if it is a Bayesian equilibrium for team member \( m_j \) to truthfully report their type. That is, there exists a Bayesian equilibrium in which the reporting strategy satisfies \( t_j'(t_j) = t_j \) for each \( t_j \in \{\theta_A, \theta_B\} \). In Figure~\ref{fig:mechanism-tree}, the truthful reporting paths \((x_2, x_4)\) and \((x_3, x_7)\) are emphasized in red. Team \( j \)'s \textit{incentive-compatible mechanism correspondence}, \( IC_j \), is a set-valued map that associates each mechanism of the other team, \(\alpha_{-j}(\cdot \mid t_{-j}')\), with the set of incentive-compatible mechanisms for team \( j \), denoted \( IC_j(\alpha_{-j}(\cdot \mid t_{-j}')) \).

We are now in position to define a principals' equilibrium. Consider the generalized game between principals in which each principal \( j \)'s feasible-strategy correspondence is its incentive-compatible mechanism correspondence \( IC_j \). A profile of mechanisms \((\alpha_1(\cdot|t_1'), \alpha_2(\cdot|t_2'))\) is said to be a \textit{principals' equilibrium} if each principal \( j \)'s mechanism \(\alpha_j(\cdot|t_j')\) is a best response to \(\alpha_{-j}(\cdot|t_{-j}')\) among team \( j \)'s set of incentive-compatible mechanisms \( IC_j(\alpha_{-j}(\cdot|t_{-j}')) \).

To address the issue of nonexistence of equilibrium, first note that the principals' feasible-strategy correspondences do not have closed graphs or, equivalently, are not lower hemicontinuous. We focus here on team \( 1 \)'s feasible-strategy correspondence, though a corresponding issue arises for team \( 2 \).

Specifically, suppose team \( 2 \)'s mechanism chooses \( C \) with probability \( 1 \) for both possible reports, i.e., \(\alpha_2(C \mid t'_2) = 1\) for each \( t'_2 \in \{\theta_A, \theta_B\}\). Then team \( 1 \)'s set of incentive-compatible mechanisms, \( IC_1(\alpha_2(\cdot \mid t'_2)) \), includes all mechanisms in which \( C \) is played with the same probability for both possible reports. However, if team \( 2 \)'s mechanism chooses \( A \) or \( B \) with strictly positive probability for either possible report, then team \( 1 \)'s set of incentive-compatible mechanisms discontinuously shrinks. Consequently, team \( 1 \)'s feasible-strategy correspondence does not have a closed graph, as it lacks the continuity required for feasible strategies to vary smoothly with changes in the other team's mechanism.

This discontinuity in the feasible-strategy correspondences propagates to the best-response correspondences, which also fail to have closed graphs. This failure results in the absence of a fixed point and the nonexistence of a principals' equilibrium. To see this, consider the following cycle:
\begin{itemize}
    \item If principal \( 2 \) chooses \( C \) with probability \( 1 \) for both reports, then principal \( 1 \)'s best response is to choose \( A \) following report \( \theta_A \) and \( B \) following report \( \theta_B \).
    \item But if principal \( 1 \) chooses \( A \) following \( \theta_A \) and \( B \) following \( \theta_B \), then principal \( 2 \)'s best response is to choose \( A \) following \( \theta_A \) and \( B \) following \( \theta_B \).
    \item If principal \( 2 \) chooses \( A \) or \( B \) with strictly positive probability for either report, then principal \( 1 \)'s best response is to choose \( C \) with probability \( 1 \) for both reports.
    \item But if principal \( 1 \) chooses \( C \) with probability \( 1 \) for both reports, then principal \( 2 \)'s best response is also to choose \( C \) with probability \( 1 \) for both reports.
\end{itemize}

Having illustrated how equilibrium may fail to exist, we now examine a general environment and provide conditions under which equilibrium can be shown to exist.

\section{Model}\label{modelsection}
We analyze a model in which teams of agents with agency concerns interact and the teams' principals (or mechanism designers) initially specify mechanisms that address their respective generalized principal-agent problems subject to feasibility constraints arising from incentive compatibility considerations. Relative to the formulation in \citet{myerson1982optimal}, we introduce some additional structure -- which arises naturally in many economic environments -- around the mapping from the aggregate type and action profiles to the team payoff profile.

Beginning with a brief overview, consider a game that consists of $N$ teams, each comprised of $n$ team members,\footnote{The assumption of equal team sizes is for notational convenience; our results extend directly to teams with different numbers of members.} where an arbitrary team is denoted by \(j \in \{1, 2, \ldots, N\}\) and an arbitrary team member is denoted by \(i \in \{1, 2, \ldots, n\}\). Each team faces both adverse selection (private ability) and moral hazard (unobservable actions) within the team. The game begins with each team's principal specifying a team mechanism. The ensuing continuation game unfolds as follows. First, team members privately learn their types. Second, the team members report their types to the team through cheap, unverifiable talk. Third, the team mechanism recommends a profile of actions to the team, and team members individually choose unobservable actions. Fourth, team winnings are stochastically determined by the actual profiles of types and actions. Finally, each team's winnings are distributed among its team members as individual rewards according to the team's mechanism.

Having provided a brief overview of the game, we now delve deeper into each of the stages in the game, and the following subsections provide details on the game's progression across these four stages. We then revisit the principals' initial problems. Finally, we conclude this section with a summary of the multi-principal extensive-form game with team production and agency concerns.

\subsection{First (Private-Type) Stage}
In the first stage, each team member privately observes their stochastic type, where the set of possible types is denoted by \(T\). Let \(\mathcal{P}(T)\) denote the set of all probability measures on the Borel sets of \(T\), denoted \(\mathcal{B}(T)\). Throughout the paper, we work with Borel sets and maps, omitting the terms `Borel' and `measurable' unless clarity requires otherwise. The joint type space \(T^{nN}\) is endowed with a probability measure \(H \in \mathcal{P}(T^{nN})\). The entire joint type profile \(\widehat{\mathbf{t}}\in T^{nN}\) is drawn according to \(H\), and each agent $(i,j)$---that is, member \(i\) of team $j$---is privately informed of their type \(t_{i,j}\).

In the following discussion, it will also be convenient to let \(\widehat{\mathbf{t}}_{-i,j}\) denote the entire $(nN-1)$-tuple of types of all agents other then agent $(i,j)$---when referring to the $(n-1)$-tuple of types of the members of team $j$ other than team member $(i,j)$ we will use the notation \(\mathbf{t}_{-i,j}\).

\subsection{Second (Type-Reporting) Stage}
Information revelation takes the form of cheap talk, and each team member makes an unverifiable report of their type. For each agent \((i,j)\), a second-stage type-reporting strategy is a function \(t_{i,j}': T \rightarrow T\), and the set of stage 2 type-reporting strategies is denoted by: 
\[
\mathcal{T}: \Big\{\, t'_{i,j} : T \to T \;\mid \; t'_{i,j} \text{ is Borel measurable} \Big\}.
\]

\subsection{Third (Action) Stage}
Let $A$ denote the space of possible actions for individual team members. We now introduce the first component of a team mechanism, the recommended actions in the third (action) stage. In subsection 3.4 on the fourth (team-winnings) stage, we introduce the second component of a team mechanism, the distribution of team winnings. For any profile of reported types $\mathbf{t}'_j \in T^n$ by the $n$ members of team $j$, the mechanism recommends an action profile $\mathbf{a}'_j \in A^n$ drawn at the beginning of the third (action) stage from a transition probability $\alpha_j : T^n \times \mathcal{B}(A^n) \to [0, 1]$. For each reported type profile $\mathbf{t}'_j$, this assigns probabilities to sets of action profiles, which we write as $\alpha_j(\cdot | \mathbf{t}'_j)$.

Then, a third-stage action strategy for agent \((i,j)\) is a function \(a_{i,j} : T \times T \times A \rightarrow A\), which takes their private type \(t_{i,j} \in T\), their reported type \(t'_{i,j} \in T\), and their recommended action \(a'_{i,j} \in A\), and maps this into a feasible action in \(A\). The set of stage 3 action strategies is denoted by:
\[
\mathcal{A}: \Big\{\, a_{i,j} : T \times T \times A \to A \;\mid \; a_{i,j} \text{ is Borel measurable} \Big\}.
\]
The actions of the individual team members are unobservable to the team, and for any agent $(i,j)$, a (pure) strategy $(t'^{\star}_{i,j}(\cdot), a^{\star}_{i,j}(\cdot,\cdot,\cdot))$ is 
\emph{honest and obedient} if
\[
t'^{\star}_{i,j}(t) = t \quad \text{for } H_{i,j}\text{-almost every } t \in T,
\]
and
\[
a^{\star}_{i,j}(t, t, a') = a' \quad \text{for } H_{i,j}\text{-almost every } 
t \in T \text{ and all } a' \in A.
\]

\subsection{Fourth (Team-Winnings) Stage}
Let $W$ denote the set of possible team winnings. The profile of team winnings across all $N$ teams is determined stochastically in the fourth stage by a transition probability $\boldsymbol{\Lambda}: T^{nN} \times A^{nN} \times \mathcal{B}(W^N) \to [0,1]$. For each profile of true types $\widehat{\mathbf{t}} \in T^{nN}$ and true actions $\widehat{\mathbf{a}} \in A^{nN}$ across all $nN$ agents, $\boldsymbol{\Lambda}(\cdot | \widehat{\mathbf{t}}, \widehat{\mathbf{a}})$ assigns probabilities to sets of team winnings profiles in $W^N$.

To model how a team's winnings may be allocated among its members we proceed as follows. First, let $I$ denote the space of individual rewards, with $r_{i,j} \in I$ denoting an arbitrary reward for team $j$ member $i$, and $\mathbf{r}_j \in I^n$ denoting an arbitrary $n$-tuple of rewards for team $j$. Next, let $\mathcal{W}: W \twoheadrightarrow I^n$ denote the correspondence mapping team winnings to feasible profiles of individual rewards. For example, if the team $j$ winnings $w_j$ take the form of a (perfectly divisible) monetary prize, then feasibility requires that the sum of payments to all team members not exceed total winnings: $\sum_{i=1}^N r_{i,j} \leq w_j$ with $r_{i,j}\geq 0$ for all $i \in \{1, \ldots, N\}$. Similarly, if the team $j$ winnings $w_j$ take the form of a public good for the team—such as a shared prize that each member values equally—then feasibility requires $r_{i,j} = w_j$ for all $i \in {1,\ldots,n}$.

Given reported types $\mathbf{t}'_j \in T^n$, recommended actions $\mathbf{a}'_j \in A^n$, and team winnings $w_j \in W$, the team $j$ mechanism distributes its winnings among its team members according to a transition probability
\[
\kappa_j: T^n \times A^n \times W \times \mathcal{B}(I^n) \to [0, 1],
\]
where $I^n$ is the space of individual rewards. For each tuple $(\mathbf{t}'_j, \mathbf{a}'_j, w_j) \in T^n \times A^n \times W$, $\kappa_j(\cdot | \mathbf{t}'_j, \mathbf{a}'_j, w_j)$ is a probability measure that assigns a probability to each set in $\mathcal{B}(I^n)$. Furthermore, feasibility of the mechanism with respect to team winnings requires that the support of $\kappa_j(\cdot | \mathbf{t}'_j, \mathbf{a}'_j, w_j)$ is contained in $\mathcal{W}(w_j) \subseteq I^n$, where $\mathcal{W}(w_j)$ is the set of feasible individual rewards given team $j$ winnings $w_j$. 

\subsection{Principals' Initial Problems}
We now turn to the interaction between the principals  in which each principal selects a feasible and incentive-compatible mechanism. To set the stage for the best-response problem faced by each principal we must first: (i) provide the definition of a team $j$ mechanism, (ii) define the set of incentive compatible mechanisms for team $j$, and (iii) specify the payoff functions for the principals.

Beginning with the definition of the team $j$ mechanism, let $\mathcal{K}( T^n, A^n)$ denote the set of transition probabilities that map the space of $n$-tuples of types $T^n$ into the space of probability measures on $A^n$. Similarly, let $\mathcal{K}(T^n \times A^n \times W, I^n)$ denote the set of transition probabilities mapping $T^n \times A^n \times W$ into the space of probability measures on $I^n$. Applying the revelation principle, a team $j$ mechanism is described as follows. 

\begin{definition}
A team $j$ mechanism is a pair of transition probabilities $(\alpha_j, \kappa_j)\in \mathcal{K}( T^n, A^n) \times \mathcal{K}(T^n\times A^n \times W, I^n)$ satisfying the support constraint: $\operatorname{supp} \kappa_j(\cdot | \mathbf{t}'_j, \mathbf{a}'_j, w_j)\subseteq \mathcal{W}(w_j)$ for every $(\mathbf{t}'_j, \mathbf{a}'_j, w_j)\in T^n\times A^n\times W$.
\label{mechdef}\end{definition}

Let $M$ denote the set of mechanisms that satisfy the individual-reward feasibility constraint specified in Definition \ref{mechdef}. 

We now turn to the incentive compatibility constraint and the set of incentive compatible mechanisms for team $j$. For notational simplicity, we assume all agents share a common von Neumann-Morgenstern utility function $u:T^{nN}\times A^{nN} \times W^N \times I^n \to \mathbb{R}$, so that agent heterogeneity enters only through type differences.\footnote{All results extend straightforwardly to settings with agent-specific utility functions $u_{i,j}$.} We adopt the convention that $u_{i,j}(\widehat{\mathbf{t}}, \widehat{\mathbf{a}}, \mathbf{w}, \mathbf{r}_j)$ denotes the utility of agent $(i,j)$ when evaluated at type profile $\widehat{\mathbf{t}}$ and action profile $\widehat{\mathbf{a}}$—that is, the subscript $(i,j)$ indicates which agent's utility is being evaluated. The utility function may depend on agent $(i,j)$'s own type $t_{i,j}$ and action $a_{i,j}$, their team members' types and actions, and potentially the full profiles $\widehat{\mathbf{t}}$ and $\widehat{\mathbf{a}}$ to capture interdependencies across teams.

\emph{Bayesian Incentive Compatibility} for team $j$ under mechanism profile $\mathbf{m} \in M^N$ requires that each member $i$ has no incentive to unilaterally deviate from an \textit{honest and obedient} strategy $(t'^{\star}_{i,j}(\cdot),a_{i,j}^{\star}(\cdot,\cdot, \cdot))$.\footnote{Note that we abstract from individual rationality, but it is straightforward to include this additional feature.}
 That is, for all $(t'_{i,j}(\cdot),a_{i,j}(\cdot,\cdot, \cdot))\in \mathcal{T}\times \mathcal{A}$,
\begin{equation}
E_{\mathbf{m}}\left[u_{i,j}(\widehat{\mathbf{t}}, \widehat{\mathbf{a}}', \mathbf{w}, \mathbf{r}_j)\right]
\geq
E_{\mathbf{m}}\left[u_{i,j}(\widehat{\mathbf{t}}, (\widehat{\mathbf{a}}'_{-i,j}, a_{i,j}(t_{i,j}, t'_{i,j}(t_{i,j}), a'_{i,j})), \mathbf{w}, \mathbf{r}_j)\right].
\label{mainIC}
\end{equation}
The left-hand side represents agent $(i,j)$'s expected utility under honest and obedient play by all agents, where $\widehat{\mathbf{t}}$ denotes the true type profile and $\widehat{\mathbf{a}}=\widehat{\mathbf{a}}'$ denotes that the recommended action profile is obediently followed by all agents. The right-hand side captures the expected utility when agent $(i,j)$ unilaterally deviates by misreporting their type and/or disobeying their action recommendation while all other agents remain honest and obedient. Specifically, agent $(i,j)$ may report type $t'_{i,j}(t_{i,j})$ instead of their true type $t_{i,j}$, receive an action recommendation $a'_{i,j}$ based on this report, and then choose action $a_{i,j}(t_{i,j}, t'_{i,j}(t_{i,j}), a'_{i,j})$ which may differ from the recommendation. All other agents truthfully report their types $\widehat{\mathbf{t}}_{-i,j}$ and obediently follow their recommended actions $\widehat{\mathbf{a}}'_{-i,j}$, which are generated by the mechanism based on the profile of reported types (including agent $(i,j)$'s report).

We now examine potential deviations in the second and third stages. We first consider the effects of a deviation in the third stage, where team member $i$ disobeys their action recommendation. If team $j$ member $i$ deviates from the recommended action ($a_{i,j}(t_{i,j}, t'_{i,j}(t_{i,j}), a'_{i,j}) \neq a'_{i,j}$) with strictly positive probability, this affects the transition probability of the team winnings in the fourth stage, $\boldsymbol{\Lambda}(d\mathbf{w} | \widehat{\mathbf{t}}, \widehat{\mathbf{a}}'_{-i,j}, a_{i,j}(t_{i,j}, t'_{i,j}(t_{i,j}), a'_{i,j}))$, with the realization of team $j$ winnings, $w_j$, entering the transition probability $\kappa_j(d\mathbf{r}_j | \mathbf{t}_{-i,j}, t'_{i,j}(t_{i,j}), \mathbf{a}'_j, w_j)$ of individual rewards allocated to team $j$ members.

Next, we consider the effects of a deviation in the second stage, where team member $i$ misreports their type. If team member $i$ misreports their type ($t'_{i,j}(t_{i,j}) \neq t_{i,j}$) with strictly positive probability, this affects the transition probability of the recommended action profile in the third stage, $\alpha_j(d\mathbf{a}'_j|\mathbf{t}_{-i,j}, t'_{i,j}(t_{i,j}))$. Then, the realization of the recommended action profile enters the transition probability $\boldsymbol{\Lambda}(d\mathbf{w} | \widehat{\mathbf{t}}, \widehat{\mathbf{a}}'_{-i,j}, a_{i,j}(t_{i,j}, t'_{i,j}(t_{i,j}), a'_{i,j}))$ of the team winnings in the fourth stage, and ultimately affects the team's individual rewards allocated through $\kappa_j$. 

Lastly, we specify the payoff functions for the principals. Each team \(j\) principal's von Neumann-Morgenstern utility function is denoted by \(\pi_{j}: T^{nN}\times A^{nN} \times W^N \times I^n \to \mathbb{R}\) which may depend on team \(j\)’s own winnings $w_j$, action profile \(\mathbf{a}_j\), type profile $\mathbf{t}_{j}$, and allocation of rewards $\mathbf{r}_j$ along with the other teams' winnings $\mathbf{w}_{-j}$, action profiles $\{\{\mathbf{a}_{j'}\} \mid j' \neq j\}$ and type profiles $\{\{\mathbf{t}_{j'}\} \mid j' \neq j\}$.

We denote the set of all prize-feasible and incentive-compatible mechanisms for team $j$ in the context of the mechanism profile $\mathbf{m}_{-j}$ by $IC_j(\mathbf{m}_{-j})$. We are now in a position to state each team $j$ principal's problem and define our equilibrium concept, Bayesian-Nash Principals' Equilibrium (BNPE). Given team $j$'s set of prize feasible and incentive compatible mechanisms $IC_j(\mathbf{m}_{-j})$, the team $j$ principal's best-response problem is:
\begin{equation}
\sup_{m_j\in IC_j(\mathbf{m}_{-j})}  E_{\mathbf{m}}\left(\pi_{j}\left( \widehat{\mathbf{t}}, \widehat{\mathbf{a}}', \mathbf{w}, \mathbf{r}_j\right)\right)
\label{PrincBRP}\end{equation} 
A Bayesian-Nash Principals' Equilibrium (BNPE) is defined as follows.
\begin{definition}[Bayesian-Nash Principals' Equilibrium (BNPE)]
\label{def:BNPE}
A strategy profile \(\mathbf{m}^* = (m_1^*, \ldots, m_N^*) \in M^N\) constitutes a \textbf{Bayesian-Nash Principals' Equilibrium} if:
\begin{enumerate}[label=(\roman*)]
    \item \textbf{Feasibility}: For each principal \(j\), $m_j^*=(\alpha^*_j, \kappa^*_j)\in M$, which requires that
    the transition probability $\kappa_j^{*}\in \mathcal{K}(T^n\times A^n \times W, I^n)$ satisfies the support constraint: $\operatorname{supp} \kappa_j(\cdot | \mathbf{t}'_j, \mathbf{a}'_j, w_j)\subseteq \mathcal{W}(w_j)$ for every $(\mathbf{t}'_j, \mathbf{a}'_j, w_j)\in T^n\times A^n\times W$.
    \item \textbf{Incentive Compatibility }: For each team \(j\), \(m_j^* \in IC_j(\mathbf{m}_{-j}^*)\), where:
    \begin{itemize}
        \item Agents believe others use honest and obedient strategies, which report truthfully and follow recommendations almost surely,
        \item For each team $j$ member $i$, the Bayesian Incentive Compatibility condition given in equation~\ref{mainIC} holds for all unilateral deviations $(t'_{i,j}(\cdot),a_{i,j}(\cdot,\cdot, \cdot))\in \mathcal{T}\times \mathcal{A}$.
    \end{itemize}
    
    \item \textbf{Principals Best-Respond}: Each \(m_j^*\) solves principal $j$'s best-response problem defined in equation~\ref{PrincBRP}.
\end{enumerate}
\end{definition}

\subsection{Multi-Principle Interaction with Team Production and Agency Concerns}
To summarize, we examine the extensive-form game involving multiple principals interacting in an environment with team production and agency concerns, denoted by \(\Gamma(N, n, T, A, W, I, H, \Lambda, u, \{\pi_j\}_{j=1}^N, \mathcal{W})\), in which a set of $N$  teams (each with \(n\) team members) compete in an environment in which the set of possible team winnings is \(W\), and the feasible individual reward correspondence $\mathcal{W}$ maps team winnings into a set of feasible individual rewards profiles $I^n$.

The game begins with each team's principal specifying a team mechanism that is feasible and incentive compatible. The four stages of the continuation game are then summarized as follows:
\begin{enumerate}
\item \textbf{Private-Type Stage:} All $nN$ team members privately observe their individual types, which are jointly drawn from a probability measure \(H\) on \(T^{nN}\).

\item \textbf{Type-Reporting Stage:} The members of each team simultaneously and privately report their types to their respective team mechanisms.

\item \textbf{Action Stage:} Each team mechanism $m_j$ privately recommends an action to each of its members based on team $j$'s reported types, and then all team members simultaneously choose their actions.

\item \textbf{Team-Winnings Stage:} The profile of team winnings \(\mathbf{w}\in W^N\) is drawn according to $\Lambda$, conditional on the realized type and action profiles. Within each team $j$, the mechanism $m_j$ allocates the team's winnings \(w_j \in W\) among its \(n\) members, where the set of feasible individual reward profiles is given by \(\mathcal{W}(w_j)\).
\end{enumerate}

\section{Assumptions and the Metric Structure}\label{assumptionssection}
We begin by stating our key assumptions and then develop the metric structure on the space of mechanisms.

\subsection*{Assumptions}
\subsubsection*{Ambient Space Assumptions}
\begin{assumption}[Compact, Complete, Separable, Metrizable Spaces]
We assume that the sets of individual types, individual actions, team winnings, and individual rewards, denoted by $T$, $A$, $W$, and $I$ respectively, are compact Polish spaces endowed with their respective Borel $\sigma$-algebras. We also assume the axiom of choice and all product sets are endowed with their product $\sigma$-algebras.
\label{2.8assump1}
\end{assumption}

Note that from Assumption~\ref{2.8assump1}, it follows that $\mathcal{P}(T)$, $\mathcal{P}(A)$, $\mathcal{P}(W)$, and $\mathcal{P}(I)$—the spaces of probability measures on $T$, $A$, $W$, and $I$ respectively—are all compact Polish spaces.

\begin{comment}
\subsubsection*{Joint Type Law Assumptions}

\begin{assumption}[Mutual Absolute Continuity of $H$ and Continuous Radon-Nikodym Derivative]
\label{assump:absolute-continuity}
The joint type law $H$ and its product-of-marginals $\overline{H}$ are mutually absolutely continuous, and the Radon-Nikodym derivative $h = \frac{dH}{d\overline{H}}$ is continuous on $T^{nN}$. 
\end{assumption}

Mutual absolute continuity implies that $H$ and $\overline H$ have the same null sets, or equivalently the same support.
For compact Polish $T$, if the Radon--Nikodym derivative $h := \frac{dH}{d\overline H}$ is continuous on $T^{nN}$ and $H$ is mutually absolutely continuous with $\overline H$, then conditional densities exist and vary continuously in the conditioning variable for $\overline H$-almost every conditioning value, and the associated conditionals are convergence-preserving (narrowly) at those points.
\end{comment}

 \subsubsection*{Joint Winnings Transition Probability Assumptions}

\begin{assumption}[Narrow Continuity of $\boldsymbol{\Lambda}$ and Support Structure]
\label{2.8assump4.5}
The transition probability $\boldsymbol{\Lambda}(\cdot|\widehat{\mathbf{t}},\widehat{\mathbf{a}})$ satisfies:

\begin{enumerate}

\item \textbf{Pointwise Narrow Continuity in Type-Action Profiles}:
The transition probability $\boldsymbol{\Lambda}: T^{nN} \times A^{nN} \to \mathcal{P}(W^N)$ is narrowly (weakly) continuous in the sense that for every fixed $(\widehat{\mathbf{t}}, \widehat{\mathbf{a}}) \in T^{nN} \times A^{nN}$ and every sequence $(\widehat{\mathbf{t}}_k, \widehat{\mathbf{a}}_k) \to (\widehat{\mathbf{t}}, \widehat{\mathbf{a}})$, we have:
\[
\boldsymbol{\Lambda}(\cdot | \widehat{\mathbf{t}}_k, \widehat{\mathbf{a}}_k) \overset{*}{\to} \boldsymbol{\Lambda}(\cdot | \widehat{\mathbf{t}}, \widehat{\mathbf{a}})
\]
Equivalently, for every bounded continuous $f:W^N\to\mathbb{R}$,
\[
\lim_{k\to\infty}\ \int_{W^N} f(\mathbf{w})\,\boldsymbol{\Lambda}\!\left(d\mathbf{w}\mid \widehat{\mathbf{t}}_k,\widehat{\mathbf{a}}_k\right)
\;=\;
\int_{W^N} f(\mathbf{w})\,\boldsymbol{\Lambda}\!\left(d\mathbf{w}\mid \widehat{\mathbf{t}},\widehat{\mathbf{a}}\right).
\]
    \item \textbf{Support Structure}: There exists a family $\{\mathcal{W}_{\widehat{\mathbf{t}},\widehat{\mathbf{a}}}\}_{(\widehat{\mathbf{t}},\widehat{\mathbf{a}})\in T^{nN}\times A^{nN}}$ of measurable subsets of $W^N$ such that:
    \begin{itemize}
        \item The graph $\{(\widehat{\mathbf{t}},\widehat{\mathbf{a}},\mathbf{w}) : \mathbf{w} \in \mathcal{W}_{\widehat{\mathbf{t}},\widehat{\mathbf{a}}}\}$ is measurable in $T^{nN}\times A^{nN}\times W^N$
        \item $\boldsymbol{\Lambda}(\mathcal{W}_{\widehat{\mathbf{t}},\widehat{\mathbf{a}}}|\widehat{\mathbf{t}},\widehat{\mathbf{a}}) = 1$ for all $(\widehat{\mathbf{t}},\widehat{\mathbf{a}})$
    \end{itemize}

\end{enumerate}
\end{assumption}

\subsubsection*{Feasible Rewards Correspondence}

\begin{assumption}[Compact-Valued, Convex-Valued, Continuous Feasible Rewards Correspondence]
\label{assump:W_continuity}
The correspondence \( \mathcal{W}: W \twoheadrightarrow I^n \) has nonempty, compact, convex values and is continuous (both upper and lower hemicontinuous). 
\begin{comment}
Equivalently:
\begin{enumerate}
    \item The graph $\Gamma_{\mathcal{W}} := \{(w, \mathbf{r}) \in W \times I^n : \mathbf{r} \in \mathcal{W}(w)\}$ is closed in $W \times I^n$.
    
    \item For every $w \in W$, $\mathbf{r} \in \mathcal{W}(w)$, and sequence $w_k \to w$, there exists a sequence $\mathbf{r}_k \in \mathcal{W}(w_k)$ with $\mathbf{r}_k \to \mathbf{r}$.
        \item For every $w \in W$, the set $\mathcal{W}(w)$ is convex.
\end{enumerate}
\end{comment}
\end{assumption}
Note this assumption holds in standard cases, such as:
\begin{itemize}
    \item \textit{Monetary prizes}: $\mathcal{W}(w) = \{\mathbf{r}\in \mathbb{R}^n_+ : \sum_{i=1}^n r_i \leq w\}$
    \item \textit{Public goods}: $\mathcal{W}(w) = \{\mathbf{r} : r_i = w \ \forall i\}$
\end{itemize}

Note also that it follows from Assumption~\ref{assump:W_continuity} that the set $M$ (of mechanisms that satisfy the individual-reward feasibility constraint specified in Definition \ref{mechdef}) is convex under pointwise convex combinations of transition probabilities. Indeed, let $(\alpha_j,\kappa_j),(\alpha_j',\kappa_j')\in M$ and fix $\lambda\in[0,1]$. Define
\[
\alpha_j^{\lambda}(\cdot\mid \mathbf{t}_j)
:= \lambda\,\alpha_j(\cdot\mid \mathbf{t}_j) + (1-\lambda)\,\alpha_j'(\cdot\mid \mathbf{t}_j),\qquad
\kappa_j^{\lambda}(\cdot\mid \mathbf{t}_j,\mathbf{a}'_j,w_j)
:= \lambda\,\kappa_j(\cdot\mid \mathbf{t}_j,\mathbf{a}'_j,w_j) + (1-\lambda)\,\kappa_j'(\cdot\mid \mathbf{t}_j,\mathbf{a}'_j,w_j).
\]
Convex combinations of transition probabilities are transition probabilities, so $\alpha_j^{\lambda}$ and $\kappa_j^{\lambda}$ are admissible maps. For any conditioning tuple $x=(\mathbf{t}'_j,\mathbf{a}'_j,w_j)$, feasibility of $(\alpha_j,\kappa_j)$ and $(\alpha_j',\kappa_j')$ implies
$\operatorname{supp}\,\kappa_j(\cdot\mid x),\,\operatorname{supp}\,\kappa_j'(\cdot\mid x)\subseteq \mathcal{W}(w_j)$. Hence
\[
\operatorname{supp}\,\kappa_j^{\lambda}(\cdot\mid x)
\subseteq \operatorname{conv}\!\Big(\operatorname{supp}\,\kappa_j(\cdot\mid x)\cup \operatorname{supp}\,\kappa_j'(\cdot\mid x)\Big)
\subseteq \mathcal{W}(w_j),
\]
where the last inclusion uses the convexity of $\mathcal{W}(w_j)$. Thus $(\alpha_j^{\lambda},\kappa_j^{\lambda})\in M$, and convexity of $M$ follows directly.

\subsubsection*{Utility Function Assumptions}

\begin{comment}
Let $X$ be a measurable space and $Y$ a topological space. A function $\zeta : X \times Y \to \mathbb{R}$ is a \textit{bounded Carathéodory function} if:
\begin{enumerate}
    \item $\zeta(x, \cdot)$ is continuous on $Y$ for every $x \in X$,
    \item $\zeta(\cdot, y)$ is measurable on $X$ for every $y \in Y$,
    \item $\zeta$ is uniformly bounded: $\sup\limits_{(x,y) \in X \times Y} |\zeta(x,y)| < \infty$.
\end{enumerate}
\end{comment}
\begin{assumption}[von Neumann-Morgenstern Utility Functions] 
The von Neumann-Morgenstern utility function $u$ is a bounded continuous function on $T^{nN} \times A^{nN} \times W^N \times I^{n} $ and the von Neumann-Morgenstern utility functions $\{ \pi_j\}_{j=1}^N$ are bounded continuous functions on $T^{nN} \times A^{nN} \times W^N \times I^n$.
\label{2.8assump3}
\end{assumption}

\subsection*{Metric Structure}
We work with two related outcome spaces. The \emph{baseline outcome space} is
\[
X \coloneqq T^{nN}\times A^{nN}\times W^N\times I^{nN},
\]
with generic element
\[
x = \bigl(\widehat{\mathbf{t}},\,  \widehat{\mathbf{a}}',\, \mathbf{w},\, 
\widehat{\mathbf{r}}\bigr) \in X,
\]
recording the profile of types $\widehat{\mathbf{t}}\in T^{nN}$, recommended 
actions $\widehat{\mathbf{a}}'\in A^{nN}$, team winnings $\mathbf{w}\in W^N$, 
and individual rewards $\widehat{\mathbf{r}}\in I^{nN}$.

To analyze potential deviations by a single agent, we also use an extended outcome 
space that records the deviating agent's type report and action choice. 
The \emph{extended outcome space} for an arbitrary agent $(i,j)$ is
\[
\widetilde{X} \coloneqq T^{nN+1}\times A^{nN+1}\times W^N\times I^{nN},
\]
with generic element
\[
\widetilde{x} = \bigl(\widehat{\mathbf{t}},\, t';\; \widehat{\mathbf{a}}',\, a;\; 
\mathbf{w},\, \widehat{\mathbf{r}}\bigr) \in \widetilde{X},
\]
where the additional coordinates $t' \in T$ and $a \in A$ record agent $(i,j)$'s 
reported type and realized action, respectively.

For a given mechanism profile $\mathbf{m}=\{\alpha_j,\kappa_j\}_{j=1}^N \in M^N$, 
where $\widehat{\mathbf{r}} = \{\mathbf{r}_j\}_{j=1}^N$ denotes the aggregate profile 
of individual rewards and $\mathbf{w} = \{w_j\}_{j=1}^N$ denotes the aggregate profile 
of team winnings, the joint mechanism components are constructed as:
\begin{align*}
\boldsymbol{\kappa}^{\mathbf{m}} \!\left( d\widehat{\mathbf{r}} \,\middle|\, \widehat{\mathbf{t}},\, \widehat{\mathbf{a}}',\, \mathbf{w} \right)
&= \bigotimes_{j=1}^N \kappa_j\!\left( d\mathbf{r}_j \,\middle|\, \mathbf{t}_j,\, \mathbf{a}'_j,\, w_j \right),\\
\boldsymbol{\alpha}^{\mathbf{m}} \!\left( d\widehat{\mathbf{a}}' \,\middle|\, \widehat{\mathbf{t}} \right)
&= \bigotimes_{j=1}^N \alpha_j\!\left( d\mathbf{a}'_j \,\middle|\, \mathbf{t}_j \right).
\end{align*}

\begin{definition}[Truthful-obedient induced law]
When all agents use truthful-obedient strategies, the induced law (or probability measure) 
$\mu: M^N\to \mathcal{P}(X)$ on the baseline outcome space is defined by the 
sequential composition of transition probabilities:
\begin{equation}
\label{eq:mu}
\mu(\mathbf{m})
\coloneqq
H(d\widehat{\mathbf{t}})\,
\boldsymbol{\alpha}^{\mathbf{m}}\!\bigl(d\widehat{\mathbf{a}}'\,\bigm|\, 
\widehat{\mathbf{t}} \bigr)\,
\boldsymbol{\Lambda}\!\left(
  d\mathbf{w}\,\middle|\,
  \widehat{\mathbf{t}},
  \widehat{\mathbf{a}}'
\right)\, 
\boldsymbol{\kappa}^{\mathbf{m}}\!\left(
  d\widehat{\mathbf{r}}\,\middle|\,
  \widehat{\mathbf{t}},\, 
  \widehat{\mathbf{a}}',\, 
  \mathbf{w}
\right).
\end{equation}
\end{definition}

To build toward the robust narrow topology that our main results require, we first introduce the standard narrow topology on the space of laws $\mathcal{P}(X)$ and show how the distributional mechanism approach of \citet*{kadan2017} can be extended to our multi-team setting.

\begin{definition}[Narrow topology on laws]\label{def:narrow-topology}
Let $S$ be a Polish space. The \emph{narrow topology} (also known as the topology of weak convergence) on $\mathcal{P}(S)$ is the coarsest topology making the maps
\[
\mathcal{P}(S)\ni \mu \longmapsto \int_S f\,\mathrm{d}\mu \in \mathbb{R}
\]
continuous for all bounded continuous functions $f\in C_b(S)$. Equivalently, a sequence $\{\mu_k\} \subset \mathcal{P}(S)$ converges to $\mu \in \mathcal{P}(S)$ in the narrow topology if and only if $\int_S f\,\mathrm{d}\mu_k \to \int_S f\,\mathrm{d}\mu$ for all $f\in C_b(S)$.
\end{definition}

We now extend the distributional mechanism approach of \citet*{kadan2017} to our setting with multiple interacting teams. Each mechanism profile $\mathbf{m} = \{(\alpha_j,\kappa_j)\}_{j=1}^N \in M^N$ generates a truthful-obedient induced law $\mu(\mathbf{m}) \in \mathcal{P}(X)$. The narrow topology on $\mathcal{P}(X)$ is metrizable, with the Prokhorov metric being one natural choice.

\begin{definition}[Prokhorov Metric] 
For $\mu^1, \mu^2 \in \mathcal{P}(S)$, where $S$ is a Polish space, define the Prokhorov metric as
\[
d_P(\mu^1, \mu^2) = \inf \left\{ \epsilon > 0 \;\middle|\;
    \begin{array}{l}
      \mu^1(A) \leq \mu^2(A^{\epsilon}) + \epsilon \\
      \text{and} \\
      \mu^2(A) \leq \mu^1(A^{\epsilon}) + \epsilon \\
      \text{for all Borel sets } A \subset S
    \end{array}
\right\}
\] 
where $A^{\epsilon} = \{s \in S : d(s, A) < \epsilon\}$ denotes the open $\epsilon$-neighborhood of $A$.
\end{definition}

Given the Prokhorov metric, define the $d_{M^N}$ metric between mechanism profiles $\mathbf{m}^1, \mathbf{m}^2\in M^N$ as: 
$$d_{M^N}(\mathbf{m}^1, \mathbf{m}^2)=d_P(\mu(\mathbf{m}^1), \mu(\mathbf{m}^2)).$$
The metric \(d_{M^N}\) induces the \textbf{narrow topology} on \(M^N\), where a sequence \(\{\mathbf{m}^k\}_{k \in \mathbb{N}}\) \textbf{narrowly converges} to \(\mathbf{m}\), denoted \(\mathbf{m}^k \xrightarrow{N} \mathbf{m}\), if:
\[
\lim_{k \to \infty} d_{M^N}(\mathbf{m}^k, \mathbf{m}) = \lim_{k \to \infty} d_P(\mu(\mathbf{m}^k), \mu(\mathbf{m})) = 0.
\]
 Note that $d_{M^N}$ treats mechanism profiles as equivalent if they generate the same truthful-obedient law, ignoring potential differences in the outcomes that arise from deviations—a limitation that becomes critical when analyzing incentive compatibility.

Given their focus on showing that the principal's objective function is lower semicontinuous in the single-team case, \citet{kadan2017} equip $M^1$ with the metric $d_{M^1}$. A key feature of this approach is that, for any bounded continuous function $f:X\to \mathbb{R}$, the map
\[
m \;\mapsto\; \int f\,
d\mu(m)
\]
is narrowly continuous with respect to $m\in M^1$. 
In our case, the focus will be on the correspondence of incentive-compatible mechanisms, and it will be helpful to make use of a finer topology. Towards that end, we now turn to the issue of potential deviations. We follow \citet{Balder1988} and allow agents to use behavior strategies. A \emph{behavior strategy} for team $j$ member $i$ is a pair $\sigma_{i,j}=(\sigma^T_{i,j}, \sigma^A_{i,j})$, where:
\begin{itemize}
\item $\sigma^T_{i,j}\in \mathcal{K}(T, T)$ is the \emph{type-reporting transition probability}, 
specifying a law over reports $t' \in T$ conditional on the agent's 
true type $t \in T$;

\item $\sigma^A_{i,j}\in \mathcal{K}(T^{2}\times A, A)$ is the \emph{action transition probability}, 
specifying a law over actions $a \in A$ conditional on the agent's true type, 
their report, and the mechanism's action recommendation $a' \in A$.
\end{itemize}
Letting \(H_{i,j} \in \mathcal{P}(T)\) denote the marginal law of agent $(i,j)$'s type under $H$, a truthful-obedient strategy may be defined in the behavior-strategy setting as follows.
\begin{definition}[Truthful-obedient behavior strategies]
A behavior strategy $\sigma_{i,j}\in \mathcal{K}(T, T) \times \mathcal{K}(T^{2}\times A, A)$ 
is \emph{truthful-obedient} if
\[
\sigma^T_{i,j}(\,\cdot\,\mid t) = \delta_{\,t}
\quad\text{for } H_{i,j}\text{-almost every } t \in T,
\]
and
\[
\sigma^A_{i,j}(\,\cdot\,\mid t, t, a') = \delta_{\,a'}
\quad \text{for } H_{i,j}\text{-almost every } t \in T \text{ and all } a' \in A,
\]
where $\delta_x$ denotes the Dirac probability measure concentrated at $x$.
\end{definition}
Under a truthful-obedient strategy, the agent almost surely reports her type truthfully and (conditional on truthful reporting) almost surely obeys the mechanism's 
action recommendation. Note that the definition does not 
constrain off-path behavior, though 
such off-path choices are almost never realized under a truthful-obedient behavior strategy.

\begin{definition}[Induced law under an arbitrary unilateral deviation]
When all agents except $(i,j)$ use truthful-obedient strategies and agent $(i,j)$ 
employs an arbitrary behavior strategy $\sigma_{i,j}\in \mathcal{K}(T,T)\times 
\mathcal{K}(T^{2}\times A, A)$, the induced law 
$\widetilde{\mu}: M^N\times \mathcal{K}(T,T)\times \mathcal{K}(T^{2}\times A, A)
\to \mathcal{P}(\widetilde{X})$ on the extended outcome space is defined by:
\begin{multline}
\label{eq:mu-sigma}
\widetilde{\mu}(\mathbf{m}, \sigma_{i,j})
\coloneqq
H(d\widehat{\mathbf{t}})\,
\sigma^{T}_{i,j}\!\left( dt' \,\middle|\, t_{i,j} \right)\,
\boldsymbol{\alpha}^{\mathbf{m}}\!\bigl(d\widehat{\mathbf{a}}'\,\bigm|\, 
\widehat{\mathbf{t}}_{-i,j}, t' \bigr)\,
\sigma^{A}_{i,j}\!\left( da \,\middle|\, t_{i,j},\, t',\, a'_{i,j} \right)\, \\
\boldsymbol{\Lambda}\!\left(
  d\mathbf{w}\,\middle|\,
  \widehat{\mathbf{t}},
  \bigl(\widehat{\mathbf{a}}'_{-i,j},\, a\bigr)
\right)\, 
\boldsymbol{\kappa}^{\mathbf{m}}\!\left(
  d\widehat{\mathbf{r}}\,\middle|\,
  (\widehat{\mathbf{t}}_{-i,j},\, t'),
  \widehat{\mathbf{a}}',
  \mathbf{w}
\right).
\end{multline}
\end{definition}
Note that the sequential composition in \eqref{eq:mu-sigma} captures the following causal order:
\begin{enumerate}
  \item Nature draws the profile of true types $\widehat{\mathbf{t}}$ from prior $H$.
  
  \item Agent $(i,j)$ privately observes her true type $t_{i,j}$ and reports type $t' \in T$ according to her type-reporting transition probability
  $\sigma^{T}_{i,j}(dt' \mid t_{i,j})$. This is the first point where a 
  deviation may occur.
  
  \item The joint mechanism observes the reported type profile $(\widehat{\mathbf{t}}_{-i,j}, t')$ 
  and generates an action recommendation profile $\widehat{\mathbf{a}}'$ via 
  $\boldsymbol{\alpha}^{\mathbf{m}}\bigl(d\widehat{\mathbf{a}}' \mid 
  \widehat{\mathbf{t}}_{-i,j}, t'\bigr)$. Note that the recommendation to team $j$ depends on agent 
  $(i,j)$'s (possibly untruthful) report $t'$.
  
  \item Agent $(i,j)$ chooses action $a \in A$ according to her action transition probability
  $\sigma^{A}_{i,j}(da \mid t_{i,j}, t', a'_{i,j})$, where $a'_{i,j}$ is the recommended action from team $j$'s 
  mechanism. This represents the second potential deviation point.
  
  \item The winnings transition probability $\boldsymbol{\Lambda}$ realizes team winnings $\mathbf{w}$ 
  based on the true type profile $\widehat{\mathbf{t}}$ and the action profile 
  $(\widehat{\mathbf{a}}'_{-i,j}, a)$.
  
  \item The joint mechanism's reward transition probability $\boldsymbol{\kappa}^{\mathbf{m}}$ determines 
  individual rewards $\widehat{\mathbf{r}}$ conditional on the reported types 
  $(\widehat{\mathbf{t}}_{-i,j}, t')$, recommended actions 
  $\widehat{\mathbf{a}}'$, and realized winnings $\mathbf{w}$.
\end{enumerate}

Note that this definition applies to \emph{any} behavior strategy, not only to deviations. 
When $\sigma_{i,j}$ is truthful-obedient, the events $\{t'=t_{i,j}\}$ and 
$\{a=a'_{i,j}\}$ occur almost surely, so $\widetilde{\mu}(\mathbf{m}, \sigma_{i,j})$ 
concentrates on the ``on-path" subset of $\widetilde{X}$. 

\subsubsection*{The Set of Feasible Induced Laws achievable by Unilateral Deviations}

For each mechanism profile $\mathbf{m}\in M^N$, define $(i,j)$'s \emph{feasible set} of 
induced laws as:
\[
\Phi_{i,j}(\mathbf{m})
\coloneqq
\bigl\{\,\widetilde{\mu}(\mathbf{m}, \sigma_{i,j}) \;:\;
\sigma_{i,j}\in \mathcal{K}(T,T)\times \mathcal{K}(T^{2}\times A, A)\, \bigr\}.
\]
This set contains all laws over extended outcomes that a single agent 
$(i,j)$ can induce through their choice of behavior strategy, given that all other 
agents play truthfully and obediently.

We now address how convergence of mechanism profiles is defined in our framework. A key element of our equilibrium existence problem is the correspondence of incentive-compatible mechanisms, which forms the endogenous feasible strategy correspondence in our generalized game. To handle this, we define a topology that distinguishes mechanism profiles based on the deviation opportunities available to agents. Our construction proceeds in two steps: we first introduce the Hausdorff metric, then use it to define the robust narrow topology on the joint mechanism space $M^N$, which accounts for both truthful-obedient induced laws and the feasible sets of induced laws achievable through unilateral deviations.

\begin{definition}[Hausdorff metric on compact sets of laws]\label{def:hausdorff}
Let $S$ be a Polish space.
Let $\mathsf{K}(\mathcal{P}(S))$ denote the family of nonempty compact 
subsets of $\mathcal{P}(S)$ with respect to the narrow topology, and let $d_P^{S}$ denote the 
Prokhorov metric on $\mathcal{P}(S)$. For 
$C_1,C_2\in \mathsf{K}(\mathcal{P}(S))$, the \emph{Hausdorff distance} is
\[
d_H(C_1,C_2)
=
\max\left\{
\sup_{\mu\in C_1}\inf_{\nu\in C_2} d_P^{S}(\mu,\nu),\;
\sup_{\nu\in C_2}\inf_{\mu\in C_1} d_P^{S}(\mu,\nu)
\right\}.
\]
\end{definition}

The Hausdorff distance measures how well one compact set of laws can be approximated by 
another: $C_1$ and $C_2$ are close in Hausdorff distance if every law in $C_1$ 
is close (in Prokhorov distance) to some law in $C_2$, and vice versa.

We are now in position to introduce the robust narrow metric, which will allow us to examine convergence of both the truthful-obedient induced laws and the feasible sets of induced laws achievable through unilateral deviations.\footnote{Note that by Assumption \ref{2.8assump1}, $\widetilde{X}$ is a compact metric space, so every law on 
$\widetilde{X}$ is automatically tight. Therefore, 
any subset of $\mathcal{P}(\widetilde{X})$—in particular $\Phi_{i,j}(\mathbf{m})$—is 
uniformly tight. By Prokhorov's theorem, the closure $\overline{\Phi_{i,j}(\mathbf{m})}$ 
is compact in $(\mathcal{P}(\widetilde{X}), d_P^{\widetilde{X}})$, ensuring that the Hausdorff distance between such closures is well-defined.}

\begin{definition}[Robust narrow metric]\label{def:robust-narrow}
Let $\mu: M^N\to\mathcal{P}(X)$ denote the truthful-obedient induced law 
defined in \eqref{eq:mu}, let $\Phi(\mathbf{m})$ denote the set of induced laws achievable through unilateral deviations from mechanism profile $\mathbf{m}$, let $d_P^X$ denote the Prokhorov metric on $\mathcal{P}(X)$, and let $d_H$ denote the Hausdorff distance on $\mathsf{K}(\mathcal{P}(\widetilde{X}))$. For mechanism profiles $\mathbf{m}^1,\mathbf{m}^2\in M^N$, 
the \emph{robust narrow distance} is
\[
d_{M^N}^*(\mathbf{m}^1,\mathbf{m}^2)
=
\max\left\{
\max_{(i,j)}\left\{d_H\!\left(\overline{\Phi_{i,j}(\mathbf{m}^1)},\,\overline{\Phi_{i,j}(\mathbf{m}^2)}\right)\right\},\;
d_P^X\!\left(\mu(\mathbf{m}^1),\, \mu(\mathbf{m}^2) \right)
\right\}.
\]
\end{definition}

The robust narrow topology on $M^N$ incorporates two distinct proximity criteria for mechanism profiles. Two mechanisms $\mathbf{m}^1$ and $\mathbf{m}^2$ are close in this topology when both of the following hold:
\begin{itemize}
\item \emph{Robustness to strategic behavior.} The Hausdorff distance
$d_H\big(\overline{\Phi_{i,j}(\mathbf{m}^1)},\overline{\Phi_{i,j}(\mathbf{m}^2)}\big)$
between the sets of laws attainable through unilateral deviations is small.
\item \emph{Truthful--obedient laws.} The Prokhorov distance
$d_P^{X}\big(\mu(\mathbf{m}^1),\mu(\mathbf{m}^2)\big)$
between the truthful--obedient induced laws is small.
\end{itemize}

We adopt the following notation for convergence:
\begin{itemize}
  \item $Q_n \xrightarrow{N}  Q$ denotes narrow convergence in $(\mathcal{P}(X),d_P^{X})$;
  \item $\mathcal{A}_n \xrightarrow{H} \mathcal{A}$ denotes Hausdorff convergence in $\big(\mathsf{K}(\mathcal{P}(\widetilde{X})),d_H\big)$, where $d_H$ is induced by the Prokhorov metric $d_P^{\widetilde{X}}$.
\end{itemize}

Under the robust narrow metric $d_{M^N}^*$ of Definition~\ref{def:robust-narrow}, a sequence $(\mathbf{m}^k)_{k\ge1}\subset M^N$ converges to $\mathbf{m}\in M^N$ if and only if both components converge:
\begin{equation}\label{eq:robust-conv}
\overline{\Phi_{i,j}(\mathbf{m}^k)} \xrightarrow{H} \overline{\Phi_{i,j}(\mathbf{m})}\quad \text{for all $(i,j)$}
\quad\text{and}\quad
\mu(\mathbf{m}^k) \xrightarrow{N}   \mu(\mathbf{m}).
\end{equation}
Equivalently, $\mathbf{m}^k \to \mathbf{m}$ in $d_{M^N}^*$ if and only if
\[
d_H\!\big(\overline{\Phi_{i,j}(\mathbf{m}^k)},\overline{\Phi_{i,j}(\mathbf{m})}\big)\to 0 \quad \text{for all $(i,j)$}
\quad\text{and}\quad
d_P^{X}\!\big(\mu(\mathbf{m}^k),\mu(\mathbf{m})\big)\to 0.
\]

Note that the equilibrium concept itself (Definition~\ref{def:BNPE}) does not depend on how the mechanism space $M^N$ is topologized: a mechanism profile $\mathbf{m}^*$ constitutes a BNPE if and only if it satisfies feasibility, incentive compatibility, and principals' best-responding. These equilibrium conditions can be verified for any given $\mathbf{m}^*$ without reference to how mechanisms converge or how we measure distances between mechanisms. 

The robust narrow topology on $M^N$ provides a natural way to measure similarity between mechanism profiles in this environment: two mechanisms are close when they induce similar truthful-obedient outcome laws and similar sets of outcome laws achievable through unilateral deviations. This notion of closeness captures the two strategic dimensions relevant to each principal's correspondence of incentive-compatible mechanisms. As we will establish, when similarity is measured in this natural way, the best-response correspondence inherits the regularity properties needed for equilibrium existence.

\subsection*{Example}

Before moving on to our results, we present an example -- a generalized principle-agent team contest along the lines of the literature following \citet{nitzan1991collective} -- that satisfies Assumptions \ref{2.8assump1}--\ref{2.8assump3}.

Consider a contest involving \(N\) teams, each of which faces a generalized principal-agent problem while competing for a single prize that is divisible. Each team consists of \(n\) members. In the initial stage, each team's principal specifies a feasible and incentive-compatible team mechanism. We begin by describing the four stages of the continuation game, and then return to discuss the initial stage.

\paragraph{First (Private Type) Stage:} In the first stage, the members of each team privately realize their individual types, where the \textit{type space} is $T=[\underline{t}, \overline{t}]$, with $0<\underline{t}< \overline{t}<\infty$, and each team member's type $t_{i,j}\in T$ is an independent draw from a common probability measure $\mu \in \mathcal{P}(T)$.   

\paragraph{Second (Type Reporting) Stage:} In the second stage, the members of each team privately report their types to the team principal, where for team $j$ member $i$, $t'_{i,j}\in T$ denotes the reported type and $t_{i,j}$ denotes the true type.

\paragraph{Third (Action Reporting) Stage:}  In the third stage, the mechanism privately recommends an action from the \textit{action space} \(A = [\underline{a}, \overline{a}]\), with $0<\underline{a}< \overline{a}<\infty$, denoted as $a'_{i,j}\in A$ for team $j$ member $i$, and then the team members simultaneously choose actions, where team $j$ member $i$'s actual action is denoted as $a_{i,j}\in A$.

\paragraph{Fourth (Team Winnings) Stage:} In the fourth stage, each team's winnings are determined by competition in team outputs, where the output technology stochastically maps a team's \(n\)-tuple of types and \(n\)-tuple of actions into the team output. For a given team \(j\) profile \((\mathbf{a}_j, \mathbf{t}_j) \in A^n \times T^n\), let \(P_j(\cdot | \mathbf{a}_j, \mathbf{t}_j)\) denote team $j$'s transition probability over the \textit{output space} \(O = [0,1]\). The cumulative distribution function (CDF) \(F_j(x | \mathbf{t}_j, \mathbf{a}_j)\) is defined as:
\[
F_j(x | \mathbf{t}_j, \mathbf{a}_j) = P_j(o \leq x | \mathbf{t}_j, \mathbf{a}_j),
\]
where \(o \in O\) denotes the output. 

Teams are ranked by their output values, where output \(o\) is valued as \(V(o) = o\). In this winner-take-all contest, team winnings \(W=\{0,1\}\) are allocated according to rank, with only the highest output team receiving the prize. Team winnings take the form of a perfectly divisible monetary prize, with individual rewards in $I=[0,1]$. Within each team, the mechanism allocates the team winnings \(w \in W\) among the \(n\) team members, where the feasible set of individual reward profiles given team winnings $w$ is \(\mathcal{W}(w)=\left\{\mathbf{r}\in I^n \,\middle|\, \sum_{i=1}^n r_i \leq w\right\}\).

 Given that the contest has a single prize, the probability that a team \(j\) has the highest value output and wins the prize, conditional on the profile \(\widehat{\mathbf{a}},\, \widehat{\mathbf{t}}\), is:
\begin{equation}
 \Lambda\big(w_j = 1,\, w_{j'} = 0 \ \forall j' \neq j \mid \widehat{\mathbf{a}},\, \widehat{\mathbf{t}}\big) =\text{Prob}(j \text{ wins} \mid \widehat{\mathbf{a}},\, \widehat{\mathbf{t}})  = \int_0^1 \left( \prod_{j' \neq j} F_{j'}(x | \mathbf{a}_{j'}, \mathbf{t}_{j'}) \right) \, dF_{j}(x | \mathbf{a}_{j}, \mathbf{t}_{j}).
\label{CSFexeq}\end{equation}

We will assume that each distribution function $F_j(x | \mathbf{a}_j, \mathbf{t}_j)$ is continuous in the parameters $(\mathbf{a}_j, \mathbf{t}_j)$ for each fixed $x \in [0,1]$.\footnote{Note that for the stochastic output model in which for a team $j$'s profile \((\mathbf{a}_j, \mathbf{t}_j) \in A^n \times T^n\):
            \[
             F_j(x | \mathbf{a}_j, \mathbf{t}_j) = P_j(o \leq x | \mathbf{a}_j, \mathbf{t}_j) =x^{\left(\sum_{i=1}^{n} t_{i,j} a_{i,j} \right)},
            \]
    we can substitute this into the general winning probability expression in equation (\ref{CSFexeq}), and it follows that the prize is awarded via a ratio-form CSF, 
    \[
 \Lambda\big(w_j = 1,\, w_{j'} = 0 \ \forall j' \neq j \mid \widehat{\mathbf{a}},\, \widehat{\mathbf{t}}\big) =\text{Prob}(j \text{ wins} \mid \widehat{\mathbf{a}},\, \widehat{\mathbf{t}}) =\frac{\sum_{i=1}^n t_{i,j} a_{i,j}}{\sum_{j'=1}^N \sum_{i=1}^n t_{i, j'} a_{i,j'}}.
\] 
}

\paragraph{Principal's Initial Problems}
To complete the specification of the example, for each team $j$ member $i$, the von Neumann-Morgenstern utility function is given by
\[
u_{i,j}(r_{i,j}, a_{i,j}, t_{i,j}) = r_{i,j} - \frac{c a_{i,j}}{t_{i,j}},
\]
where \(c > 0\), and for each team $j$ principal, the von Neumann-Morgenstern utility function is given by
    \[
\pi_j(w_j) = w_j.
\]

\subsubsection*{Satisfaction of Assumptions in the Example}

We now verify that the example satisfies Assumptions \ref{2.8assump1}–\ref{2.8assump3}:

\paragraph{Ambient Space Assumptions:}
The example spaces \(T\), \(A\), \(W\), and \(I\) satisfy the \textit{Ambient Space Assumptions} (Assumption \ref{2.8assump1}). Specifically, these spaces are all compact Polish spaces -- ensuring completeness, separability, and metrizability -- and the product spaces, such as \(T^{nN}\) and \(A^n\), inherit these properties.

\begin{comment}
\paragraph{Joint Type Law Assumptions:}
In the example, each $t_{i,j}\in T$ is an independent draw from the common law $\mu\in\mathcal P(T)$. Hence $H=\overline H$ and mutual absolute continuity holds trivially, with Radon--Nikodym derivative
\[
h \;=\; \frac{dH}{d\overline H} \;\equiv\; 1
\]
on $T^{nN}$. The map $h$ is continuous (indeed constant), so Assumption~\ref{assump:absolute-continuity} is satisfied. 
\end{comment}

\paragraph{Joint Winnings Transition Probability Assumptions:}
Given the explicit formula for $\boldsymbol{\Lambda}$, each outcome $\mathbf{w}$ corresponds to exactly one team $j$ winning (that is, $w_j = 1$ and $w_{j'} = 0$ for all $j' \neq j$), satisfying the support structure in Assumption \ref{2.8assump4.5}. The probability assigned to team $j$ winning is a continuous function of the type-action profile $(\widehat{\mathbf{t}}, \widehat{\mathbf{a}})$. Since the measure $\boldsymbol{\Lambda}(\cdot \mid \widehat{\mathbf{t}}, \widehat{\mathbf{a}})$ is supported on finitely many points (the $N$ possible winners) and each point's probability varies continuously with $(\widehat{\mathbf{t}}, \widehat{\mathbf{a}})$, the transition probability satisfies pointwise narrow continuity as required in Assumption \ref{2.8assump4.5}.

\paragraph{Feasible Rewards Correspondence Assumptions:}
The feasible individual rewards correspondence $\mathcal{W}: W \twoheadrightarrow I^n$, defined by
\[
\mathcal{W}(w) = \left\{ \mathbf{r} \in I^n : \sum_{i=1}^n r_i \leq w \right\}
\]
for $w \in \{0,1\}$, satisfies the \textit{Feasible Rewards Correspondence Assumptions} (Assumption~\ref{assump:W_continuity}). Indeed, $\mathcal{W}(w)$ is nonempty (always contains $\mathbf{0} = (0,\ldots,0) \in I^n$), convex-valued, and compact-valued for each $w$. Because $w$ takes only finitely many values, both upper and lower hemicontinuity hold trivially.

\paragraph{Utility Function Assumptions:}
Lastly, the example clearly satisfies the \textit{Utility Function Assumptions} (Assumption \ref{2.8assump3}). Specifically, the von Neumann-Morgenstern utility function
\[
u_{i,j}(r_{i,j}, a_{i,j}, t_{i,j}) = r_{i,j} - \frac{c a_{i,j}}{t_{i,j}},
\]
is a continuous function that is bounded on the compact domain \(I \times A \times T\). Similarly, the von Neumann-Morgenstern utility function
    \[
\pi_j(w) = w
\]
is a bounded continuous function on $W$.

\section{Results}
Our main result is to show that there exists a Bayesian-Nash Principals' equilibrium (BNPE) of the multi-principle interaction with team production and agency concerns team game \(\Gamma(N, n, T, A, W, I, H, \Lambda, u, \{\pi_j,\}_{j=1}^N, \mathcal{W})\).

In equilibrium, each team $j$ makes use of a mechanism \((\alpha_j, \kappa_j)\) that maximizes the expected payoff of the team $j$ principal subject to: (i) prize feasibility of the mechanism \((\alpha_j, \kappa_j)\), and (ii) Bayesian incentive compatibility of the mechanism \((\alpha_j, \kappa_j)\).

\subsection{Existence of Equilibrium}
Our main result is stated as follows.
\begin{theorem}
Endow $M^N$ with the $d^*_{M^N}$ metric. If $IC(\mathbf{m}) \coloneqq \prod_{j=1}^N IC_j(\mathbf{m}_{-j})$ admits a selection, then under Assumptions \ref{2.8assump1}--\ref{2.8assump3}, for each team $j$:
\begin{itemize}
    \item $IC_j$ is nonempty, continuous, compact-valued, and convex-valued as a correspondence on the space of the other teams' mechanism profiles, and
   \item each team $j$ principal's expected payoff is continuous on $M^N$ and quasi-concave in $m_j$.
\end{itemize}
\medskip
\noindent
Therefore, there exists a BNPE.
\end{theorem}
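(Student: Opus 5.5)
The plan is to reduce the theorem to a Debreu-type social-equilibrium argument, in the form of the Glicksberg--Fan--Kakutani fixed-point theorem applied to the principals' constrained best-response correspondence. This requires three ingredients for the generalized game in which principal $j$ chooses $m_j$ from $IC_j(\mathbf{m}_{-j})$. First, a compact convex strategy space $M$ with a metric compatible with convex combinations. Second, continuous compact-convex-valued feasibility correspondences. Third, continuous objectives that are quasi-concave in own strategy. Given these, Berge's maximum theorem yields an upper hemicontinuous best-response correspondence with nonempty compact values, convex by quasi-concavity, and a fixed point is a BNPE by Definition~\ref{def:BNPE}. We therefore verify the two bullets in order and then conclude.

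\textbf{Step 1: continuity of the induced laws.} We show that $\mathbf{m}\mapsto\mu(\mathbf{m})$ and $\mathbf{m}\mapsto\overline{\Phi_{i,j}(\mathbf{m})}$ are continuous, the latter in the Hausdorff sense, and that $M^N$ is (relatively) compact under $d^*_{M^N}$. Continuity is immediate from the definition of $d^*_{M^N}$, since the metric is built from exactly these objects. Compactness follows from Prokhorov's theorem on the compact space $\mathcal{P}(X)$ (Assumption~\ref{2.8assump1}) together with compactness of the hyperspace $\mathsf{K}(\mathcal{P}(\widetilde X))$ under $d_H$ (Blaschke). Closedness of the image is the delicate point: a limit of realized pairs $(\mu(\mathbf{m}^k),\overline{\Phi(\mathbf{m}^k)})$ must itself be realized by some feasible mechanism. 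For this we would disintegrate the limit law. Assumption~\ref{2.8assump4.5} (narrow continuity of $\boldsymbol\Lambda$) preserves the $\Lambda$-factor in the limit, and the closed graph of $\mathcal{W}$ from Assumption~\ref{assump:W_continuity} preserves the support constraint under narrow limits. Convexity of $M$ was shown above. Because $\mu$ and $\widetilde\mu$ are affine in each $\alpha_j$ and $\kappa_j$ separately, convex combinations of mechanisms move the induced laws continuously.

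\textbf{Step 2: properties of $IC_j$.} Incentive compatibility says that $\int u\,d\mu(\mathbf{m})\ge\sup_{\nu\in\overline{\Phi_{i,j}(\mathbf{m})}}\int u\,d\nu$ for every $i$, with $u$ lifted to $\widetilde X$ by evaluating at the deviator's realized action. By Assumption~\ref{2.8assump3} the left side is continuous in $d^*$. The right side is a continuous function of the compact set under $d_H$: the map $C\mapsto\max_{\nu\in C}\int u\,d\nu$ is $d_H$-continuous whenever $\nu\mapsto\int u\,d\nu$ is uniformly continuous on the compact set $\mathcal{P}(\widetilde X)$. Hence the graph of $IC$ is closed, which gives upper hemicontinuity and compact values. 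Nonemptiness is exactly the hypothesis that $IC$ admits a selection. Convexity uses the fact that the on-path expected utility is linear in $m_j$ along pointwise mixtures, while the deviation value is a supremum over behavior strategies; we would argue that mixing mechanisms convexifies the deviation sets, so the supremum for the mixture is at most the mixture of the suprema. Lower hemicontinuity is the crux and is precisely where Myerson's counterexample fails. Since $d^*$ forces the deviation sets $\overline{\Phi_{i,j}}$ to converge in Hausdorff distance, a mechanism $m_j\in IC_j(\mathbf{m}_{-j})$ can be approximated along $\mathbf{m}^k_{-j}\to\mathbf{m}_{-j}$. If the IC inequality is strict, it persists by continuity. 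If it is binding, we would first mix $m_j$ slightly toward the IC selection for $\mathbf{m}^k_{-j}$ and then let the mixing weight vanish. This is the step I expect to be hardest, because it needs some slack or a uniform argument, and I would lean on the selection together with convexity.

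\textbf{Step 3: objectives and conclusion.} The principal's payoff equals $\int\pi_j\,d\mu(\mathbf{m})$, which is continuous on $M^N$ because $\pi_j\in C_b$ and $\mu$ is $d^*$-continuous. It is affine, hence quasi-concave, in $m_j$ along the pointwise mixtures that define convexity of $M$, because $\mu(\mathbf{m})$ is linear in each of $\alpha_j$ and $\kappa_j$ separately. Should joint mixtures produce bilinear terms, we would mix at the level of induced laws, exploiting the distributional representation. Finally, we apply Berge's maximum theorem to obtain an upper hemicontinuous correspondence with nonempty convex compact values on the compact convex set $\overline{M}^N$. The Fan--Glicksberg theorem, applied in a locally convex embedding such as signed measures with the narrow topology, then gives a fixed point $\mathbf{m}^*$, which is a BNPE.
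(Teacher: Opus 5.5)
Your skeleton is the paper's: Berge plus Kakutani--Fan--Glicksberg, with the narrow and Hausdorff components of $d^*_{M^N}$ giving continuity of the two terms of the IC slack. You are also right that lower hemicontinuity of $IC_j$ is the crux. But the repair you propose in Step~2 cannot work.

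\emph{Your case split is vacuous.} The truthful-obedient behavior strategy is one of the admissible $\sigma_{i,j}$, and its induced law projects onto $\mu(\mathbf{m})$, so $\sup_{\nu}\int u\,d\nu\ge\int u\,d\mu(\mathbf{m})$ always. Hence $f_{i,j}\le 0$ everywhere and IC means $f_{i,j}=0$. Every IC mechanism is binding, so there is no Slater-type slack to absorb a perturbation of $\mathbf{m}_{-j}$.

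\emph{Mixing toward an IC selection fails concretely in Myerson's example of Section~2.} Write $p_A=\alpha_1(A\mid\theta_A)$, $q_A=\alpha_1(B\mid\theta_A)$, $p_B=\alpha_1(A\mid\theta_B)$, $q_B=\alpha_1(B\mid\theta_B)$. Let $\rho$ be the probability that team~2 chooses $A$ or $B$, so a mismatched action is worth $1+\rho$ to member~1. Adding the two truth-telling constraints gives $\rho\,(q_A+p_B-p_A-q_B)\ge 0$.
\begin{itemize}
\item At $\rho=0$ (team~2 always plays $C$), the separating mechanism $p_A=q_B=1$ is IC.
\item For every $\rho>0$, each element of $IC_1$ satisfies $p_A+q_B\le q_A+p_B$, so it stays bounded away from the separating mechanism.
\item The $\varepsilon$-mixture of the separating mechanism with the always-$C$ selection has $p_A+q_B=2(1-\varepsilon)>0=q_A+p_B$, so it is infeasible for all $\varepsilon<1$.
\end{itemize}
On these finite spaces, $\mu$ and the deviation sets depend continuously (uniformly in $\sigma$) on the mechanism probabilities. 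So letting team~2 play $A$ with probability $1/k$ gives a $d^*$-convergent sequence, and this is a genuine failure of lower hemicontinuity in $d^*$.

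In fact no argument can close this step, because the statement is false as written. Embed Myerson's example as follows:
\begin{itemize}
\item give each team a second member with constant utility, whose action in $\{A,B,C\}$ plays the role of the principal's choice (obedience is trivially IC for him);
\item make the informed member's own action payoff-irrelevant;
\item take $W$ and $I$ to be singletons.
\end{itemize}
All spaces are finite, so Assumptions~\ref{2.8assump1}--\ref{2.8assump3} hold. The report-independent mechanism ``always $C$'' lies in $IC_j(\mathbf{m}_{-j})$ for every $\mathbf{m}_{-j}$, so the selection hypothesis holds, as the paper itself remarks. Yet the cycle in Section~2 shows there is no BNPE. Neither the hypotheses nor the definition of BNPE depends on the topology on $M^N$, so no choice of metric rescues the conclusion.

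The paper's own proof has exactly the hole you identified. Lemma~\ref{lem:ic-correspondence-continuous} infers continuity of $IC_j$ from continuity of $f_{i,j}$ and weak inequalities, which only yields a closed graph. A correct version would need lower hemicontinuity of $IC_j$ as an added hypothesis, as in Debreu's social-equilibrium theorem.

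Your other hedges also mark points the paper passes over.
\begin{itemize}
\item \emph{Convexity.} Pointwise mixing of $\alpha_j$ and $\kappa_j$ makes $\mu$ and $\widetilde{\mu}$ bilinear rather than affine in $m_j$; the paper's ``perfectly correlated randomization'' is not what the pointwise mixture does. So affinity of $V_{\pi_j}$ and convexity of $IC_j$ need a correlation device that $M$ lacks.
\item \emph{Compactness.} Compactness of $M^N$ under $d^*$ is asserted, not proved. Disintegrating a limit of $\mu(\mathbf{m}^k)$ does not recover the off-path values of $\alpha_j$ that shape $\overline{\Phi_{i,j}}$. When $H$ is correlated across teams, it also need not preserve the product form $\bigotimes_j\alpha_j$.
\end{itemize}
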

\medskip
Before sketching the key arguments behind Theorem 1, we note that a sufficient condition for $IC$ to admit a selection is the existence of at least one mechanism that is always incentive compatible, as in the example in \citet{myerson1982optimal}. This ensures non-emptiness of the feasible set at every mechanism profile.

The proof of Theorem 1 establishes several properties of the incentive-compatible mechanism correspondence $IC_j$ under the robust narrow topology. We focus our discussion here on the continuity of $IC_j$ as a correspondence, which is the most technically demanding component. The robust narrow topology plays a central role in establishing continuity by simultaneously tracking convergence of on-path induced laws and agents' deviation opportunities. 

For completeness, we briefly outline how the remaining properties are established, with full details provided in the Appendix: $IC_j$ is compact-valued by the compactness of the underlying strategy spaces together with the closed graph property established below; $IC_j$ is convex-valued by the linearity of agents' expected payoffs; continuity of each principal's expected payoff follows along similar lines to the continuity of $IC_j$, since payoffs are defined via integration against the truthful-obedient induced law; and quasi-concavity of the principals' expected payoffs follows from \citet[Theorem~3.1]{Balder1988}.

Now we turn to sketching the proof of the continuity of $IC_j$. To verify incentive compatibility, we must ensure that truthful reporting and obedient action-taking weakly dominate all possible unilateral combinations of misreporting and disobedience. Our approach makes use of our two outcome spaces: the baseline space $X$ containing all payoff-relevant variables, and the extended space $\widetilde{X}$ that additionally tracks one agent's strategic choices. The truthful-obedient induced law $\mu(\mathbf{m})$ lives in $\mathcal{P}(X)$, while the set of induced laws under all possible unilateral deviations by a given agent, $\Phi_{i,j}(\mathbf{m})$, lives in $\mathcal{P}(\widetilde{X})$.

This leads to a measurement problem: agent $(i,j)$'s utility depends only on payoff-relevant variables in $X$, yet her deviation possibilities generate distributions over the extended outcome space $\widetilde{X}$. To compare an agent's truthful-obedient payoff with her payoffs under unilateral deviations, we employ a projection $\mathrm{pr}_{i,j}:\widetilde{X}\to X$ that maps each extended outcome to its payoff-relevant components. This projection discards agent $(i,j)$'s type report and her recommended action, retaining only the variables that affect her utility.

Since $\mathrm{pr}_{i,j}$ is continuous, it induces a pushforward operation on probability measures: any law $\widetilde{\mu}\in\mathcal{P}(\widetilde{X})$ over strategic choices maps to a law $(\mathrm{pr}_{i,j})_*\widetilde{\mu}\in\mathcal{P}(X)$ over payoff-relevant outcomes, defined by
\[
(\mathrm{pr}_{i,j})_*\widetilde{\mu}(A) = \widetilde{\mu}(\mathrm{pr}_{i,j}^{-1}(A))
\]
for measurable sets $A \subset X$.\footnote{The preimage $\mathrm{pr}_{i,j}^{-1}(A)$ consists of all outcomes in $\widetilde{X}$ that project to $A$ when the coordinates $(t'_{i,j}, a'_{i,j})$ are dropped. Thus, $(\mathrm{pr}_{i,j})_*\widetilde{\mu}(A)$ is the total probability that $\widetilde{\mu}$ assigns to outcomes whose payoff-relevant components lie in $A$.} The pushforward operation focuses attention on payoff-relevant variables by discarding the coordinates $(t'_{i,j}, a'_{i,j})$ that record agent $(i,j)$'s report and recommendation. These strategic choices affect which outcomes occur with what probabilities under $\widetilde{\mu}$, but the pushforward measure $(\mathrm{pr}_{i,j})_*\widetilde{\mu}$ describes only the induced distribution over payoff-relevant outcomes. This operation is key to our approach—it makes the deviation set $\Phi_{i,j}(\mathbf{m})\subseteq\mathcal{P}(\widetilde{X})$ comparable with the truthful-obedient law $\mu(\mathbf{m})\in\mathcal{P}(X)$.

Using this pushforward, we can express agent $(i,j)$'s \emph{incentive compatibility slack} entirely in terms of laws on $X$:\footnote{Regarding the closure $\overline{\Phi_{i,j}(\mathbf{m})}$ in the IC slack condition: Since $u_{i,j}$ is continuous and bounded on the compact space $X$, the payoff functional $\mathcal{U}_{i,j}(\mu) = \int u_{i,j}\,d\mu$ is continuous on $\mathcal{P}(X)$. Moreover, the pushforward map $(\mathrm{pr}_{i,j})_*$ is continuous. The composition $\widetilde{\mu} \mapsto \mathcal{U}_{i,j}((\mathrm{pr}_{i,j})_*\widetilde{\mu})$ is therefore continuous on $\mathcal{P}(\widetilde{X})$. By continuity, 
$$\sup_{\widetilde{\mu} \in \Phi_{i,j}(\mathbf{m})} \mathcal{U}_{i,j}((\mathrm{pr}_{i,j})_*\widetilde{\mu}) = \sup_{\widetilde{\mu} \in \overline{\Phi_{i,j}(\mathbf{m})}} \mathcal{U}_{i,j}((\mathrm{pr}_{i,j})_*\widetilde{\mu}),$$
so the IC condition is equivalent whether we use $\Phi_{i,j}(\mathbf{m})$ or its closure. The closure ensures a maximizer exists (by compactness) without affecting the constraint.}
\begin{equation}\label{eq:ic-slack-pushforward}
f_{i,j}(\mathbf{m})
\;=\;
\int_{X} u_{i,j}\,\mathrm{d}\mu(\mathbf{m})
\;-\;
\sup_{\nu \in (\mathrm{pr}_{i,j})_*\overline{\Phi_{i,j}(\mathbf{m})}}
\int_{X} u_{i,j}\,\mathrm{d}\nu.
\end{equation}
This formulation depends on two objects: the truthful-obedient law $\mu(\mathbf{m}) \in \mathcal{P}(X)$ and the projected feasible set $(\mathrm{pr}_{i,j})_*\overline{\Phi_{i,j}(\mathbf{m})} \subseteq \mathcal{P}(X)$ containing all outcome laws achievable through strategic deviations. The structure of equation \eqref{eq:ic-slack-pushforward} reveals why the robust narrow metric is precisely tailored to ensure continuity of $f_{i,j}$.\footnote{Note that by working directly with 
ex ante expected payoffs and optimizing over the space of induced outcome measures 
$\nu \in (\mathrm{pr}_{i,j})_*\overline{\Phi_{i,j}(\mathbf{m})}$, we bypass the need to disintegrate $H$ with respect 
to its marginals. The law of iterated expectations operates at the level 
of the joint measure $H$, so no regularity of conditional distributions is required. In contrast, approaches 
that optimize interim expected payoffs require absolutely continuous type 
distributions with continuous Radon-Nikodym derivatives to guarantee that 
conditional expectations are well-defined and vary continuously in the conditioning 
variable.}

The key insight is that the robust narrow topology accounts for both terms in \eqref{eq:ic-slack-pushforward}. Its narrow topology component ensures that expectations of bounded continuous functions vary continuously in $\mathbf{m}$ when evaluated at individual measures like $\mu(\mathbf{m})$. Its Hausdorff component guarantees that the correspondence $\mathbf{m} \mapsto \overline{\Phi_{i,j}(\mathbf{m})}$ varies continuously—a property that follows from the equivalence between Hausdorff convergence and Kuratowski convergence for compact subsets in Polish spaces. Since $\widetilde{X}$ is compact Polish, the Hausdorff metric topology coincides with the Fell topology \citep[Corollary 5.1.11]{beer1993topologies}, which in turn implies both upper and lower hemicontinuity of $\mathbf{m}\mapsto \overline{\Phi_{i,j}(\mathbf{m})}$ \citep{Beer10}.

More specifically, the two components of the robust narrow topology correspond directly to the two terms in \eqref{eq:ic-slack-pushforward}:
\begin{itemize}
\item \textbf{Prokhorov distance} between truthful-obedient laws $\mu(\mathbf{m}^1)$ and $\mu(\mathbf{m}^2)$ ensures continuity of the first term, $\int_{X} u_{i,j}\,\mathrm{d}\mu(\mathbf{m})$.

\item \textbf{Hausdorff distance} between deviation sets $\overline{\Phi_{i,j}(\mathbf{m}^1)}$ and $\overline{\Phi_{i,j}(\mathbf{m}^2)}$ ensures continuity of the second term, the supremum over all outcome laws achievable through unilateral deviations.
\end{itemize}

Together, these components guarantee that $f_{i,j}(\mathbf{m})$ is continuous in $\mathbf{m}$ under the robust narrow metric via a Maximum Theorem argument: when the constraint correspondence is continuous and the objective is bounded and continuous, the supremum function is continuous. We formally establish this as Lemma~\ref{prop:ic-continuity} in the Appendix.

\section{Conclusion}
 When multiple principals simultaneously design mechanisms for their respective teams, the interdependence of their choices creates a generalized game where each principal's feasible set of incentive-compatible mechanisms depends on others' choices. As \citet{myerson1982optimal} demonstrates, such games may lack equilibrium due to discontinuities in feasible strategy correspondences. In this paper, we establish equilibrium existence for multi-principal mechanism design in environments with strategic spillovers. Our framework—which accommodates multidimensional types, actions, outputs, and rewards in environments with both adverse selection and moral hazard—provides a robust and versatile foundation for analyzing strategic environments involving multiple interacting principals. This includes settings with competing platforms, interfirm contracting networks, and hierarchical organizations where multiple decision-makers design complementary incentive schemes. By establishing when equilibria exist in these economically important environments, our results open new avenues for analyzing how strategic interaction shapes the design of incentive systems.

\section{Appendix: Proof of Theorem 1}
We begin with an overview of the proof of Theorem 1, and then in the following subsections address each of the steps of the proof in detail.

\subsection*{Overview}

The proof demonstrating the existence of a Bayesian-Nash Principal's equilibrium mechanism profile $\mathbf{m}$ is summarized as follows. First, we show that the principals' expected payoff functions $\{E_{\pi_j}(\mathbf{m})\}_{j=1}^N$ are continuous in $\mathbf{m}$, with respect to the $d^*_{M}$ metric, and quasi-concave in $m_{j}$. Second, we show that the correspondence $IC_j: M^{N-1} \twoheadrightarrow M$ is continuous with respect to the metric on $M^{N-1}$ inherited from $d^*_{M^N}$, with compact and convex values. Third, applying Berge's Maximum Theorem, it follows that the best-response correspondence $BR_j: M^{N-1} \twoheadrightarrow M$ is upper hemicontinuous with respect to the same metric, with nonempty, compact, and convex values. Finally, we apply the Kakutani-Fan-Glicksberg fixed-point Theorem to the best-response correspondence to demonstrate the existence of a Bayesian-Nash Principals' equilibrium profile of incentive-compatible mechanisms $\mathbf{m}$.

We begin with the proof that the principals' expected utility functionals are continuous and quasi-concave
%%%%%%%%%%%%%%%%%%%%
\subsection*{Continuity and Quasi-concavity of Principal's Expected Utility}
For a given mechanism profile $\mathbf{m}=\{\alpha_j,\kappa_j\}_{j=1}^N \in M^N$, 
note that the expected utility functional for the team $j$ principal may be written as
\begin{equation}
V_{\pi_j}(\mathbf{m})= \int\pi_jd\mu(\mathbf{m}).
\label{7.1eq1}\end{equation} 

\begin{lemma}\label{prinexppayofflemma}
The functional $V_{\pi_j}(\mathbf{m})$, defined in equation (\ref{7.1eq1}), is continuous in $\mathbf{m}=(m_j,\mathbf{m}_{-j})$ and concave in $m_j=\{\alpha_j,\kappa_j\}$.
\end{lemma}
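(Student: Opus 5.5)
The plan is to handle the two claims separately. Continuity will follow from the way $d^*_{M^N}$ is built, since it controls the truthful-obedient law $\mu(\mathbf{m})$ directly. Concavity will come from showing that $\mu(\mathbf{m})$ depends affinely on $m_j$, once the convex structure on team $j$'s mechanisms is chosen correctly.

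\emph{Continuity.} Regard $\pi_j$ as a function on $X=T^{nN}\times A^{nN}\times W^N\times I^{nN}$ by composing it with the continuous projection that keeps only $\mathbf{r}_j$ among the reward coordinates. By Assumption~\ref{2.8assump3}, the resulting function $\tilde\pi_j$ is bounded and continuous on $X$, so $\tilde\pi_j\in C_b(X)$. Let $\mathbf{m}^k\to\mathbf{m}$ in $d^*_{M^N}$. By Definition~\ref{def:robust-narrow}, $d_P^X(\mu(\mathbf{m}^k),\mu(\mathbf{m}))\le d^*_{M^N}(\mathbf{m}^k,\mathbf{m})\to0$. Since the Prokhorov metric metrizes the narrow topology on the Polish (indeed compact) space $X$, this means $\mu(\mathbf{m}^k)\xrightarrow{N}\mu(\mathbf{m})$. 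By Definition~\ref{def:narrow-topology}, $V_{\pi_j}(\mathbf{m}^k)=\int\tilde\pi_j\,d\mu(\mathbf{m}^k)\to\int\tilde\pi_j\,d\mu(\mathbf{m})=V_{\pi_j}(\mathbf{m})$. No Young-measure or Hausdorff arguments are needed here; the Hausdorff component of $d^*$ only makes the topology finer.

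\emph{Concavity in $m_j$: the main obstacle.} Fix $\mathbf{m}_{-j}$. The naive route is to take the pointwise mixture $(\alpha_j^\lambda,\kappa_j^\lambda)$ used earlier to show $M$ is convex. The difficulty is that $\mu(\mathbf{m})$ composes $\alpha_j$ with $\kappa_j$, so it is bilinear in the pair, and the pointwise mixture generates cross terms $\lambda(1-\lambda)\,\alpha_j\otimes\kappa_j'$. Hence $V_{\pi_j}$ need not be affine, or even concave, along that segment.

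To fix this I would use the distributional convex structure in the spirit of \citet{Balder1988} and \citet*{kadan2017}: a mixture of two mechanisms should induce the mixture of their laws. I keep $\alpha_j^\lambda=\lambda\alpha_j+(1-\lambda)\alpha_j'$. For the reward kernel I use the posterior-weighted mixture
\[
\kappa_j^{\lambda}(\cdot\mid \mathbf{t}_j,\mathbf{a}'_j,w_j)=\beta(\mathbf{t}_j,\mathbf{a}'_j)\,\kappa_j(\cdot\mid \mathbf{t}_j,\mathbf{a}'_j,w_j)+\bigl(1-\beta(\mathbf{t}_j,\mathbf{a}'_j)\bigr)\kappa_j'(\cdot\mid \mathbf{t}_j,\mathbf{a}'_j,w_j),
\]
with $\beta=\lambda\,\frac{d\alpha_j(\cdot\mid\mathbf{t}_j)}{d\alpha_j^\lambda(\cdot\mid\mathbf{t}_j)}$. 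This Radon--Nikodym derivative exists because $\alpha_j\ll\alpha_j^\lambda$, and it can be chosen jointly measurable in $(\mathbf{t}_j,\mathbf{a}'_j)$ by a standard measurable-density argument on the Polish space $A^n$.

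Two checks then finish the argument:
\begin{enumerate}
\item[(i)] $(\alpha^\lambda_j,\kappa^\lambda_j)\in M$. Each $\kappa_j^\lambda(\cdot\mid x)$ is a convex combination of measures supported in the convex set $\mathcal{W}(w_j)$, so Assumption~\ref{assump:W_continuity} keeps its support in $\mathcal{W}(w_j)$.
\item[(ii)] $\alpha_j^\lambda(d\mathbf{a}'_j\mid\mathbf{t}_j)\,\kappa^\lambda_j=\lambda\,\alpha_j\kappa_j+(1-\lambda)\,\alpha_j'\kappa_j'$ as measures. The winnings kernel $\boldsymbol{\Lambda}(d\mathbf{w}\mid\widehat{\mathbf{t}},\widehat{\mathbf{a}}')$ sits between $\alpha_j$ and $\kappa_j$ in \eqref{eq:mu}, but it does not depend on $m_j$ and $\beta$ does not depend on $\mathbf{w}$. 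Fubini in \eqref{eq:mu} therefore gives $\mu(m_j^\lambda,\mathbf{m}_{-j})=\lambda\mu(m_j,\mathbf{m}_{-j})+(1-\lambda)\mu(m_j',\mathbf{m}_{-j})$.
\end{enumerate}
It follows that $V_{\pi_j}$ is affine, hence concave (and quasi-concave), in $m_j$ along this convex structure.

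The step I expect to be delicate is reconciling this structure with the one used for convexity of $M$ and of $IC_j$. I would state that all convexity claims in the proof of Theorem 1 refer to this mixture. It is natural there too: by the same Fubini computation with $\sigma_{i,j}$ inserted, it makes $\widetilde\mu(\cdot,\sigma_{i,j})$ affine in $m_j$ for every fixed deviation $\sigma_{i,j}$, which is what the later convexity arguments require.
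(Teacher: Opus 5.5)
\textbf{Verdict.} Your proof is correct, and for concavity it takes a genuinely different route from the paper. That route also repairs a step that fails in the paper's own argument.

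\textbf{Continuity.} This part is the paper's argument: $d_P^X(\mu(\mathbf{m}^k),\mu(\mathbf{m}))\le d^*_{M^N}(\mathbf{m}^k,\mathbf{m})$, and $\pi_j$ is bounded and continuous.

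\textbf{Concavity.} The paper uses the pointwise mixture $\bigl(\lambda\alpha_j+(1-\lambda)\alpha_j',\,\lambda\kappa_j+(1-\lambda)\kappa_j'\bigr)$, calls it a ``perfectly correlated randomization,'' and asserts \eqref{eq:induced-measure-convex}. Your diagnosis of this is right: pointwise mixing draws the reward rule independently of the recommendation rule. Because \eqref{eq:mu} is bilinear in $(\alpha_j,\kappa_j)$, writing $\mu(\alpha,\kappa)$ for $\mu\bigl((\alpha,\kappa),\mathbf{m}_{-j}\bigr)$, one gets
\[
\mu(m_j^\lambda,\mathbf{m}_{-j})=\lambda\,\mu(\alpha_j,\kappa_j)+(1-\lambda)\,\mu(\alpha_j',\kappa_j')-\lambda(1-\lambda)\bigl[\mu(\alpha_j,\kappa_j)+\mu(\alpha_j',\kappa_j')-\mu(\alpha_j,\kappa_j')-\mu(\alpha_j',\kappa_j)\bigr].
\]
So \eqref{eq:induced-measure-convex} fails in general, and $V_{\pi_j}$ can violate even quasi-concavity along the pointwise segment. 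For example, take constant kernels $\delta_{\mathbf{a}^1},\delta_{\mathbf{r}^1}$ versus $\delta_{\mathbf{a}^2},\delta_{\mathbf{r}^2}$, and let $\pi_j$ pay $1$ on matched action--reward pairs and $0$ on mismatched ones. Then $V_{\pi_j}=1$ at both endpoints and $1/2$ at $\lambda=1/2$.

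Your posterior-weighted kernel is exactly what implements the correlated randomization the paper has in mind; it is a Kuhn-type behavioral realization of the mixed mechanism. Your Fubini step is sound: $\beta$ depends only on $(\mathbf{t}_j,\mathbf{a}'_j)$, while $\boldsymbol{\Lambda}$, the other teams' kernels and, in \eqref{eq:mu-sigma}, the deviation kernels do not involve $m_j$.

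\textbf{What each structure buys.} The pointwise combination is the linear operation on pairs of kernels. It is the one used to show $M$ is convex and the one the Kakutani--Fan--Glicksberg step implicitly relies on, but along it the law is quadratic in $\lambda$. Your mixture makes $\mu$ and each $\widetilde{\mu}(\cdot,\sigma_{i,j})$ affine in $m_j$. Hence $V_{\pi_j}$ is affine and each slack $f_{i,j}$ is concave (not affine, because of the supremum), which suffices for convexity of $IC_j$. However, your mixture is not linear in $(\alpha_j,\kappa_j)$, and it is determined only up to outcome equivalence. Declaring that all convexity claims refer to it is enough for this lemma. For the fixed-point step you would still need to embed outcome-equivalence classes of mechanisms affinely in a locally convex space, for example through the laws they induce.

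\textbf{Two minor points.}
\begin{enumerate}
\item Take a $[0,1]$-valued version of the density, which is possible since $\lambda\alpha_j\le\alpha_j^\lambda$, so that $\kappa_j^\lambda$ is a probability kernel.
\item The support check in (i) does not need convexity of $\mathcal{W}(w_j)$: the support of a mixture of two measures lies in the union of their supports.
\end{enumerate}
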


The proof of the continuity and quasiconcavity of $V_{\pi_j}(\mathbf{m})$ follows along similar lines as \citet[Theorem 3.1]{Balder1988}, which makes use of his Theorem 2.2 and Lemma 3.2. 

Beginning with the continuity of $V_{\pi_j}(\mathbf{m})$, by Definition~\ref{def:robust-narrow}, the metric $d_{M^N}^*$ metrizes the robust narrow topology so that $\mathbf{m}^k\to \mathbf{m}$ implies $\mu(\mathbf{m}^k) \xrightarrow{N}\mu(\mathbf{m})$ (narrow convergence). Since $\pi_j$ is a bounded continuous function, it follows that
$\int \pi_j\, d\mu(\mathbf{m}^k)\to \int \pi_j\, d\mu(\mathbf{m})$, i.e., $V_{\pi_j}(\mathbf{m}^k)\to V_{\pi_j}(\mathbf{m})$ and therefore $V_{\pi_j}(\mathbf{m})$ is continuous in $\mathbf{m}$.

Next, the quasi-concavity of $V_{\pi_j}$ in $m_j$ can be demonstrated as follows. For any $m_j = (\alpha_j, \kappa_j)$, $m_j' = (\alpha_j', \kappa_j') \in M$, and $\lambda \in [0,1]$, define the convex combination
\[
m_j^{\lambda}
=
\lambda m_j + (1-\lambda)m_j'
=
\big(\lambda\,\alpha_j+(1-\lambda)\,\alpha_j',\;\lambda\,\kappa_j+(1-\lambda)\,\kappa_j'\big).
\]
Note that this convex combination represents a perfectly correlated randomization over mechanism pairs: with probability $\lambda$ the principal offers mechanism $m_j = (\alpha_j, \kappa_j)$ and with probability $(1-\lambda)$ offers mechanism $m_j' = (\alpha_j', \kappa_j')$, where the same random draw determines both the action recommendation rule and the reward rule. This ensures that the induced measure on the outcome space is a convex combination of probability measures:
\begin{equation}\label{eq:induced-measure-convex}
\mu(m_j^{\lambda}, \mathbf{m}_{-j}) = \lambda \mu(m_j, \mathbf{m}_{-j}) + (1-\lambda)\mu(m_j', \mathbf{m}_{-j}).
\end{equation}
Therefore, by linearity of integration with respect to convex combinations of measures:
\begin{align}
V_{\pi_j}(m_j^{\lambda},\mathbf{m}_{-j})
&= \int \pi_j\, d\mu(m_j^{\lambda}, \mathbf{m}_{-j})\nonumber\\
&= \int \pi_j\, d[\lambda \mu(m_j, \mathbf{m}_{-j}) + (1-\lambda)\mu(m_j', \mathbf{m}_{-j})]\nonumber\\
&= \lambda \int \pi_j\, d\mu(m_j, \mathbf{m}_{-j}) + (1-\lambda) \int \pi_j\, d\mu(m_j', \mathbf{m}_{-j})\nonumber\\
&= \lambda\,V_{\pi_j}(m_j,\mathbf{m}_{-j})
+(1-\lambda)\,V_{\pi_j}(m_j',\mathbf{m}_{-j}),
\label{eq:bilinear-expansion}
\end{align}
establishing that $m_j\mapsto V_{\pi_j}(m_j,\mathbf{m}_{-j})$ is affine, and hence concave and quasiconcave, on the convex set $M$. 

%%%%%%%%%%%%%%%%%%%
\section*{IC Correspondence: Continuity, Compactness, and Convex-Valuedness}

\subsection*{Notation Summary}
For the reader's convenience, we review the key notation used in this subsection of the appendix. Complete definitions and all assumptions appear in Sections~\ref{modelsection}--\ref{assumptionssection} of the main text.

Consider an agent $(i,j)$ facing a mechanism profile $\mathbf{m}$. The agent 
observes her true type and chooses a type report, then observes her recommended action and chooses an action. Recall that:
\begin{itemize}
\item $X$ denotes the \emph{baseline outcome space}, consisting of all 
agents' types, implemented actions, team winnings, and individual rewards. This is the 
payoff-relevant state space.

\item Each agent $(i,j)$'s utility $u_{i,j}:X\to\mathbb{R}$ depends only on 
outcomes in the baseline space $X$.

\item $\mu(\mathbf{m})\in\mathcal{P}(X)$ denotes the truthful-obedient induced law on the baseline 
outcome space.

\item $\widetilde{X}$ denotes the \emph{extended outcome space}, which augments $X$ by 
additionally recording one agent's type report and the mechanism's action recommendation.

\item $\Phi_{i,j}(\mathbf{m})\subseteq \mathcal{P}(\widetilde{X})$ denotes the set of 
induced laws on the extended outcome space achievable by an agent $(i,j)$ behavior strategy, while all other agents remain truthful 
and obedient.

\item $\mathrm{pr}_{i,j}:\widetilde{X}\to X$ denotes the projection that maps 
the extended outcome space to the baseline outcome space by replacing agent $(i,j)$'s recommended action 
with her actual chosen action and discarding all type reports and action recommendations.

\item The pushforward measure is defined via the preimage of sets: for any measurable set $B\subseteq X$,
\[
\bigl[(\mathrm{pr}_{i,j})_*\widetilde{\mu}\bigr](B) = \widetilde{\mu}\bigl(\mathrm{pr}_{i,j}^{-1}(B)\bigr),
\]
where $\mathrm{pr}_{i,j}^{-1}(B) = \{\widetilde{x} \in \widetilde{X} : \mathrm{pr}_{i,j}(\widetilde{x}) \in B\}$. By marginalizing over components (reports and recommendations) that are not payoff-relevant, the pushforward induces the law on the (payoff-relevant) baseline outcome space implied by the law $\widetilde{\mu}$ on the extended outcome space.

\end{itemize}

\subsection*{IC Slack}

Fix a mechanism profile $\mathbf{m}\in M^N$. Recall from equation (\ref{eq:ic-slack-pushforward}) that agent $(i,j)$'s \emph{incentive compatibility slack} under $\mathbf{m}$ is:
\begin{equation}\tag{\ref{eq:ic-slack-pushforward}}
f_{i,j}(\mathbf{m})
\;=\;
\int_{X} u_{i,j}\,\mathrm{d}\mu(\mathbf{m})
\;-\;
\sup_{\nu \in (\mathrm{pr}_{i,j})_*\overline{\Phi_{i,j}(\mathbf{m})}}
\int_{X} u_{i,j}\,\mathrm{d}\nu.
\end{equation}
\begin{comment}
\begin{equation}\label{eq:ic-slack}
f_{i,j}(\mathbf{m})
\;=\;
\int_{X} u_{i,j}(x)\,\mathrm{d}\mu(\mathbf{m})(x)
\;-\;
\sup_{\widetilde{\mu}\in\overline{\Phi_{i,j}(\mathbf{m})}}
\int_{\widetilde{X}} \widetilde{u}_{i,j}(\widetilde{x})\,\mathrm{d}\widetilde{\mu}(\widetilde{x}).
\end{equation}
\end{comment}
The first term on the right-hand side of equation (\ref{eq:ic-slack-pushforward}) is agent $(i,j)$'s expected utility from truth-telling and obedience. 
The second term on the right-hand side of equation (\ref{eq:ic-slack-pushforward}) is her maximal achievable utility over all possible deviations. The mechanism profile $\mathbf{m}$ is incentive compatible for agent $(i,j)$ if and only 
if $f_{i,j}(\mathbf{m})\geq 0$.

\subsection*{Continuity of Team $j$'s Correspondence of IC Mechanisms}
We begin by establishing that each team $j$'s IC correspondence is continuous with respect to the robust narrow metric $d_M^*$ on mechanism profiles. We then turn to establishing that it is compact-valued and convex-valued.

\begin{lemma}\label{prop:ic-continuity}
The incentive compatibility 
slack $f_{i,j}:M^N\to\mathbb{R}$ is continuous with respect to the robust narrow 
metric $d_{M^N}^*$.
\end{lemma}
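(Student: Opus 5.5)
The plan is to split $f_{i,j}$ into its two terms and show each is continuous along an arbitrary sequence $\mathbf{m}^k\to\mathbf{m}$ in $d^*_{M^N}$. Since $d^*_{M^N}$ is a metric, sequential continuity suffices. By \eqref{eq:robust-conv}, convergence in $d^*_{M^N}$ gives two facts at once: $\mu(\mathbf{m}^k)\xrightarrow{N}\mu(\mathbf{m})$, and $\overline{\Phi_{i,j}(\mathbf{m}^k)}\xrightarrow{H}\overline{\Phi_{i,j}(\mathbf{m})}$ for every $(i,j)$.

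\emph{First term.} By Assumption~\ref{2.8assump3}, $u_{i,j}$ is bounded and continuous on $X$. Narrow convergence of $\mu(\mathbf{m}^k)$ then gives $\int_X u_{i,j}\,\mathrm{d}\mu(\mathbf{m}^k)\to\int_X u_{i,j}\,\mathrm{d}\mu(\mathbf{m})$ directly from Definition~\ref{def:narrow-topology}, exactly as in the continuity part of Lemma~\ref{prinexppayofflemma}.

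\emph{Second term.} I rewrite the supremum on the extended space as
\[
S(\mathbf{m})\coloneqq\sup_{\widetilde{\mu}\in\overline{\Phi_{i,j}(\mathbf{m})}} G(\widetilde{\mu}),\qquad G(\widetilde{\mu})\coloneqq\int_{\widetilde{X}} u_{i,j}\circ\mathrm{pr}_{i,j}\,\mathrm{d}\widetilde{\mu}.
\]
This uses the change-of-variables formula for pushforwards. The projection $\mathrm{pr}_{i,j}$ is continuous, so $u_{i,j}\circ\mathrm{pr}_{i,j}$ is bounded and continuous on $\widetilde{X}$, and hence $G$ is narrowly continuous on $\mathcal{P}(\widetilde{X})$. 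Because $\widetilde{X}$ is compact (Assumption~\ref{2.8assump1}), $\mathcal{P}(\widetilde{X})$ is compact metric under $d_P^{\widetilde{X}}$. It follows that $G$ is \emph{uniformly} continuous: for every $\varepsilon>0$ there is $\delta>0$ with $|G(\widetilde{\mu})-G(\widetilde{\nu})|<\varepsilon$ whenever $d_P^{\widetilde{X}}(\widetilde{\mu},\widetilde{\nu})<\delta$.

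Uniform continuity lets me avoid a general Maximum Theorem argument. Suppose $d_H\big(\overline{\Phi_{i,j}(\mathbf{m}^k)},\overline{\Phi_{i,j}(\mathbf{m})}\big)<\delta$. Each element of either set then lies within $\delta$ of some element of the other set, so the values of $G$ on the two sets can each be matched to within $\varepsilon$. Taking suprema over each set gives $|S(\mathbf{m}^k)-S(\mathbf{m})|\le\varepsilon$. In other words, the supremum of a uniformly continuous function is Lipschitz in the Hausdorff distance with respect to the modulus of continuity of $G$. Both closures are compact, as noted after Definition~\ref{def:hausdorff}, so the suprema are attained and finite. Combining the two terms yields $f_{i,j}(\mathbf{m}^k)\to f_{i,j}(\mathbf{m})$.

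The main obstacle is bookkeeping rather than analysis. I need to check that $\mathrm{pr}_{i,j}$ is continuous as the paper describes it: it replaces the recommended action with the chosen action and drops the report, and it is a coordinate map on a product of compact metric spaces, so this holds. I also need the identity $\sup_{\nu\in(\mathrm{pr}_{i,j})_*\overline{\Phi}}\int u\,\mathrm{d}\nu=\sup_{\widetilde{\mu}\in\overline{\Phi}}G(\widetilde{\mu})$, which follows from the definition of pushforward. Once these are in place, all the continuity is supplied by the definition of the robust narrow metric, and no properties of $\Lambda$, $\mathcal{W}$, or $H$ are needed at this step.
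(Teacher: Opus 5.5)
Your argument is correct. Step 1 is identical to the paper's, so the difference lies only in how you handle the deviation term.

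The paper's route for that term runs as follows:
\begin{itemize}
\item It passes from Hausdorff convergence of $\overline{\Phi_{i,j}(\mathbf{m}^k)}$ to Fell/Kuratowski convergence.
\item It reads this as upper and lower hemicontinuity of $\mathbf{m}\mapsto\overline{\Phi_{i,j}(\mathbf{m})}$.
\item It argues that the continuous pushforward $(\mathrm{pr}_{i,j})_*$ carries this over, so the projected sets converge in Hausdorff distance in $\mathcal{P}(X)$.
\item It then invokes the Maximum Theorem.
\end{itemize}

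You instead pull the payoff back to $\widetilde{X}$ as $u_{i,j}\circ\mathrm{pr}_{i,j}$, which removes any need to say something about the projected sets. You also replace the Maximum Theorem by Heine--Cantor: $G$ is uniformly continuous on the compact metric space $(\mathcal{P}(\widetilde{X}),d_P^{\widetilde{X}})$. The definition of $d_H$ alone then yields $|S(\mathbf{m}^1)-S(\mathbf{m}^2)|\le\varepsilon$ whenever $d_H\bigl(\overline{\Phi_{i,j}(\mathbf{m}^1)},\overline{\Phi_{i,j}(\mathbf{m}^2)}\bigr)<\delta$.

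What your route buys is that it is self-contained, needing no hyperspace-topology citations, and quantitative. Since $\mu\mapsto\int_X u_{i,j}\,\mathrm{d}\mu$ is likewise uniformly continuous on the compact space $\mathcal{P}(X)$, you actually get that $f_{i,j}$ is uniformly continuous with respect to $d^*_{M^N}$. The estimate also needs no sequences, so it does not matter whether $d^*_{M^N}$ separates points.

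The paper's formulation, in exchange, presents the deviation sets as a continuous correspondence. That fits the Berge-type machinery it reuses later for the best-response correspondence. It also makes explicit the conceptual role of the Hausdorff component of the metric: guaranteeing both upper and lower hemicontinuity of agents' deviation opportunities.
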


\begin{proof} The proof of Lemma \ref{prop:ic-continuity} consists of two steps.

\emph{Step 1 (truthful-obedient utility):} The first term in \eqref{eq:ic-slack-pushforward}, 
$\int_X u_{i,j}\,\mathrm{d}\mu(\mathbf{m})$, is continuous in $\mathbf{m}$ because 
by Definition~\ref{def:robust-narrow}, the metric $d_{M^N}^*$ metrizes the robust narrow topology so that $\mathbf{m}^k\to \mathbf{m}$ implies $\mu(\mathbf{m}^k) \xrightarrow{N}\mu(\mathbf{m})$ (narrow convergence). Since $u_{i,j}$ is a bounded continuous function, it follows that $\int_X u_{i,j}\,\mathrm{d}\mu(\mathbf{m}^k) \to \int_X u_{i,j}\,\mathrm{d}\mu(\mathbf{m})$.

\emph{Step 2 (unilateral-deviation utility):} For the second term in \eqref{eq:ic-slack-pushforward}, by Definition~\ref{def:robust-narrow}, $\mathbf{m}^k\to \mathbf{m}$ implies $d_H(\overline{\Phi_{i,j}(\mathbf{m}^k)}, \overline{\Phi_{i,j}(\mathbf{m})}) \to 0$ (Hausdorff convergence). Since $\widetilde{X}$ is compact Polish, the Hausdorff metric topology coincides with the Fell topology \citep[Corollary 5.1.11]{beer1993topologies}, and for compact subsets, convergence in the Fell topology is equivalent to Kuratowski convergence \citep{Beer10}. This implies that the correspondence $\mathbf{m}\mapsto \overline{\Phi_{i,j}(\mathbf{m})}$ is both upper and lower hemicontinuous.

Since $\mathrm{pr}_{i,j}:\widetilde{X}\to X$ is continuous, the pushforward map $(\mathrm{pr}_{i,j})_*:\mathcal{P}(\widetilde{X})\to\mathcal{P}(X)$ is continuous and preserves this hemicontinuity, so that
\[
d_H\!\left((\mathrm{pr}_{i,j})_*\overline{\Phi_{i,j}(\mathbf{m}^k)},\,
(\mathrm{pr}_{i,j})_*\overline{\Phi_{i,j}(\mathbf{m})}\right) \to 0.
\]
Since $u_{i,j}$ is bounded and continuous, the supremum functional $\sup_{\nu \in (\mathrm{pr}_{i,j})_*\overline{\Phi_{i,j}(\mathbf{m})}} \int_X u_{i,j}\,\mathrm{d}\nu$ is continuous with respect to Hausdorff convergence. Therefore,
\[
\sup_{\nu \in (\mathrm{pr}_{i,j})_*\overline{\Phi_{i,j}(\mathbf{m}^k)}} \int_X u_{i,j}\,\mathrm{d}\nu
\to
\sup_{\nu \in (\mathrm{pr}_{i,j})_*\overline{\Phi_{i,j}(\mathbf{m})}} \int_X u_{i,j}\,\mathrm{d}\nu.
\]
Note that this argument is essentially an application of the Maximum Theorem \citep[Chapter~E, Section~3]{Ok2007}, which guarantees continuity of the maximum when the constraint correspondence is continuous and the objective function is bounded and continuous. As established in footnote~24, the maximum exists and equals the supremum due to compactness of $\overline{\Phi_{i,j}(m)}$ and continuity of the objective.

Since both terms in \eqref{eq:ic-slack-pushforward} converge, it follows that $f_{i,j}(\mathbf{m}^k)\to f_{i,j}(\mathbf{m})$. Therefore $f_{i,j}(\mathbf{m})$ is continuous in $\mathbf{m}$, and that completes the proof. \end{proof}

Next, note that team $j$'s correspondence of IC mechanisms is given by:
\[
\mathrm{IC}_j(\mathbf{m}_{-j}) \;=\; \{\, m_j \in M : f_{i,j}(m_j,\mathbf{m}_{-j}) \geq 0 \text{ for all } i \text{ in team } j\,\}.
\]
Since $\mathrm{IC}_j$ is defined by continuous functions (by Lemma~\ref{prop:ic-continuity}) and weak inequalities, we have the following result.
\begin{lemma}\label{lem:ic-correspondence-continuous}
The correspondence $\mathrm{IC}_j: M^{N-1} \twoheadrightarrow M$ is continuous with respect to the metric on $M^{N-1}$ inherited from $d_{M^N}^*$.
\end{lemma}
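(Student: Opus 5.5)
The plan is to split continuity of $\mathrm{IC}_j$ into upper and lower hemicontinuity and to handle them by different arguments. Throughout, write $\mathrm{IC}_j(\mathbf{m}_{-j})=\bigcap_{i}\{m_j\in M: f_{i,j}(m_j,\mathbf{m}_{-j})\ge 0\}$ and use Lemma~\ref{prop:ic-continuity}, which says each $f_{i,j}$ is continuous on $(M^N,d^*_{M^N})$.

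\emph{Upper hemicontinuity.} I would prove that the graph of $\mathrm{IC}_j$ is closed and then use compactness of the range.
\begin{enumerate}
\item Take $\mathbf{m}_{-j}^k\to\mathbf{m}_{-j}$ and $m_j^k\in\mathrm{IC}_j(\mathbf{m}_{-j}^k)$ with $m_j^k\to m_j$. Then $(m_j^k,\mathbf{m}_{-j}^k)\to(m_j,\mathbf{m}_{-j})$ in $d^*_{M^N}$.
\item By continuity of $f_{i,j}$, $f_{i,j}(m_j,\mathbf{m}_{-j})=\lim_k f_{i,j}(m_j^k,\mathbf{m}_{-j}^k)\ge 0$ for every $i$, so the graph is closed.
\item For the closed graph to give upper hemicontinuity, I need $M$ to be (relatively) compact under $d^*_{M^N}$. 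I would get this from compactness of $\mathcal{P}(X)$ and of $\mathsf{K}(\mathcal{P}(\widetilde X))$ under the Hausdorff metric, both on compact Polish spaces. This embeds $M^N$, modulo the identification made by the metric, into a compact metric space. I would pass to limits there, which is the step where the paper's claim that ``$IC_j$ is compact-valued by compactness of strategy spaces'' gets used.
\end{enumerate}

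\emph{Lower hemicontinuity.} This is where I expect the main obstacle. Sets of the form $\{f\ge 0\}$ with $f$ continuous need not vary lower hemicontinuously; indeed this is exactly the failure in the Myerson example. The plan is a concavity-plus-Slater argument.
\begin{enumerate}
\item Show that $m_j\mapsto f_{i,j}(m_j,\mathbf{m}_{-j})$ is concave on the convex set $M$.
  \begin{itemize}
  \item The first term $\int u_{i,j}\,d\mu(m_j,\mathbf{m}_{-j})$ is affine in $m_j$ by \eqref{eq:induced-measure-convex}.
  \item For each fixed behavior strategy $\sigma_{i,j}$, the law $\widetilde\mu((m_j^\lambda,\mathbf{m}_{-j}),\sigma_{i,j})$ is the same convex combination of the two endpoint laws, by the same perfectly correlated randomization argument. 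Hence the deviation value is a supremum of affine functions and is convex in $m_j$.
  \item So $f_{i,j}$ is affine minus convex, i.e.\ concave. This also yields convex-valuedness of $\mathrm{IC}_j$.
  \end{itemize}
\item Use the assumed selection of $\mathrm{IC}$ to obtain a reference mechanism $\bar m_j$ that is IC along the sequence. I would first try to strengthen this to a Slater point, with $f_{i,j}(\bar m_j,\cdot)>0$ near $\mathbf{m}_{-j}$.
\item Given $m_j\in\mathrm{IC}_j(\mathbf{m}_{-j})$ and $\mathbf{m}_{-j}^k\to\mathbf{m}_{-j}$, set $m_j^{\lambda}=\lambda \bar m_j+(1-\lambda)m_j$.
  \begin{itemize}
  \item Concavity gives $f_{i,j}(m_j^\lambda,\mathbf{m}_{-j})\ge \lambda f_{i,j}(\bar m_j,\mathbf{m}_{-j})>0$ for $\lambda>0$.
  \item By continuity, $f_{i,j}(m_j^\lambda,\mathbf{m}_{-j}^k)\ge 0$ for all large $k$.
  \item Let $\lambda_k\downarrow 0$ slowly enough that $m_j^{\lambda_k}\in\mathrm{IC}_j(\mathbf{m}_{-j}^k)$. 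Then $m_j^{\lambda_k}\to m_j$ in $d^*_{M^N}$, which is the required approximating sequence.
  \end{itemize}
\item This requires showing that $\lambda\mapsto(m_j^\lambda,\mathbf{m}_{-j}^k)$ is continuous in $d^*_{M^N}$ uniformly in $k$. For the $\mu$-component this is immediate from affinity. For the $\Phi$-component, convexity of $\Phi$ in $\lambda$ should give $d_H\le$ const$\cdot\lambda$, since the Prokhorov distance between $\mu$ and $(1-\lambda)\mu+\lambda\nu$ is at most $\lambda$.
\end{enumerate}

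If a strict Slater point is not available from the hypotheses, this step is where the argument could break, and the theorem would need that extra condition made explicit.
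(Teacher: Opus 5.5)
\textbf{Verdict: genuine gap in the lower-hemicontinuity step (the strict-slack point you flag), and it cannot be closed from the stated hypotheses; as stated, the lemma fails.} You are right that this is the real obstacle. The selection hypothesis only gives $f_{i,j}(\bar m_j,\cdot)\ge 0$, and the natural selection has zero slack.

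A counterexample is the example of Section~2, embedded in this framework. Let a second team member with constant utility carry out the recommended action, let member 1's utility ignore her own action, and take $W$ and $I$ to be singletons. All spaces are finite, so Assumptions~\ref{2.8assump1}--\ref{2.8assump3} hold. The report-independent mechanism ``always $C$'' is IC against every opposing mechanism, so $\mathrm{IC}$ admits a selection, but that selection has zero slack.

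In a finite model, $\mu(\mathbf m)$ and $\widetilde\mu(\mathbf m,\sigma_{i,j})$ are polynomials in finitely many probabilities, and $\sigma_{i,j}$ ranges over a fixed product of simplices. So Euclidean convergence of mechanisms implies $d^*_{M^N}$-convergence, and the robust metric adds nothing there.

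\begin{itemize}
\item Let team 2 recommend $A$ with probability $1/k$ and $C$ otherwise, so $\mathbf m_{-1}^k$ converges to ``always $C$''.
\item The mechanism $m_1$ that recommends $A$ after $\theta_A$ and $B$ after $\theta_B$ is IC at the limit, where $E[z_1]=1$.
\item When $E[z_1]=1+1/k>1$, adding the two misreporting constraints shows that every $m_1'\in\mathrm{IC}_1(\mathbf m_{-1}^k)$ satisfies $\alpha_1'(A|\theta_A)-\alpha_1'(B|\theta_A)\le\alpha_1'(A|\theta_B)-\alpha_1'(B|\theta_B)$.
\item For $m_1$ the two sides equal $1$ and $-1$. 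Hence every such $m_1'$ has a truthful law bounded away from $\mu(m_1,\mathbf m_{-1})$ in Prokhorov distance, and lower hemicontinuity fails.
\end{itemize}

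Since this game has no principals' equilibrium, Theorem~1 also fails as stated. The paper's own argument (that $\mathrm{IC}_j$ is defined by continuous functions and weak inequalities) only delivers a closed graph, so it has exactly the hole you identified. A uniform strict-slack condition like yours must be imposed as an extra hypothesis; it cannot be derived.

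\textbf{Other steps that need repair, even granting a strict-slack point.}

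\emph{Upper hemicontinuity.} Compactness of $\mathcal P(X)\times\mathsf K(\mathcal P(\widetilde X))^{nN}$ only produces a limit pair. You must still show that this pair equals $(\mu(\mathbf m),(\overline{\Phi_{i,j}(\mathbf m)})_{i,j})$ for some $\mathbf m\in M^N$, i.e., that the image of $M^N$ is closed. Under Assumptions~\ref{2.8assump1}--\ref{2.8assump3} this can fail. Take two teams with perfectly correlated types ($t_1=t_2$ uniform on $[0,1]$), and rules recommending $a_0$ or $a_1$ according to the parity of $\lfloor kt_j\rfloor$. Their truthful laws converge to a law in which the two recommendations coincide almost surely, yet each is uniform and independent of the type. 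No product kernel $\alpha_1\otimes\alpha_2$ generates that law. The paper simply asserts that $M$ is compact.

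\emph{Topologies.} $d^*_{M^N}$ is not a product metric. So separate convergence $m_j^k\to m_j$ and $\mathbf m_{-j}^k\to\mathbf m_{-j}$ does not by itself give convergence of the profiles, and the topologies on $M$ and $M^{N-1}$ need to be specified.

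\emph{Concavity.} You rely on \eqref{eq:induced-measure-convex}, but that identity is false for the pointwise mixture the paper defines. Since $\alpha_j$ and $\kappa_j$ enter the composition multiplicatively, the induced law is quadratic in $\lambda$, with cross terms pairing $\alpha_j$ with $\kappa_j'$. You need a genuinely correlated mixture. For example, keep $\alpha_j^\lambda=\lambda\alpha_j+(1-\lambda)\alpha_j'$, but let $\kappa_j^\lambda$ mix $\kappa_j$ and $\kappa_j'$ with the posterior weight $\lambda\,\mathrm{d}\alpha_j(\cdot|\mathbf t_j')/\mathrm{d}\alpha_j^\lambda(\cdot|\mathbf t_j')$ evaluated at $\mathbf a_j'$. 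Conditional on types, reports and recommendations, the deviator's action and $\mathbf w$ are independent of which component was drawn, and the posterior depends only on $(\mathbf t_j',\mathbf a_j')$. Hence $\widetilde\mu(\cdot,\sigma_{i,j})$ is affine along this path for every $\sigma_{i,j}$, and your estimate $d_H\le\lambda$ then goes through.
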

\subsection*{Compact-valuedness of the IC correspondence.}

We now turn to the compact-valuedness of the IC correspondence.

\begin{lemma}\label{prop:ic-compact-valued}
For each $\mathbf{m}_{-j} \in M^{N-1}$, the set $\mathrm{IC}_j(\mathbf{m}_{-j}) \subseteq M$ is compact with respect to $d_M^*$.
\end{lemma}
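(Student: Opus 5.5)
The plan is to show that $\mathrm{IC}_j(\mathbf{m}_{-j})$ is a closed subset of a compact space. The argument has three steps.

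\emph{Step 1: sequential compactness of the ambient space.} Fix $\mathbf{m}_{-j}$. Since $d_M^*$ is a (pseudo)metric, compactness is equivalent to sequential compactness. So I will start from an arbitrary sequence $m_j^k \in \mathrm{IC}_j(\mathbf{m}_{-j})$ and extract a subsequence converging in $d^*_{M^N}$, with $\mathbf{m}_{-j}$ held fixed, to some limit $m_j \in M$.

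To get a candidate limit I would use the structure of the objects appearing in $d^*_{M^N}$:
\begin{itemize}
\item By Assumption~\ref{2.8assump1}, $\mathcal{P}(X)$ and $\mathcal{P}(\widetilde{X})$ are compact metric spaces.
\item By the Blaschke selection theorem, $\mathsf{K}(\mathcal{P}(\widetilde{X}))$ is compact under $d_H$.
\end{itemize}
Hence, passing to a subsequence, $\mu(m_j^k,\mathbf{m}_{-j}) \xrightarrow{N} Q$ and $\overline{\Phi_{i,j}(m_j^k,\mathbf{m}_{-j})} \xrightarrow{H} \mathcal{C}_{i}$ for every $i$.

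\emph{Step 2: realizing the limits by a mechanism (main obstacle).} I must show that the limit objects $(Q,\{\mathcal{C}_i\})$ are generated by an actual mechanism $m_j=(\alpha_j,\kappa_j)\in M$. This requires two things:
\begin{itemize}
\item the marginal structure, i.e.\ $Q=\mu(m_j,\mathbf{m}_{-j})$;
\item the deviation sets, i.e.\ $\mathcal{C}_i=\overline{\Phi_{i,j}(m_j,\mathbf{m}_{-j})}$.
\end{itemize}

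For the first, I would follow \citet{Balder1988} and \citet*{kadan2017}. Disintegrate $Q$ along the type coordinates to recover $\alpha_j$ as a transition probability, using that the $H$-marginal and the product structure are preserved under narrow limits. Then recover $\kappa_j$ by disintegrating along $(\widehat{\mathbf{t}},\widehat{\mathbf{a}}',\mathbf{w})$. The $\boldsymbol{\Lambda}$-component should survive the limit by the narrow continuity in Assumption~\ref{2.8assump4.5}. Feasibility of $\kappa_j$ follows because the graph of $\mathcal{W}$ is closed (Assumption~\ref{assump:W_continuity}). The support constraint then passes to the limit by the Portmanteau theorem, and the limit kernel is modified on a null set so that the constraint holds everywhere.

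For the deviation sets I would show two inclusions:
\begin{itemize}
\item $\mathcal{C}_i \supseteq \overline{\Phi_{i,j}(m_j,\mathbf{m}_{-j})}$: take any behavior strategy $\sigma_{i,j}$, note that $\widetilde{\mu}(\cdot,\sigma_{i,j})$ inherits convergence from the convergence of the kernels (a Young-measure/Balder argument), and use that Kuratowski lower limits contain the limits of such sequences.
\item $\mathcal{C}_i \subseteq \overline{\Phi_{i,j}(m_j,\mathbf{m}_{-j})}$: take convergent sequences $\widetilde{\mu}(m_j^k,\sigma^k)$, extract convergent behavior strategies via compactness of $\mathcal{K}$ in Balder's topology, and identify the limit as $\widetilde{\mu}(m_j,\sigma)$.
\end{itemize}
This identification is where the argument is most delicate. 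If it cannot be carried out in full, the fallback is to identify $M$ with its image under $\mathbf{m}\mapsto(\mu(\mathbf{m}),\{\overline{\Phi_{i,j}(\mathbf{m})}\})$ and show that this image is closed.

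\emph{Step 3: closedness of the IC set.} Once the limit $m_j$ exists, closedness follows directly from Lemma~\ref{prop:ic-continuity}. Each $f_{i,j}$ is continuous, so $f_{i,j}(m_j^k,\mathbf{m}_{-j})\ge 0$ for all $k$ gives $f_{i,j}(m_j,\mathbf{m}_{-j})\ge 0$ in the limit. Therefore $m_j\in \mathrm{IC}_j(\mathbf{m}_{-j})$, which proves compactness.
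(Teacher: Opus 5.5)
Your Step 3 is the paper's whole proof: it shows each $S_i(\mathbf{m}_{-j})$ is closed by continuity of $f_{i,j}$ (Lemma~\ref{prop:ic-continuity}) and then simply asserts that $M$ is compact. You are right that compactness of $M$ under $d^*_{M^N}$ is the real content. Your Step 2 does not deliver it, however, and in this generality it cannot be delivered.

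The first bullet already breaks. Disintegrating $Q$ along $(\widehat{\mathbf{t}},\widehat{\mathbf{a}}',\mathbf{w})$ gives a reward kernel that may depend on other teams' coordinates, whereas an admissible $\kappa_j$ may condition only on $(\mathbf{t}_j,\mathbf{a}'_j,w_j)$. Every law $\mu(m_j,\mathbf{m}_{-j})$ makes $\mathbf{r}_j$ conditionally independent of $(\mathbf{w}_{-j},\widehat{\mathbf{r}}_{-j},\ldots)$ given $(\mathbf{t}_j,\mathbf{a}'_j,w_j)$. Conditional independence is not closed under narrow limits once the law of the conditioning variables moves.

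Here is a counterexample that satisfies all the assumptions. Take $N=2$, $n=1$, $T$ a singleton, $A=W=I=[0,1]$, $\mathcal{W}\equiv[0,1]$, and $u$ constant, so that $\mathrm{IC}_1(m_2)=M$. Let $\boldsymbol{\Lambda}(\cdot\mid\widehat{\mathbf{t}},\widehat{\mathbf{a}})$ be the law of $(a_1U,U)$ with $U$ uniform. Fix any $m_2$ and set $\alpha_1^k=\delta_{1/k}$ and $\kappa_1^k(\cdot\mid t,a',w_1)=\delta_{\min\{1,kw_1\}}$.
\begin{itemize}
\item Along the sequence, $w_1=U/k$, so $r_1=U=w_2$. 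The $(a_2,w_2,r_2)$ part of the law does not move with $k$.
\item Hence $\mu(m_1^k,m_2)\xrightarrow{N}Q$, where under $Q$ we have $a_1=w_1=0$ and $r_1=w_2$.
\item Any $m_1$ with $a_1=0$ a.s.\ draws $r_1$ from the fixed measure $\kappa_1(\cdot\mid t,0,0)$, independently of $w_2$, so $Q$ is not attained.
\item Since $d^*_{M^N}$ dominates the narrow distance between on-path laws, no subsequence of $(m_1^k)$ converges in $d^*_{M^N}$, and $\mathrm{IC}_1(m_2)$ is not compact.
\end{itemize}
The same bullet fails for a second reason. The fixed kernel $\kappa_{-j}$ need not be continuous in $w_{-j}$. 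Feller continuity carries the $\boldsymbol{\Lambda}$-component through the limit, but nothing carries the $\kappa_{-j}$-component when the law of $w_{-j}$ moves with $m_j^k$. Your fallback, proving that the image of $M$ is closed, is the same claim and fails for the same reasons.

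The deviation-set half of Step 2 also has no footing. Convergence in $d^*_{M^N}$ gives you convergence of laws and of sets of laws, not of the kernels themselves. In particular it says nothing about $\alpha_j^k$ at reported profiles $(\mathbf{t}_{-i,j},t')$, or about $\kappa_j^k$ at report--recommendation--winnings combinations reached only off the truthful path. So ``$\widetilde{\mu}(\cdot,\sigma_{i,j})$ inherits convergence from the convergence of the kernels'' is not available.

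Balder's compactness of behavior strategies is a Young-measure result that needs a fixed marginal on the conditioning variable. The law of the conditioning variable $(t,t',a')$ of $\sigma^A_{i,j}$ moves with $k$ and with $\sigma^T_{i,j}$, so that result does not apply. You would also need the \emph{same} $m_j$ to realize both $Q$ and every $\mathcal{C}_i$, which you do not address.

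Your proposal, like the paper's one-line appeal to compactness of $M$, therefore leaves the lemma unproved. The example shows that what is missing is an additional hypothesis, or a different mechanism space or topology, not just a more careful limit argument.
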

Lemma~\ref{prop:ic-compact-valued} follows from the fact that for each $\mathbf{m}_{-j}$, the sublevel set
\[
S_i(\mathbf{m}_{-j}) \;:=\; \{\, m_j \in M : f_{i,j}(m_j,\mathbf{m}_{-j}) \geq 0 \,\}
\]
is closed in $M$ by continuity of $f_{i,j}(\cdot,\mathbf{m}_{-j})$ (Lemma~\ref{prop:ic-continuity}). Hence
\[
\mathrm{IC}_j(\mathbf{m}_{-j}) \;=\; \bigcap_{i \in \text{team } j} S_i(\mathbf{m}_{-j})
\]
is an intersection of finitely many closed sets, thus closed in $M$. Since $M$ is compact, every closed subset is compact; therefore $\mathrm{IC}_j(\mathbf{m}_{-j})$ is compact.

\subsection*{Convex-valuedness of the IC correspondence.}
Next, we turn to the convex-valuedness of the IC correspondence.
\begin{lemma}\label{prop:ic-convex-valued}
For each $\mathbf{m}_{-j} \in M^{N-1}$, the set $\mathrm{IC}_j(\mathbf{m}_{-j}) \subseteq M$ is convex.
\end{lemma}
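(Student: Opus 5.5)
The plan is to check directly that a convex combination of two incentive-compatible mechanisms is again incentive compatible. The argument rests on the linearity already used in Lemma~\ref{prinexppayofflemma}.

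\emph{Setup.} Fix $\mathbf{m}_{-j}$, take $m_j,m_j'\in \mathrm{IC}_j(\mathbf{m}_{-j})$ and $\lambda\in[0,1]$, and form $m_j^\lambda=\lambda m_j+(1-\lambda)m_j'$. By the convexity of $M$ shown after Assumption~\ref{assump:W_continuity}, $m_j^\lambda\in M$, so it satisfies the feasibility (support) constraint. It remains to show $f_{i,j}(m_j^\lambda,\mathbf{m}_{-j})\ge 0$ for each member $i$ of team $j$.

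\emph{Step 1 (truthful-obedient term).} By \eqref{eq:induced-measure-convex}, $\mu(m_j^\lambda,\mathbf{m}_{-j})$ is the corresponding convex combination of $\mu(m_j,\mathbf{m}_{-j})$ and $\mu(m_j',\mathbf{m}_{-j})$. Hence $\int u_{i,j}\,d\mu$ is affine in $m_j$, exactly as in \eqref{eq:bilinear-expansion}.

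\emph{Step 2 (deviation term).} Fix an arbitrary behavior strategy $\sigma_{i,j}$. I will establish the analogue of \eqref{eq:induced-measure-convex} on $\widetilde{X}$:
\[
\widetilde{\mu}\bigl((m_j^\lambda,\mathbf{m}_{-j}),\sigma_{i,j}\bigr)=\lambda\,\widetilde{\mu}\bigl((m_j,\mathbf{m}_{-j}),\sigma_{i,j}\bigr)+(1-\lambda)\,\widetilde{\mu}\bigl((m_j',\mathbf{m}_{-j}),\sigma_{i,j}\bigr).
\]
The justification is the same correlated-randomization reading of $m_j^\lambda$ used in the proof of Lemma~\ref{prinexppayofflemma}: a single draw selects which of the two mechanisms operates, and \eqref{eq:mu-sigma} is then computed conditionally on that draw. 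The crucial point is that the agent does not observe the draw. Her strategy $\sigma_{i,j}$ conditions only on $(t_{i,j},t',a'_{i,j})$, so the same kernels $\sigma^T_{i,j},\sigma^A_{i,j}$ appear in both conditional compositions. Likewise, $H$, $\boldsymbol{\Lambda}$ and the other teams' kernels do not depend on the draw. Integrating out the draw therefore gives the stated convex combination. Pushing forward by $\mathrm{pr}_{i,j}$ and integrating $u_{i,j}$ then yields, for every $\sigma_{i,j}$,
\[
\int u_{i,j}\,d(\mathrm{pr}_{i,j})_*\widetilde{\mu}\bigl((m_j^\lambda,\mathbf{m}_{-j}),\sigma_{i,j}\bigr)\le \lambda \int u_{i,j}\,d\mu(m_j,\mathbf{m}_{-j})+(1-\lambda)\int u_{i,j}\,d\mu(m_j',\mathbf{m}_{-j}).
\]
The inequality holds because each deviation payoff under $m_j$ (resp.\ $m_j'$) is bounded by the corresponding truthful-obedient payoff, which is what $f_{i,j}\ge0$ at $m_j$ (resp.\ $m_j'$) says.

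\emph{Conclusion.} By Step 1, the right-hand side above equals $\int u_{i,j}\,d\mu(m_j^\lambda,\mathbf{m}_{-j})$. Taking the supremum over $\sigma_{i,j}$ bounds the supremum over $(\mathrm{pr}_{i,j})_*\Phi_{i,j}$. By the footnote accompanying \eqref{eq:ic-slack-pushforward}, this is the same as the supremum over the closure $(\mathrm{pr}_{i,j})_*\overline{\Phi_{i,j}}$. Hence $f_{i,j}(m_j^\lambda,\mathbf{m}_{-j})\ge 0$ for every $i$, and $m_j^\lambda\in\mathrm{IC}_j(\mathbf{m}_{-j})$.

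\emph{Main obstacle.} The delicate point is the linearity of the deviation law in Step 2. If $m_j^\lambda$ were read literally as independent pointwise mixtures of $\alpha_j$ and of $\kappa_j$, the composition in \eqref{eq:mu-sigma} would contain cross terms, mixing $\alpha_j$ with $\kappa_j'$ and $\alpha_j'$ with $\kappa_j$, and the identity would fail. I would therefore state explicitly that the convex structure on $M$ is the correlated one adopted in the proof of Lemma~\ref{prinexppayofflemma}, with mechanisms identified through the laws they induce. Under that convention the identity follows by Fubini, since the deviating agent's kernels are unaffected by the unobserved draw.
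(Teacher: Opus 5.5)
Your argument is correct and follows the paper's overall route: read $m_j^\lambda$ as a correlated lottery, use the mixture identity for induced laws, and conclude $f_{i,j}\ge 0$. It differs at the decisive step, and the difference is in your favor.

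The paper proves the mixture identity only for the truthful--obedient law $\mu$ and then invokes ``linearity'' of $f_{i,j}$ in $m_j$. That is not true in general. The deviating agent must use a single $\sigma_{i,j}$ against both branches of an unobserved lottery, so the supremum term is only convex along the mixture and $f_{i,j}$ is only concave. The equality the paper displays does hold when both endpoints are IC, but for a different reason: concavity gives $\ge$, and $f_{i,j}\le 0$ always, because the truthful--obedient strategy is one of the feasible deviations. Your Step~2 supplies exactly the missing argument: the mixture identity for $\widetilde{\mu}(\cdot,\sigma_{i,j})$ with the same $\sigma_{i,j}$ in both branches, followed by the bound for each fixed $\sigma_{i,j}$.

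Your caveat about pointwise mixtures is also well founded. Mixing $\alpha_j$ and $\kappa_j$ separately, as the paper does, does not produce the correlated lottery it describes, and under that convex structure the lemma can fail. Take one agent with no private information, $A=\{a_L,a_H\}$, $W=\{0,1\}$ with $w=1$ more likely under $a_H$, and utility equal to $r$ minus a cost $c>0$ of $a_H$. Let $m_j$ recommend $a_H$, pay a bonus after $w=1$ that makes obedience exactly binding, and pay $0$ after the off-path recommendation $a_L$. Let $m_j'$ recommend $a_L$ and always pay $0$. The pointwise $\tfrac12$-mixture halves the bonus following the recommendation $a_H$, and the agent then strictly prefers to shirk.

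The one step to tighten is your Setup. You justify $m_j^\lambda\in M$ by the paper's pointwise convexity of $M$, but Step~2 uses the correlated lottery. ``Identifying mechanisms through the laws they induce'' does not by itself show that this lottery is realized by a single element of $M$, so exhibit it. Take $\alpha_j^\lambda=\lambda\alpha_j+(1-\lambda)\alpha_j'$ and
\[
\kappa_j^\lambda(\cdot\mid\mathbf{t}'_j,\mathbf{a}'_j,w_j)=\beta\,\kappa_j(\cdot\mid\mathbf{t}'_j,\mathbf{a}'_j,w_j)+(1-\beta)\,\kappa_j'(\cdot\mid\mathbf{t}'_j,\mathbf{a}'_j,w_j),
\qquad
\beta=\lambda\,\frac{d\alpha_j(\cdot\mid\mathbf{t}'_j)}{d\alpha_j^\lambda(\cdot\mid\mathbf{t}'_j)}(\mathbf{a}'_j).
\]
For $\lambda\in(0,1)$ we have $\alpha_j(\cdot\mid\mathbf{t}'_j)\ll\alpha_j^\lambda(\cdot\mid\mathbf{t}'_j)$, and a jointly measurable $[0,1]$-valued version of the density exists on these standard Borel spaces.

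Here $\beta$ is the posterior probability of the first branch given the reported profile and the recommendation. The draw is independent of $(\widehat{\mathbf{t}},t')$, affects $(a,\mathbf{w})$ only through $\mathbf{a}'_j$, and does not enter the other teams' kernels. Hence evaluating $\beta$ at the reported, possibly false, profile gives the correct posterior off path as well. Then $\beta\,\alpha_j^\lambda(d\mathbf{a}'_j\mid\cdot)=\lambda\,\alpha_j(d\mathbf{a}'_j\mid\cdot)$, and Fubini yields your identity for every $\sigma_{i,j}$. The support constraint holds because $\mathcal{W}(w_j)$ is closed.

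With this construction in place, your proof establishes the lemma for the correlated convex structure, which is the structure under which the statement is actually true.
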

To see this, fix $\mathbf{m}_{-j}$ and suppose $m_j, m_j' \in \mathrm{IC}_j(\mathbf{m}_{-j})$. For any $\lambda \in [0,1]$, the convex combination $m_j^{\lambda} = \lambda m_j + (1-\lambda)m_j'$ represents a correlated randomization that offers mechanism $m_j$ with probability $\lambda$ and mechanism $m_j'$ with probability $1-\lambda$. By the same reasoning as in Lemma~\ref{prinexppayofflemma}, the induced measure satisfies $\mu(m_j^{\lambda}, \mathbf{m}_{-j}) = \lambda \mu(m_j, \mathbf{m}_{-j}) + (1-\lambda)\mu(m_j', \mathbf{m}_{-j})$. Therefore, by linearity of the incentive compatibility slack functional:
\[
f_{i,j}(m_j^{\lambda}, \mathbf{m}_{-j}) = \lambda f_{i,j}(m_j, \mathbf{m}_{-j}) + (1-\lambda)f_{i,j}(m_j', \mathbf{m}_{-j}) \geq 0
\]
whenever both $f_{i,j}(m_j, \mathbf{m}_{-j}) \geq 0$ and $f_{i,j}(m_j', \mathbf{m}_{-j}) \geq 0$. This holds for all agents $(i,j)$ in team $j$, confirming that $m_j^{\lambda} \in \mathrm{IC}_j(\mathbf{m}_{-j})$.

\subsection*{Best-Response Correspondence}
Given that each principal \(j\)’s objective function is continuous and quasi-concave and each \(IC_j\) is continuous and nonempty compact-valued, it follows from Berge’s Maximum Theorem \citep[Theorem~17.31]{aliprantis2006infinite} that each team \(j\) principal’s best-response correspondence is upper hemicontinuous with compact convex values.

\begin{theorem}[\textbf{Berge's Maximum Theorem}]
Let $\phi: X \twoheadrightarrow Y$ be a continuous correspondence between topological spaces \(X\) and \(Y\) with nonempty compact values. Let \(V: Gr(\phi) \to \mathbb{R}\) be a continuous function, where \(Gr(\phi) = \{(x, y) \in X \times Y \mid y \in \phi(x)\}\) is the graph of the correspondence.

Define the value function \(m: X \to \mathbb{R}\) by:
\[
m(x) = \sup_{y \in \phi(x)} V(x, y),
\]
and the correspondence of maximizers \(\beta: X \twoheadrightarrow Y\) by:
\[
\beta(x) = \{y \in \phi(x) \mid V(x, y) = m(x)\}.
\]

Then:
\begin{enumerate}
    \item The value function \(m(x)\) is continuous on \(X\).
    \item The argmax correspondence \(\beta(x)\) has nonempty compact values.
    \item \(\beta(x)\) is convex-valued for all \(x \in X\) if \(V(x, y)\) is quasi-concave in \(y\) and \(\phi(x)\) has convex values.
    \item If \(Y\) is Hausdorff, then the argmax correspondence \(\beta(x)\) is upper hemicontinuous.
\end{enumerate}
\end{theorem}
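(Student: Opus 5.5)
The plan is to prove the four conclusions in the order (2), (1), (4), (3). Throughout I use the standard net characterizations of hemicontinuity in general topological spaces.

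\emph{Net characterizations.} For a correspondence $\phi$ with nonempty compact values, upper hemicontinuity at $x$ gives the following: whenever $x_\alpha\to x$ and $y_\alpha\in\phi(x_\alpha)$, the net $(y_\alpha)$ has a subnet converging to some point of $\phi(x)$. Lower hemicontinuity at $x$ gives the following: for every $y\in\phi(x)$ and every net $x_\alpha\to x$, there are a subnet $x_{\alpha_\beta}$ and points $y_\beta\in\phi(x_{\alpha_\beta})$ with $y_\beta\to y$. I will quote both facts, e.g.\ from \citet[Section~17.2]{aliprantis2006infinite}, rather than reprove them.

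\emph{Step 1 (nonempty compact values of $\beta$).} Fix $x$. The map $y\mapsto V(x,y)$ is continuous on the nonempty compact set $\phi(x)$, so it attains its supremum. Hence $m(x)$ is finite and $\beta(x)\neq\emptyset$. Moreover $\beta(x)$ is the preimage of the closed set $\{m(x)\}$ under a continuous map on $\phi(x)$. It is therefore closed in the compact space $\phi(x)$, and so compact.

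\emph{Step 2 (continuity of $m$).} I split this into upper and lower semicontinuity.

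For upper semicontinuity, take a net $x_\alpha\to x$ and pass to a subnet along which $m(x_\alpha)\to\limsup m(x_\alpha)$. Choose $y_\alpha\in\beta(x_\alpha)$, which exists by Step 1. By upper hemicontinuity of $\phi$, a further subnet of $(y_\alpha)$ converges to some $y\in\phi(x)$. Then $(x_\alpha,y_\alpha)\to(x,y)$ inside $Gr(\phi)$. Continuity of $V$ on the graph gives
\[
\limsup m(x_\alpha)=\lim V(x_\alpha,y_\alpha)=V(x,y)\le m(x).
\]

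For lower semicontinuity, take $y^*\in\beta(x)$ and a net $x_\alpha\to x$. Pass first to a subnet realizing $\liminf m(x_\alpha)$. Lower hemicontinuity of $\phi$ then yields a further subnet and points $y_\beta\in\phi(x_{\alpha_\beta})$ with $y_\beta\to y^*$. This gives
\[
\liminf m(x_\alpha)\ge\lim V(x_{\alpha_\beta},y_\beta)=V(x,y^*)=m(x).
\]
Combining the two inequalities, $m$ is continuous.

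\emph{Step 3 (upper hemicontinuity of $\beta$).} This is the step I expect to be the main obstacle. The difficulty is that in a general space, ``every net in the graph has a subnet converging into $\beta(x)$'' implies upper hemicontinuity only under some separation hypothesis, and this is exactly where $Y$ Hausdorff enters.

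I would argue by contradiction. Suppose there is an open $U\supseteq\beta(x)$ and a net $x_\alpha\to x$ with points $y_\alpha\in\beta(x_\alpha)\setminus U$. By upper hemicontinuity of $\phi$, a subnet of $(y_\alpha)$ converges to some $y\in\phi(x)$. By continuity of $V$ and Step 2,
\[
V(x,y)=\lim V(x_\alpha,y_\alpha)=\lim m(x_\alpha)=m(x),
\]
so $y\in\beta(x)\subseteq U$. Since $U$ is open and the subnet converges to $y$, it must eventually lie in $U$, which contradicts $y_\alpha\notin U$.

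The place where the Hausdorff property (or regularity) of $Y$ is needed is the net characterization of upper hemicontinuity of $\phi$ itself. If I choose to rely on the characterization through closed graphs instead, I would use the following route. The graph of $\beta$ is closed relative to $Gr(\phi)$, by continuity of $V$ and $m$. A closed-graph correspondence intersected with a compact-valued upper hemicontinuous correspondence into a Hausdorff space is upper hemicontinuous.

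\emph{Step 4 (convex values).} Suppose $V(x,\cdot)$ is quasi-concave and $\phi(x)$ is convex. Then $\beta(x)$ equals $\phi(x)\cap\{y\in\phi(x):V(x,y)\ge m(x)\}$. This is the intersection of a convex set with an upper level set of a quasi-concave function on a convex domain, and is therefore convex.
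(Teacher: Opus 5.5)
Your proof is correct and self-contained. The paper gives no argument for this statement; it simply quotes \citet[Theorem~17.31]{aliprantis2006infinite}. Your Steps 1, 2 and 4 are the standard ones: the maximum is attained on the compact set $\phi(x)$, and upper and lower semicontinuity of $m$ follow from the net forms of upper and lower hemicontinuity of $\phi$.

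Where you differ is Step 3. The Hausdorff hypothesis in the cited statement fits the closed-graph route you mention as a fallback. That route writes $\beta(x)=\phi(x)\cap\{y:V(x,y)\ge m(x)\}$ and intersects a compact-valued upper hemicontinuous correspondence with a closed-graph one. There, Hausdorffness is what makes $Gr(\phi)$ closed in $X\times Y$, since a compact set and a point outside it can be separated. That in turn makes the graph of $\beta$, which is only closed relative to $Gr(\phi)$, closed in $X\times Y$. Your sketch of that route leaves this step implicit.

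Your primary net-and-contradiction argument avoids the issue, and in fact never uses that $Y$ is Hausdorff:
\begin{itemize}
\item The passage from upper hemicontinuity and compactness of $\phi(x)$ to a subnet of $(y_\alpha)$ converging into $\phi(x)$ holds in any topological space.
\item Your contradiction argument is itself a separation-free proof that this net condition for $\beta$ yields upper hemicontinuity.
\item The only uniqueness of limits you need is in $\mathbb{R}$.
\end{itemize}
So your remarks that a separation axiom is needed for the net characterization are inaccurate, but harmlessly so. Your direct route proves part 4 under weaker hypotheses than stated. The closed-graph route is the one that genuinely needs $Y$ Hausdorff, or alternatively a continuous extension of $V$ to $X\times Y$.
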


We can now apply Berge's Maximum Theorem to establish properties of each team \(j\) principal's best-response correspondence. From Lemmas~\ref{lem:ic-correspondence-continuous} and \ref{prop:ic-compact-valued}, the incentive-compatible correspondence \(IC_j: M^{N-1} \twoheadrightarrow M\) is continuous with nonempty compact values. From Lemma~\ref{prinexppayofflemma}, the principal's payoff function \(V_{\pi_j}\) is continuous. Since \(M\) is Hausdorff, parts (1), (2), and (4) of Berge's theorem imply that each team \(j\) principal's best-response correspondence is upper hemicontinuous with nonempty compact values. Moreover, since \(IC_j(\mathbf{m}_{-j})\) is convex-valued (Lemma~\ref{prop:ic-convex-valued}) and \(V_{\pi_j}\) is quasi-concave in \(m_j\) (Lemma~\ref{prinexppayofflemma}), part (3) implies that the best-response correspondence also has convex values. Thus, the (aggregate) principals' best-response correspondence is upper hemicontinuous with nonempty compact convex values.

\subsection*{Existence of Bayes-Nash Principal's Equilibrium}
Lastly, we can apply the Kakutani-Fan-Glicksberg fixed-point theorem \citep[Theorem~17.55]{aliprantis2006infinite} to the best-response correspondence to demonstrate the existence of a Bayesian-Nash Principals' equilibrium. 
\begin{corollary}
Let $K$ be a nonempty compact convex subset of a locally convex Hausdorff space, and let the correspondence $\Psi: K \twoheadrightarrow K$ have a closed graph and nonempty convex values. Then, the set of fixed points of $\Psi$ is compact and nonempty.
\end{corollary}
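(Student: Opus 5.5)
The statement is the Kakutani--Fan--Glicksberg theorem. I would prove it by approximating in finite dimensions: apply the finite-dimensional Kakutani theorem to perturbed problems, then pass to the limit using compactness of $K$ and the closed graph of $\Psi$.

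\emph{Step 1 (preliminaries).} Since $K$ is compact and $\operatorname{Gr}(\Psi)\subseteq K\times K$ is closed, each value $\Psi(x)=\{y:(x,y)\in\operatorname{Gr}(\Psi)\}$ is closed in $K$, hence compact. The fixed-point set is
\[
\operatorname{Fix}(\Psi)=\{x\in K: (x,x)\in\operatorname{Gr}(\Psi)\}.
\]
It is the preimage of the closed graph under the continuous diagonal map $x\mapsto(x,x)$. So it is closed in $K$, hence compact. Only nonemptiness remains.

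\emph{Step 2 (approximate fixed points).} Let $\mathcal{V}$ be the family of convex, symmetric, open neighborhoods of $0$; it is a base because the space is locally convex. Fix $V\in\mathcal{V}$.
\begin{itemize}
\item By compactness, cover $K$ by finitely many sets $x_1+V,\dots,x_n+V$ with $x_i\in K$.
\item Let $L=\operatorname{conv}\{x_1,\dots,x_n\}\subseteq K$. This is a compact convex subset of a finite-dimensional space.
\item Take a continuous partition of unity $\{\varphi_i\}$ on $K$ subordinate to this cover, and define $p(y)=\sum_i\varphi_i(y)x_i$. Then $p:K\to L$ is continuous.
\item Because $V$ is convex and symmetric, $p(y)-y\in V$ for every $y\in K$: whenever $\varphi_i(y)>0$ we have $x_i-y\in V$, and $p(y)-y$ is a convex combination of such terms.
\item On $L$, define $G(x)=\operatorname{conv}\,p(\Psi(x))$.
\end{itemize}

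The values of $G$ are nonempty, convex and, by Carathéodory's theorem in finite dimensions, compact. The main obstacle is checking that $G$ has a closed graph. I would do it as follows.
\begin{itemize}
\item Take $x^k\to x$ and $z^k\in G(x^k)$ with $z^k\to z$.
\item By Carathéodory, write each $z^k$ as a combination of at most $\dim L+1$ points $p(y^k_l)$ with $y^k_l\in\Psi(x^k)$.
\item Extract convergent subsequences of the weights and of the $y^k_l$, using compactness of $K$.
\item The limits $y_l$ satisfy $y_l\in\Psi(x)$ by the closed graph of $\Psi$, and $p$ is continuous, so $z\in G(x)$.
\end{itemize}
Kakutani's theorem on $L$ then yields $x_V\in G(x_V)$. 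Hence
\[
x_V\in\operatorname{conv}\big(\Psi(x_V)+V\big)=\Psi(x_V)+V,
\]
where the equality uses convexity of $\Psi(x_V)$ and of $V$.

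\emph{Step 3 (limit).} Define $F_V=\{x\in K:(x+\overline V)\cap\Psi(x)\neq\emptyset\}$. Each $F_V$ is nonempty by Step 2. It is closed by the same subsequence argument, namely compactness of $K$ plus the closed graph. The family $\{F_V\}_{V\in\mathcal{V}}$ has the finite intersection property, since $F_{V_1\cap\cdots\cap V_m}\subseteq\bigcap_l F_{V_l}$. By compactness of $K$, there is $x^*\in\bigcap_V F_V$.

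For each $V$ pick $y_V\in\Psi(x^*)$ with $y_V-x^*\in\overline V$. Direct the net $(y_V)$ by reverse inclusion; then $y_V\to x^*$. Since $\Psi(x^*)$ is closed, $x^*\in\Psi(x^*)$. Combined with Step 1, the fixed-point set is nonempty and compact.
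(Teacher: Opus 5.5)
The paper gives no proof of this statement. It quotes the Kakutani--Fan--Glicksberg theorem from Aliprantis and Border and uses it as a black box in the final existence step.

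You instead give a self-contained reduction to the finite-dimensional Kakutani theorem. The partition-of-unity map $p$ is a $V$-small perturbation of the identity with values in the polytope $L$. Kakutani applied to $x\mapsto\operatorname{conv}p(\Psi(x))$ on $L$ produces points with $(x_V+V)\cap\Psi(x_V)\neq\emptyset$. The closed sets $F_V$ have the finite intersection property, and this converts the approximate fixed points into an exact one.

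The argument is correct. What it buys over the citation is transparency about where each hypothesis enters:
\begin{itemize}
\item Hausdorffness gives finite-dimensional subspaces their Euclidean topology, so $L$ is a genuine polytope in a Euclidean space, and it gives uniqueness of limits.
\item Local convexity gives both $p(y)-y\in V$ and the convexity of $\Psi(x_V)+V$.
\item The closed graph gives closedness of $\operatorname{Gr}G$ and of each $F_V$.
\end{itemize}
Your proof also makes plain that the linear structure of the ambient space is used essentially. That is precisely the hypothesis the paper must check when it applies the result to $M^N$ under $d^*_{M^N}$. The citation buys only brevity.

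One technical repair is needed, because the ambient space is an arbitrary locally convex space rather than a metric one. A compact $K$ need not be sequentially compact (e.g.\ $[0,1]^{[0,1]}$ with the product topology), and closedness of $F_V$ cannot in general be tested with sequences.
\begin{itemize}
\item In the closed-graph check for $G$ you may keep sequences for $(x^k,z^k)$, because $L\times L$ is metrizable. But you should extract a convergent subnet, not a subsequence, of $(\lambda^k,y^k_0,\dots,y^k_d)$ in $\Delta\times K^{d+1}$. Here $d=\dim L$ and $\Delta$ is the simplex of weights.
\item Likewise, prove $F_V$ closed using a net $x_\alpha\to x$ in $F_V$ and a convergent subnet of witnesses $y_\alpha\in\Psi(x_\alpha)$ with $y_\alpha-x_\alpha\in\overline V$.
\end{itemize}
With these changes every step goes through verbatim. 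The last step also uses that for each neighborhood $U$ of $0$ some $V_0\in\mathcal V$ satisfies $\overline{V_0}\subseteq U$. This holds in any topological vector space.
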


Given that each principal’s best-response correspondence is upper hemicontinuous with nonempty compact convex values, the aggregate best-response correspondence satisfies the conditions of the Kakutani-Fan-Glicksberg fixed-point theorem. Therefore, there exists a Bayesian-Nash Principals’ equilibrium.

\newpage

\bibliographystyle{plainnat}
\bibliography{references_merged}

@article{Schein26,
  title={Editor's Comments on the $50^{th}$ Anniversary of \textit{Mathematics of Operations Research}},
  author={Scheinberg, Katya},
  journal={Mathematics of Operations Research},
  volume={51},
  pages={iv--vii},
  year={2026},
  publisher={Informs}
}

@article{lazear81,
  title={Rank-Order Tournaments as Optimum Labor Contracts},
  author={Lazear, Edward P. and Rosen, Sherwin},
  journal={Journal of Political Economy},
  volume={89},
  pages={841--864},
  year={1981},
  publisher={University of Chicago}
}

@article{aspremont88,
  title={Cooperative and Noncooperative R \& D in Duopoly with Spillovers},
  author={D'Aspremont, Claude and Jacquemin, Alexis},
  journal={The American Economic Review},
  volume={78},
  pages={1133--1137},
  year={1988},
  publisher={Elsevier}
}

@article{kamien92,
  title={Research Joint Ventures and R\&D Cartels},
  author={Kamien, Morton I. and Muller, Eitan and Zang, Israel},
  journal={The American Economic Review},
  volume={82},
  pages={1293--1306},
  year={1992},
  publisher={Elsevier}
}

@article{raith03,
  title={Competition, Risk, and Managerial Incentives},
  author={Raith, Michael},
  journal={The American Economic Review},
  volume={93},
  pages={1425--1436},
  year={2003},
  publisher={Elsevier}
}

@article{anton2023,
  title={Common Ownership, Competition, and Top
Management Incentives},
  author={Antón, Miguel and Ederer, Florian and Giné, Mireia and Schmalz, Martin },
  journal={Journal of Political Economy},
  volume={131},
  number={5},
  pages={1294--1355},
  year={2023},
  publisher={University of Chicago Press}
}

@book{Ok2007,
  title={Real Analysis with Economic Applications},
  author={Ok, Efe A.},
  year={2007},
  publisher={Princeton University Press}
}

@article{Balder1988,
  title={Generalized Equilibrium Results for Games
with Incomplete Information},
  author={Balder, Erik J.},
  journal={Mathematics of Operations Research},
  volume={13},
  number={2},
  pages={265--276},
  year={1988},
  publisher={University of Chicago Press}
}

@article{myerson1982optimal,
  title={Optimal coordination mechanisms in generalized principal--agent problems},
  author={Myerson, Roger B.},
  journal={Journal of Mathematical Economics},
  volume={10},
  number={1},
  pages={67--81},
  year={1982},
  publisher={Elsevier}
}

@article{debreu1952social,
  title={A social equilibrium existence theorem},
  author={Debreu, Gerard},
  journal={Proceedings of the National Academy of Sciences},
  volume={38},
  number={10},
  pages={886--893},
  year={1952},
  publisher={National Acad Sciences}
}

@article{kadan2017,
  title={Existence of optimal mechanisms in principal-agent problems},
  author={Kadan, Ohad and Reny, Philip J. and Swinkels, Jeroen M.},
  journal={Econometrica},
  volume={85},
  number={3},
  pages={769--823},
  year={2017}
}

@incollection{tullock1980efficient, 
author = {Gordon Tullock}, 
title = {Efficient Rent Seeking}, 
editor = {James M. Buchanan and Robert D. Tollison and Gordon Tullock}, booktitle = {Towards a Theory of the Rent-Seeking Society}, publisher = {Texas A\&M University Press}, 
address = {College Station, TX}, pages = {97--112}, year = {1980} 
}

@article{kirkegaard2023contest,
  title={Contest design with stochastic performance},
  author={Kirkegaard, Ren{\'e}},
  journal={American Economic Journal: Microeconomics},
  volume={15},
  number={1},
  pages={201--238},
  year={2023},
  publisher={American Economic Association 2014 Broadway, Suite 305, Nashville, TN 37203-2425}
}

@article{bastani2022simple,
  title={Simple equilibria in general contests},
  author={Bastani, Spencer and Giebe, Thomas and G{\"u}rtler, Oliver},
  journal={Games and Economic Behavior},
  volume={134},
  pages={264--280},
  year={2022},
  publisher={Elsevier}
}

@article{ryvkin2020shape,
  title={The shape of luck and competition in winner-take-all tournaments},
  author={Ryvkin, Dmitry and Drugov, Mikhail},
  journal={Theoretical Economics},
  volume={15},
  number={4},
  pages={1587--1626},
  year={2020},
  publisher={Wiley Online Library}
}

@article{drugov2020noise,
  title={How noise affects effort in tournaments},
  author={Drugov, Mikhail and Ryvkin, Dmitry},
  journal={Journal of Economic Theory},
  volume={188},
  pages={105065},
  year={2020},
  publisher={Elsevier}
}

@article{fullerton1999auctioning,
  title={Auctioning entry into tournaments},
  author={Fullerton, Richard L and McAfee, R. Preston},
  journal={Journal of Political Economy},
  volume={107},
  number={3},
  pages={573--605},
  year={1999},
  publisher={The University of Chicago Press}
}

@article{candougan2024implications,
  title={Implications of team submissions in open innovation contests},
  author={Cando{\u{g}}an, S{\i}d{\i}ka Tun{\c{c}} and Korpeoglu, C. Gizem and Tang, Christopher S.},
  journal={IMA Journal of Management Mathematics},
volume={36},
  number={1},  
pages={1--19},
  year={2025},
  publisher={Oxford University Press}
}

@article{admati1991joint,
  title={Joint projects without commitment},
  author={Admati, Anat R. and Perry, Motty},
  journal={The Review of Economic Studies},
  volume={58},
  number={2},
  pages={259--276},
  year={1991},
  publisher={Wiley-Blackwell}
}

@article{yildirim2006getting,
  title={Getting the ball rolling: Voluntary contributions to a large-scale public project},
  author={Yildirim, Huseyin},
  journal={Journal of Public Economic Theory},
  volume={8},
  number={4},
  pages={503--528},
  year={2006},
  publisher={Wiley Online Library}
}

@article{bonatti2011collaborating,
  title={Collaborating},
  author={Bonatti, Alessandro and H{\"o}rner, Johannes},
  journal={American Economic Review},
  volume={101},
  number={2},
  pages={632--663},
  year={2011},
  publisher={American Economic Association}
}

@article{georgiadis2015projects,
  title={Projects and team dynamics},
  author={Georgiadis, George},
  journal={The Review of Economic Studies},
  volume={82},
  number={1},
  pages={187--218},
  year={2015},
  publisher={Oxford University Press}
}

@article{bowen2019collective,
  title={Collective choice in dynamic public good provision},
  author={Bowen, T. Renee and Georgiadis, George and Lambert, Nicolas S.},
  journal={American Economic Journal: Microeconomics},
  volume={11},
  number={1},
  pages={243--298},
  year={2019},
  publisher={American Economic Association 2014 Broadway, Suite 305, Nashville, TN 37203-2425}
}

@article{cetemen2020uncertainty,
  title={Uncertainty-driven cooperation},
  author={Cetemen, Doruk and Hwang, Ilwoo and Kaya, Ay{\c{c}}a},
  journal={Theoretical Economics},
  volume={15},
  number={3},
  pages={1023--1058},
  year={2020},
  publisher={Wiley Online Library}
}

@article{ozerturk2021credit,
  title={Credit attribution and collaborative work},
  author={Ozerturk, Saltuk and Yildirim, Huseyin},
  journal={Journal of Economic Theory},
  volume={195},
  pages={105264},
  year={2021},
  publisher={Elsevier}
}

@article{yildirim2023fares,
  title={Who fares better in teamwork?},
  author={Yildirim, Huseyin},
  journal={The RAND Journal of Economics},
  volume={54},
  number={2},
  pages={299--324},
  year={2023},
  publisher={Wiley Online Library}
}

@article{holmstrom1982moral,
  title={Moral hazard in teams},
  author={Holmstrom, Bengt},
  journal={The Bell Journal of Economics},
  pages={324--340},
  year={1982},
  publisher={JSTOR}
}

@article{itoh1991incentives,
  title={Incentives to help in multi-agent situations},
  author={Itoh, Hideshi},
  journal={Econometrica},
  pages={611--636},
  year={1991},
  publisher={JSTOR}
}

@article{che2001optimal,
  title={Optimal incentives for teams},
  author={Che, Yeon-Koo and Yoo, Seung-Weon},
  journal={American Economic Review},
  volume={91},
  number={3},
  pages={525--541},
  year={2001},
  publisher={American Economic Association}
}

@article{winter2004incentives,
  title={Incentives and discrimination},
  author={Winter, Eyal},
  journal={American Economic Review},
  volume={94},
  number={3},
  pages={764--773},
  year={2004},
  publisher={American Economic Association}
}

@article{halac2021rank,
  title={Rank uncertainty in organizations},
  author={Halac, Marina and Lipnowski, Elliot and Rappoport, Daniel},
  journal={American Economic Review},
  volume={111},
  number={3},
  pages={757--786},
  year={2021},
  publisher={American Economic Association 2014 Broadway, Suite 305, Nashville, TN 37203}
}

@article{facchinei2010generalized,
  title={Generalized Nash equilibrium problems},
  author={Facchinei, Francisco and Kanzow, Christian},
  journal={Annals of Operations Research},
  volume={175},
  number={1},
  pages={177--211},
  year={2010},
  publisher={Springer}
}

@article{dasgupta2015debreu,
  title={Debreu’s social equilibrium existence theorem},
  author={Dasgupta, Partha and Maskin, Eric},
  journal={Proceedings of the National Academy of Sciences},
  volume={112},
  number={52},
  pages={15769--15770},
  year={2015},
  publisher={National Acad Sciences}
}

@article{tao2024generalized,
  title={Generalized {B}ayesian {N}ash Equilibrium with Continuous Type and Action Spaces},
  author={Tao, Yuan and Xu, Huifu},
  journal={arXiv preprint arXiv:2405.19721},
  year={2024}
}

@article{castro2024disentangling,
  title={Disentangling moral hazard and adverse selection},
  author={Castro-Pires, Henrique and Chade, Hector and Swinkels, Jeroen},
  journal={American Economic Review},
  volume={114},
  number={1},
  pages={1--37},
  year={2024},
  publisher={American Economic Association 2014 Broadway, Suite 305, Nashville, TN 37203}
}

@article{chen2020simple,
  title={Simple contracts under observable and hidden actions},
  author={Chen, Bo and Chen, Yu and Rietzke, David},
  journal={Economic Theory},
  volume={69},
  pages={1023--1047},
  year={2020},
  publisher={Springer}
}

@article{ke2023existence,
  title={The existence of an optimal deterministic contract in moral hazard problems},
  author={Ke, Rongzhu and Xu, Xinyi},
  journal={Economic Theory},
  volume={76},
  number={2},
  pages={375--416},
  year={2023},
  publisher={Springer}
}

@article{gottlieb2023market,
  title={Simple contracts with adverse selection and moral hazard},
  author={Gottlieb, Daniel and Moreira, Humberto},
 journal={Theoretical Economics},
  volume={17},
number={3},
  pages={1357--1401},
  year={2022},
  publisher={Springer}
}

@book{konrad2009,
    author = {Konrad, Kai A.},
    title = {Strategy and Dynamics in Contests},
    publisher = {Oxford University Press},
    year = {2009}}

@book{vojnovic2015contest,
  title={Contest Theory: Incentive Mechanisms and Ranking Methods},
  author={Vojnovi{\'c}, Milan},
  year={2015},
  publisher={Cambridge University Press}
}

@article{konishi2024allocation,
  title={Allocation rules of indivisible prizes in team contests},
  author={Konishi, Hideo and Sahuguet, Nicolas and Crutzen, Beno{\^\i}t S.Y.},
  journal={Economic Theory},
  volume={78},
  number={1},
  pages={69--100},
  year={2024},
  publisher={Springer}
}

@article{crutzen2020,
  title={A model of a team contest, with an application to incentives under list proportional representation},
  author={Crutzen, Beno{\^\i}t S.Y. and Flamand, Sabine and Sahuguet, Nicolas },
  journal={Journal of Public Economics},
  volume={182},
  pages={104109},
  year={2020},
  publisher={Elsevier}
}

@article{kobayashi2024prize,
  title={Prize-allocation rules in generalized team contests},
  author={Kobayashi, Katsuya and Konishi, Hideo and Ueda, Kaoru},
  journal={Economic Theory},
volume={79},  
pages={151--179},
  year={2025},
  publisher={Springer}
}

@article{kobayashi2021effort,
  title={Effort complementarity and sharing rules in group contests},
  author={Kobayashi, Katsuya and Konishi, Hideo},
  journal={Social Choice and Welfare},
  volume={56},
  number={2},
  pages={205--221},
  year={2021},
  publisher={Springer}
}

@article{trevisan2020optimal,
  title={Optimal prize allocations in group contests},
  author={Trevisan, Francesco},
  journal={Social Choice and Welfare},
  volume={55},
  pages={431--451},
  year={2020},
  publisher={Springer}
}

@phdthesis{simeonov2020individual,
  title={Individual and team incentives in contests with heterogeneous agents},
  author={Simeonov, Dimitar},
  year={2020},
  school={Boston College}
}

@article{balart2016strategic,
  title={Strategic choice of sharing rules in collective contests},
  author={Balart, Pau and Flamand, Sabine and Troumpounis, Orestis},
  journal={Social Choice and Welfare},
  volume={46},
  pages={239--262},
  year={2016},
  publisher={Springer}
}

@article{nitzan2014intra,
  title={Selective incentives and intragroup heterogeneity
in collective contests},
  author={Nitzan, Shmuel and Ueda, Kaoru},
  journal={Journal of Public Economic Theory},
  volume={20},
  pages={477--498},
  year={2018},
  publisher={Wiley}
}

@article{nitzan2018,
  title={Intra-group heterogeneity in collective contests},
  author={Nitzan, Shmuel and Ueda, Kaoru},
  journal={Social Choice and Welfare},
  volume={43},
  pages={219--238},
  year={2014},
  publisher={Springer}
}

@article{esteban2001collective,
  title={Collective Action and the Group Size Paradox},
  author={Esteban, Joan and Ray, Debraj},
  journal={American Political Science Review},
  volume={95},
  number={3},
  year={2001},
  publisher={Citeseer}
}

@article{nitzan1991collective,
  title={Collective rent dissipation},
  author={Nitzan, Shmuel},
  journal={The Economic Journal},
  volume={101},
  number={409},
  pages={1522--1534},
  year={1991},
  publisher={Oxford University Press Oxford, UK}
}

@article{eliaz2018simple,
  title={A simple model of competition between teams},
  author={Eliaz, Kfir and Wu, Qinggong},
  journal={Journal of Economic Theory},
  volume={176},
  pages={372--392},
  year={2018},
  publisher={Elsevier}
}

@article{barbieri2021private,
  title={Private-information group contests with complementarities},
  author={Barbieri, Stefano and Topolyan, Iryna},
  journal={Journal of Public Economic Theory},
  volume={23},
  number={5},
  pages={772--800},
  year={2021},
  publisher={Wiley Online Library}
}

@article{barbieri2014group,
  title={Group efforts when performance is determined by the “best shot”},
  author={Barbieri, Stefano and Malueg, David A.},
  journal={Economic Theory},
  volume={56},
  pages={333--373},
  year={2014},
  publisher={Springer}
}

@article{barbieri2016private,
  title={Private-information group contests: Best-shot competition},
  author={Barbieri, Stefano and Malueg, David A.},
  journal={Games and Economic Behavior},
  volume={98},
  pages={219--234},
  year={2016},
  publisher={Elsevier}
}

@article{barbieri2019group,
  title={Group contests with private information and the “Weakest Link”},
  author={Barbieri, Stefano and Kovenock, Dan and Malueg, David A. and Topolyan, Iryna},
  journal={Games and Economic Behavior},
  volume={118},
  pages={382--411},
  year={2019},
  publisher={Elsevier}
}

@article{brookins2016equilibrium,
  title={Equilibrium existence in group contests},
  author={Brookins, Philip and Ryvkin, Dmitry},
  journal={Economic Theory Bulletin},
  volume={4},
  number={2},
  pages={265--276},
  year={2016},
  publisher={Springer}
}

@article{carbonell2018existence,
  title={On the existence of {N}ash equilibrium in {B}ayesian games},
  author={Carbonell-Nicolau, Oriol and McLean, Richard P.},
  journal={Mathematics of Operations Research},
  volume={43},
  number={1},
  pages={100--129},
  year={2018},
  publisher={INFORMS}
}

@article{nitzan2011prize,
  title={Prize sharing in collective contests},
  author={Nitzan, Shmuel and Ueda, Kaoru},
  journal={European Economic Review},
  volume={55},
  number={5},
  pages={678--687},
  year={2011},
  publisher={Elsevier}
}

@article{halac2017contests,
  title={Contests for experimentation},
  author={Halac, Marina and Kartik, Navin and Liu, Qingmin},
  journal={Journal of Political Economy},
  volume={125},
  number={5},
  pages={1523--1569},
  year={2017},
  publisher={University of Chicago Press Chicago, IL}
}

@article{moscarini2010competitive,
  title={Competitive experimentation with private information: The survivor's curse},
  author={Moscarini, Giuseppe and Squintani, Francesco},
  journal={Journal of Economic Theory},
  volume={145},
  number={2},
  pages={639--660},
  year={2010},
  publisher={Elsevier}
}

@article{taylor1995digging,
  title={Digging for golden carrots: An analysis of research tournaments},
  author={Taylor, Curtis R.},
  journal={The American Economic Review},
  pages={872--890},
  year={1995},
  publisher={JSTOR}
}

@article{terwiesch2008innovation,
  title={Innovation contests, open innovation, and multiagent problem solving},
  author={Terwiesch, Christian and Xu, Yi},
  journal={Management Science},
  volume={54},
  number={9},
  pages={1529--1543},
  year={2008},
  publisher={INFORMS}
}

@article{mcafee1991optimal,
  title={Optimal contracts for teams},
  author={McAfee, R. Preston and McMillan, John},
  journal={International Economic Review},
volume={32},
  number={3},  
pages={561--577},
  year={1991},
  publisher={JSTOR}
}

@book{border1985fixed,
  title={Fixed Point Theorems with Applications to Economics and Game Theory},
  author={Border, Kim C.},
  year={1985},
  publisher={Cambridge university press}
}

@article{tobias2022equilibrium,
  title={Equilibrium non-existence in generalized games},
  author={T{\'o}bi{\'a}s, {\'A}ron},
  journal={Games and Economic Behavior},
  volume={135},
  pages={327--337},
  year={2022},
  publisher={Elsevier}
}

@incollection{balder2021new,
  title={New fundamentals of {Y}oung measure convergence},
  author={Balder, Erik J.},
  booktitle={Calculus of Variations and Optimal Control},
  pages={24--48},
  year={2021},
  publisher={Chapman and Hall/CRC}
}

@article{komlos1967generalization,
  title={A generalization of a problem of Steinhaus},
  author={Koml{\'o}s, Janos},
  journal={Acta Mathematica Academiae Scientiarum Hungaricae},
  volume={18},
  number={1-2},
  pages={217--229},
  year={1967},
  publisher={Rutgers University}
}

@article{balder1984general,
  title={A general approach to lower semicontinuity and lower closure in optimal control theory},
  author={Balder, Erik J.},
  journal={SIAM Journal on Control and Optimization},
  volume={22},
  number={4},
  pages={570--598},
  year={1984},
  publisher={SIAM}
}

@article{balder1985extension,
  title={An extension of {P}rohorov's theorem for transition probabilities with applications to infinite-dimensional lower closure problems},
  author={Balder, Erik J.},
  journal={Rendiconti del Circolo Matematico di Palermo},
  volume={34},
  pages={427--447},
  year={1985},
  publisher={Springer}
}

@techreport{balder1998lectures,
  author    = {Balder, Erik J.},
  title     = {Lectures on {Y}oung measure theory and its applications in economics},
  institution = {Rijksuniversiteit Utrecht. Mathematisch Instituut},
  number    = {1052},
  series    = {preprint},
  year      = {1998},
  publisher = {Utrecht University}
}

@incollection{debreu1982existence,
  author    = {Debreu, Gerard},
  title     = {Existence of competitive equilibrium},
  booktitle = {Handbook of Mathematical Economics},
  editor    = {Arrow, K. and Intrilligator, M.},
  volume    = {2},
  year      = {1982},
  publisher = {North-Holland},
  address   = {New York}
}

@misc{banks2004existence,
  author    = {Banks, J. and Duggan, J.},
  title     = {Existence of {N}ash equilibria on convex sets},
  year      = {2004},
  note      = {Manuscript, California Institute of Technology and University of Rocgester},
  institution = {W. Allen Wallis Institute of Political Economy, University of Rochester}
}

@article{prokopovych2023monotone,
  title={On monotone pure-strategy {B}ayesian-{N}ash equilibria of a generalized contest},
  author={Prokopovych, Pavlo and Yannelis, Nicholas C.},
  journal={Games and Economic Behavior},
  volume={140},
  pages={348--362},
  year={2023},
  publisher={Elsevier}
}

@article{reny1999existence,
  title={On the existence of pure and mixed strategy {N}ash equilibria in discontinuous games},
  author={Reny, Philip J.},
  journal={Econometrica},
  volume={67},
  number={5},
  pages={1029--1056},
  year={1999},
  publisher={Wiley Online Library}
}

@article{reny2020nash,
  title={Nash equilibrium in discontinuous games},
  author={Reny, Philip J.},
  journal={Annual Review of Economics},
  volume={12},
  number={1},
  pages={439--470},
  year={2020},
  publisher={Annual Reviews}
}

@article{milgrom1985distributional,
  title={Distributional strategies for games with incomplete information},
  author={Milgrom, Paul R. and Weber, Robert J.},
  journal={Mathematics of Operations Research},
  volume={10},
  number={4},
  pages={619--632},
  year={1985},
  publisher={INFORMS}
}

@article{mclennan2011games,
  title={Games with discontinuous payoffs: a strengthening of {R}eny's existence theorem},
  author={McLennan, Andrew and Monteiro, Paulo K. and Tourky, Rabee},
  journal={Econometrica},
  volume={79},
  number={5},
  pages={1643--1664},
  year={2011},
  publisher={Wiley Online Library}
}

@article{he2016existence,
  title={Existence of equilibria in discontinuous {B}ayesian games},
  author={He, Wei and Yannelis, Nicholas C.},
  journal={Journal of Economic Theory},
  volume={162},
  pages={181--194},
  year={2016},
  publisher={Elsevier}
}

@article{carmona2009existence,
  title={An existence result for discontinuous games},
  author={Carmona, Guilherme},
  journal={Journal of Economic Theory},
  volume={144},
  number={3},
  pages={1333--1340},
  year={2009},
  publisher={Elsevier}
}

@article{barelli2013note,
  title={A note on the equilibrium existence problem in discontinuous games},
  author={Barelli, Paulo and Meneghel, Idione},
  journal={Econometrica},
  volume={81},
  number={2},
  pages={813--824},
  year={2013},
  publisher={Wiley Online Library}
}

@article{bich2017existence,
  title={On the existence of approximate equilibria and sharing rule solutions in discontinuous games},
  author={Bich, Philippe and Laraki, Rida},
  journal={Theoretical Economics},
  volume={12},
  number={1},
  pages={79--108},
  year={2017},
  publisher={Wiley Online Library}
}

@article{olszewski2023equilibrium,
  title={Equilibrium existence in games with ties},
  author={Olszewski, Wojciech and Siegel, Ron},
  journal={Theoretical Economics},
  volume={18},
  number={2},
  pages={481--502},
  year={2023},
  publisher={Wiley Online Library}
}

@book{aliprantis2006infinite,
  title        = {Infinite Dimensional Analysis: A Hitchhiker's Guide},
  author       = {Aliprantis, Charalambos D. and Border, Kim C.},
  year         = {2006},
  edition      = {3rd},
  publisher    = {Springer},
  address      = {Berlin, Heidelberg}
}

@book{beer1993topologies,
  title        = {Topologies on Closed and Closed Convex Sets},
  author       = {Beer, Gerald},
  year         = {1993},
  publisher    = {Springer},
  address      = {Dordrecht},
  series       = {Mathematics and its Applications},
  number       = {268}
}

@article{Beer10,
  title={Topologies Associated with Kuratowski-Painlevé
Convergence of Closed Sets},
  author={Beer, Gerald and Rodríguez-López, Jesús},
  journal={Journal of Convex Analysis},
  volume={17},
  pages={805--826},
  year={2010},
  publisher={Heldermann Verlag}
}

\newpage
\appendix

\end{document}